\pgfplotsset{compat=newest}											%
\DeclareMathAlphabet{\mathpzc}{OT1}{pzc}{m}{it}						%
\newcounter{generalCounter}
\theoremstyle	{plain}
\newtheorem		{definition}	[generalCounter]	{Definition}
\newtheorem		{example}		[generalCounter]	{Example}
\newtheorem		{theorem}		[generalCounter]	{Theorem}
\newtheorem		{assumption}	[generalCounter]	{Assumption}
\newtheorem		{remark}		[generalCounter]	{Remark}
\newtheorem		{lemma}			[generalCounter]	{Lemma}
\def\TablesColumnsColor{black!4}
\newcolumntype
{g}
{
	>{\centering \columncolor{\TablesColumnsColor} \arraybackslash}
	p{0.15\textwidth}
	<{}
}
\newcolumntype
{w}
{
	>{\centering \arraybackslash}
	p{0.15\textwidth}
	<{}
}
\newcommand{\DiagonalMatrixOf}		[1]	{\mathrm{diag} \left( #1 \right)}
\newcommand{\TraceOf}				[1]	{\mathrm{tr} \left( #1 \right)}
\newcommand{\DefinedAs}			[0]	{\mathrel{\mathop:}=}
\newcommand{\IDefinedAs}		[0]	{=\mathrel{\mathop:}}
\newcommand{\BigOOf}			[1]	{O \left( #1 \right)}
\newcommand{\GaussianDistribution}					[2]	{\mathcal{N} \left( #1, #2 \right)}
\newcommand{\SpanOf}								[1]	{\textrm{span} \left\langle #1 \right\rangle}
\newcommand{\Reals}									[0]	{\mathbb{R}}
\newcommand{\Probability}			[0]	{\mathbb{P}}
\newcommand{\ProbabilityOf}			[1]	{\Probability \left[ #1 \right]}
\newcommand{\Expectation}					[0]	{\mathbb{E}}
\newcommand{\ExpectationOf}					[1]	{\Expectation \left[ #1 \right]}
\newcommand{\ExpectationOfGiven}			[2]	{\ExpectationOf{ #1 \; \left| \; #2 \right. }}
\newcommand{\Variance}				[0]	{\mathrm{var}}
\newcommand{\VarianceOf}			[1]	{\Variance \left( #1 \right)}
\newcommand{\VarianceOfGiven}		[2]	{\VarianceOf{ #1 \; \left| \; #2 \right. }}
\newcommand	{\Assumption}			[0]	{Assumption}
\newcommand	{\Definition}			[0]	{Definition}
\newcommand	{\Figure}				[0]	{Figure}
\newcommand	{\Lemma}				[0]	{Lemma}
\newcommand	{\Section}				[0]	{Section}
\newcommand	{\Table}				[0]	{Table}
\newcommand	{\Theorem}				[0]	{Theorem}
\acrodef{BLUE}		[BLUE]			{Best Linear Unbiased Estimator}
\acrodef{CE}		[CE]			{Classification Error}
\acrodef{GP}		[GP]			{Gaussian Process}
\acrodef{KL}		[KL]			{Karhunen-Lo\`{e}ve}
\acrodef{LTI}		[LTI]			{Linear Time Invariant}
\acrodef{MAP}		[MAP]			{Maximum A Posteriori}
\acrodef{MVUE}		[MVUE]			{Minimum Variance Unbiased Estimator}
\acrodef{MMSE}		[MMSE]			{Minimum Mean Square Error}
\acrodef{MSE}		[MSE]			{Mean Square Error}
\acrodef{ML}		[ML]			{Maximum Likelihood}
\acrodef{LMMSE}		[LMMSE]			{Linear Minimum Mean Square Error}
\acrodef{LS}		[LS]			{Least Squares}
\acrodef{WSN}		[WSN]			{Wireless Sensor Network}
\acrodef{NCS}		[NCS]			{Networked Control System}
\acrodef{RKHS}		[RKHS]			{Reproducing Kernel Hilbert Space}
\acrodef{RHS}		[RHS]			{Right Hand Side}
\acrodef{LHS}		[LHS]			{Left Hand Side}
\acrodef{RN}		[RN]			{Regularization Network}
\acrodef{SURE}		[SURE]			{Stein's unbiased risk estimate}
\acrodef{SVD}		[SVD]			{Singular Values Decomposition}
\acrodef{RSS}		[RSS]			{Residual sum of squares}
\def\MAP{\textrm{MAP}}
\begin{document}
\title{Distributed multi-agent Gaussian regression via finite-dimensional approximations}
\author
{
	Gianluigi Pillonetto, Luca Schenato, Damiano Varagnolo
	\thanks
	{
		G.\ Pillonetto and L.\ Schenato are with the Department of Information Engineering, University of Padova, Padova, Italy. D.\ Varagnolo is with the Department of Computer Science, Electrical and Space Engineering, Lule{\aa} University of Technology, Lule{\aa}, Sweden. Emails: {\tt giapi@dei.unipd.it | schenato@dei.unipd.it | damvar@ltu.se}. 
	}
	\thanks
	{
		The research leading to these results has received funding from the Swedish research council Norrbottens Forskningsr{\aa}d.
	}
}
\date{}
\maketitle
\IEEEoverridecommandlockouts
\IEEEpeerreviewmaketitle
\begin{abstract}
	We consider the problem of distributedly estimating Gaussian processes in multi-agent frameworks. Each agent collects few measurements and aims to collaboratively reconstruct a common estimate based on all data. Agents are assumed with limited computational and communication capabilities and to gather $M$ noisy measurements in total on input locations independently drawn from a known common probability density. The optimal solution would require agents to exchange all the $M$ input locations and measurements and then invert an $M \times M$ matrix, a non-scalable task. Differently, we propose two suboptimal approaches using the first $E$ orthonormal eigenfunctions obtained from the \ac{KL} expansion of the chosen kernel, where typically $E\ll M$. The benefits are that the computation and communication complexities scale with $E$ and not with $M$, and computing the required statistics can be performed via standard average consensus algorithms. We obtain probabilistic non-asymptotic bounds that determine a priori the desired level of estimation accuracy, and new distributed strategies relying on \ac{SURE} paradigms for tuning the regularization parameters and applicable to generic basis functions (thus not necessarily kernel eigenfunctions) and that can again be implemented via average consensus. The proposed estimators and bounds are finally tested on both synthetic and real field data.
\end{abstract}
\begin{IEEEkeywords}
	Gaussian processes, sensor networks, distributed estimation, kernel-based regularization, nonparametric estimation, average consensus
\end{IEEEkeywords}

\section{Introduction}

Many modern engineering problems involve networks containing a large number of agents which have to cooperate to obtain a common goal. Several of these tasks can be seen as problems of function estimation from sparse and noisy data, a central issue in the machine learning field~\cite{Poggio90,Cucker01}. Examples include the determination of the wind speed and direction field in a wind farm from local measurements of the turbines~\cite{lei2009review}, the reconstruction of the temperature field in a datacenter from local measurements at each server~\cite{parolini2012cyber}, and weather forecasts \cite{Gelfand:05,Datta:16}. Traditional centralized machine-learning estimation approaches are computationally non-scalable when the network is large. Moreover parallelization of computation using client-server architectures, which can alleviate this problem, might not be feasible. This happens, e.g., in applications where communication is peer-to-peer, as in wireless sensor networks or multi-agent robotics, and where each agent is expected to have a common copy of the global estimate. In these cases, fully distributed cooperation approaches are ought~\cite{YunfeiXu2012}.
\vspace{-2mm}
\subsection{State-of-the-art}

This paper considers a distributed nonparametric Gaussian regression approach. In this context, the unknown map is modeled as a zero-mean Gaussian process whose covariance (also called kernel in the machine learning literature) has to embed expected properties like smoothness~\cite{Scholkopf01b,rasmussen_williams__2006__gaussian_processes_for_machine_learning}. Other approaches to function estimation could be also adopted, e.g., sparse regression based on the $\ell_1$ norm, automatic relevance determination or the elastic net \cite{Tibshirani96,MeinshausenYu09,ZouHuiHastie:2005,Wipf_ARD_NIPS_2007,ABCP14}. However, in our framework the implementation of these approaches is not trivial and would require sophisticated distributed optimization algorithms like ADMM \cite{Boyd2011}. In fact, we consider a scenario where $N$ agents first collect a total of $M$ direct and noisy measurements of the unknown map on input locations drawn from a common and known probability density. The aim is then to obtain a shared function estimate. To simplify the exposition, we assume w.l.o.g.\ $N=M$, i.e., each agent collects a single measurement. We also assume that computational and data storage capabilities are limited, and that the communication network is peer-to-peer, i.e., agents are able only to communicate with a restricted number of neighbors. As described below, this makes the problem difficult also under Gaussian process assumptions, but we will see that function estimation 
can be performed using simple average operations. 

Assuming that $f$ and the measurements noise are jointly Gaussian, achieving the minimum variance estimate requires knowing all the $M$ measurements and related input locations, plus invert an $M \times M$ matrix with $O(M^3)$ operations, a difficult task in a distributed fashion. When the data set size $M$ is large, the complexity is high also in centralized contexts. Therefore, many alternative approaches have been developed relying, e.g., on the notion of pseudo input locations~\cite{Qui2005,Snelson06sG,Laz2010}, the use of matrix factorizations~\cite{Ambi2016} and approximations of the kernel function~\cite{BachLowRank05,KulisLowRank06} through the Nystr\"{o}m method or greedy techniques~\cite{Williams2001,ZhangNistrom10,SmolaGeedy2000}. Along this way, \ac{KL} expansions~\cite{levy2008karhunen} have been also used to decompose the kernel in terms of eigenfunctions that are orthogonal w.r.t.\ the input locations probability density. One can then approximate the Gaussian process via the $E$ kernel eigenfunctions associated to the largest eigenvalues, an approximation that corresponds to perform the best process approximation before seeing the data~\cite{levy2008karhunen} (see \Section~\ref{sec:reformulating_the_measurements_model_using_kl_expansions} for more details). A posteriori, i.e., after seeing the measurements and their input locations, the situation is instead more subtle since there exist $E$-dimensional subspaces that allow to come closer to the minimum variance estimator~\cite{ferrari_trecate_et_al__1999__finite-dimensional_approximation_of_gaussian_processes}. However, the a priori basis given by the \ac{KL} expansion has important advantages. In fact, as proved in~\cite{zhu_et_al__1998__gaussian_regression_and_optimal_finite_dimensional_linear_models}, the first $E$ kernel eigenfunctions are asymptotically optimal, i.e., they provide the best $E$-dimensional approximation of the minimum variance estimator as the data set size $M$ grows to infinity. In addition, differently from the a posteriori basis described in~\cite{ferrari_trecate_et_al__1999__finite-dimensional_approximation_of_gaussian_processes}, the a priori basis can be computed off-line.
Moreover, as detailed in \Section~\ref{sec:finite-dimensional_approximations_of_the_bayesian_estimator}, computing the final estimates requires computing sufficient statistics that have the structure of averages of $M$ local matrices and local vectors of dimension respectively $E \times E$ and $E$. This implies that the basic building block of the estimators involves computing averages over networks which can be more efficient from a memory, computation and communication perspective when $E \ll M$. Such averages can be computed via the so called \emph{average consensus algorithms}~\cite{Xiao:04,garin2010survey} which require only mild assumption on network connectivity and communication. In particular, these algorithms require no global topological information, only minimal local coordination and can be implemented also in the context of asynchronous updates and lossy communication~\cite{Bof:17}. 

\vspace{-2mm}
\subsection{Contribution}

Our stream of research pairs the ones of other authors focusing on distributed kernel regression. An example is~\cite{predd2009collaborative}, that proposes a distributed regularized kernel \ac{LS} regression algorithm that exploits successive orthogonal projections, or~\cite{perez2010robust} that extends \cite{predd2009collaborative} by designing strategies to reduce the communication and synchronization needs. Estimators with reduced order model complexity have been proposed in~\cite{honeine2008distributed}, while nonparametric schemes using Nearest-Neighbors interpolation strategies have been studied also in \cite{martinez__2010__distributed_interpolation_schemes_for_field_estimation_by_mobile_sensor_networks}. Another Gaussian estimation approach is considered in \cite{xu2013efficient}, with focus on the problem of sequentially predicting the most informative future input locations to minimize simultaneously the prediction error and the uncertainty in the regularization parameters. Other distributed regression algorithms are proposed in~\cite{cortes__2009__distributed_kriged_kalman_filter_for_spatial_estimation} with the aim of estimating a dynamic Gaussian process and its gradient, while in \cite{choi_et_al__2009__distributed_learning_and_cooperative_control_for_multi-agent_systems} authors develop a distributed learning and cooperative control algorithm where agents estimate a static field modeled as a network of radial basis functions whose centers locations are known in advance.

Despite the many research efforts, none of the aforementioned works on distributed regression have addressed the following fundamental issue: \emph{assigned a Gaussian prior (the kernel) and the input locations distribution, how much information does the network need to exchange to obtain, with a probability $1-\alpha$, the desired level of estimation accuracy?} In this paper we will answer this question adopting \ac{KL}-based strategies which exploit $E$ kernel eigenfunctions. In particular, we will study two different estimators denoted by $\widehat{f}_{A}$ and $\widehat{f}_{B}$ which have computational and communication complexities of order $O(E^2)$ and $O(E)$, respectively, originally proposed in~\cite{varagnolo2012distributed}. Differently from~\cite{varagnolo2012distributed} which focused on finding Monte Carlo based strategies for assessing the a posteriori statistical performance of the estimators, in this work the focus is on characterizing their a priori prediction capability on future data by first assigning the kernel and the input locations statistics, and then deriving non-asymptotic error bounds that are functions of $E$, $M$ and $\alpha$. This analysis can be also seen as the extension to the Bayesian context of the concept of \emph{effective dimension} developed in deterministic frameworks, e.g., in~\cite{Zhang2005}. There it has been shown that, in the worst case, subspaces of dimension $\sqrt M$, i.e., sub-polynomial in the data set size, capture the estimate. Parallel to this, our bound returns information on the \emph{Bayesian effective dimension} revealing which subspace can be really influenced by the measurements.

Another major contribution provided in this work is to show that both $\widehat{f}_{A}$ and $\widehat{f}_{B}$ are asymptotically optimal, i.e., for fixed $E$, as $M$ grows to infinity there is no other estimator which can perform better in the mean squared error sense. We will also see that, while $\widehat{f}_{A}$ is always consistent, i.e., convergent in probability to the true function as $E,M \rightarrow \infty$, consistency of $\widehat{f}_{B}$ requires $E$ to grow slower than $M$. In some sense, such result clarifies the price to pay when adopting a estimator parsimonious in the information exchange.

Finally, in many applications the kernel scale factor is unknown and its tuning is critical since it strongly affects the performance of the Bayesian estimator. In addition, the kernel expansion could be hard to be obtained and one would rather use a different set of basis functions. In the terminal part of the paper, we address these problems by proposing a novel distributed tuning strategy based on the \ac{SURE} criterion \cite{stein__1981__estimation_of_the_mean_of_a_multivariate_normal_distribution}. Standard approaches proposed in the literature in the context of a centralized framework (like cross-validation and maximum likelihood \cite{Rice1986,Golub79,Maritz:1989,Hastie09}, MAP estimation \cite{Xu:11}, expected improvement \cite{Snoek:12} and Markov chain Monte Carlo \cite{Gilks,Magni1998}) require high computation and communication overhead, and are therefore not suited for distributed implementations. Instead, our strategy allows for simultaneous hyperparameter tuning and function estimation via a single average consensus algorithm over a vector of size $O(E^2)$ when $\widehat{f}_{A}$ is employed, and via only two averages of size $O(E)$ when $\widehat{f}_{B}$ is employed. Very importantly, the \ac{SURE} criterion be used also for generic basis functions, such as kernel sections or Nystr{\"o}m bases, thus not necessarily restricted to be kernel eigenfunctions and defined.

\vspace{-2mm}
\subsection{Paper outline}
 
The paper is organized as follows. \Section~\ref{sec:the_bayesian_estimation_problem} formulates the Bayesian estimation problem while \Section~\ref{sec:KLEstimators} describes the \ac{KL} expansion of the Gaussian process and the distributed estimators. \Section~\ref{sec:statistical_characterizations_of_widehat_f_a_and_widehat_f_b} provides the statistical characterization of our distributed estimators, also deriving error bounds which are then tested via some numerical experiments. 
\Section~\ref{sec:distributed_tuning_of_the_hyperparameters_of_widehat_f_a} proposes distributed strategies to tune the possibly unknown regularization parameter entering our estimators for generic basis functions and discusses practical implementation issues. These strategies are also tested on both synthetic and real data. \Section~\ref{Conclusions} collects conclusions and future research directions while proofs are collected in the Appendix.

\section{Bayesian estimation}
\label{sec:the_bayesian_estimation_problem}

\subsection{The measurements model}
\label{ssec:measurements_model}

We consider the measurements model 
\begin{equation}
	y_{m} = f \left( x_{m} \right) + \nu_{m},
	\qquad
	m = 1, \ldots, M
	\label{equ:measurement_model}
\end{equation}
with the input locations $x_{m}$ following the stochastic generation scheme
\begin{equation}
	x_{m} \sim \mu(\mathcal{X}) \; \text{ i.i.d.},
	\qquad
	m = 1, \ldots, M,
	\label{equ:input_locations_model}
\end{equation}
with $\mu$ a non-degenerate probability measure on the compact $\mathcal{X}$. The unknown function $f: \mathcal{X} \rightarrow \Reals$ is a zero-mean Gaussian process with continuous covariance $K: \mathcal{X} \times \mathcal{X} \rightarrow \Reals$, i.e.,
\begin{equation}
	f \sim \GaussianDistribution{0}{K}.
	\label{equ:distribution_of_the_estimand}
\end{equation}
The measurement noise is also Gaussian of known variance $\sigma^{2}_{\nu}$:
\begin{equation}
	\nu_{m} \sim \GaussianDistribution{0}{\sigma^{2}_{\nu}}.
\end{equation}
Finally, $\{ \nu_{m} \}_{m=1}^M$, $\{ x_{m} \}_{m=1}^M$ and $f$ are all assumed mutually independent.

\subsection{The Bayesian estimator}
\label{ssec:the_bayesian_estimator}

The Gaussian assumptions of \Section~\ref{ssec:measurements_model} imply that the posterior of $f$ given the dataset $\left\{ x_{m}, y_{m} \right\}_{m = 1}^{M}$ is still Gaussian. Also, the \ac{MAP} estimator coincides with the minimum variance estimator and is given by
\begin{equation}
	\widehat{f}_{\MAP}(x)
	=
	\begin{bmatrix}
		K(x, x_{1}) \;
		\ldots \;
		K(x, x_{M})
	\end{bmatrix}
	H_{\MAP}
	\begin{bmatrix}
		y_{1} \\
		\vdots \\
		y_{M} \\
	\end{bmatrix}
	\label{equ:MAP_estimator}
\end{equation}
with
\begin{equation}
	H_{\MAP}
	\DefinedAs
	\left(
		\begin{bmatrix}
			K(x_{1}, x_{1}) & \cdots & K(x_{1}, x_{M}) \\
			\vdots			&		 & \vdots \\
			K(x_{M}, x_{1}) & \cdots & K(x_{M}, x_{M}) \\
		\end{bmatrix}
		+
		\sigma^{2}_{\nu}
		I
	\right)^{-1} .
	\label{equ:definition_of_H_MAP}
\end{equation}

The storage and computational requirements needed to compute $\widehat{f}_{\MAP}$ are thus $\BigOOf{M^{2}}$ and $\BigOOf{M^{3}}$, respectively. The communication complexity is either $\BigOOf{\dim(\mathcal{X}) M}$ if agents share the input locations $x_{m}$ or $\BigOOf{M^{2}}$ if they share the covariances $K(x_{m}, x_{m'})$. Thus, storage, computational and communication complexities do not scale favorably with the dataset size $M$. Our aim is thus to find good approximators of $\widehat{f}_{\MAP}$ that are suitable for distributed implementations.

\section{Finite-dimensional approximations of the Bayesian estimator}
\label{sec:KLEstimators}

\subsection{\ac{KL} expansion: kernel}
\label{sec:reformulating_the_measurements_model_using_kl_expansions}

The kernel \eqref{equ:distribution_of_the_estimand} can be expanded in terms of eigenfunctions $\phi_{e}$ orthonormal w.r.t.\ the measure $\mu$ in~\eqref{equ:input_locations_model} and related eigenfunctions $\lambda_{e}$~\cite{levy2008karhunen}. They are defined by
\begin{equation}
	\lambda_{e} \phi_{e} (x) = \int_{\mathcal{X}} K(x,x') \phi_{e} (x') d\mu(x'), 
	\label{equ:definition_of_eigenfunction}
\end{equation}
\begin{equation}
	K(x,x') = \sum_{e = 1}^{+\infty} \lambda_{e} \phi_{e}(x) \phi_{e}(x')
	\qquad
	\lambda_1 \geq \lambda_2 \ldots > 0,
	\label{equ:eigenexpansion_of_K}
\end{equation}
and, using $\delta_{ij}$ for the Kronecker delta,
\begin{equation}
	\int_{\mathcal{X}} \phi_{i}(x) \phi_{j}(x) d \mu(x) = \delta_{ij}. 
	\label{equ:orthogonality_of_the_eigenfunctions}
\end{equation}

Let $E$ be a positive integer. Then~\eqref{equ:definition_of_eigenfunction}, \eqref{equ:eigenexpansion_of_K} and~\eqref{equ:orthogonality_of_the_eigenfunctions} allow us to reformulate the process $f$ via the following \ac{KL} expansion
\begin{equation}
	f(x)
	=
	\underbrace{\sum_{e = 1}^{E} a_{e} \phi_{e}(x)}_{\IDefinedAs \displaystyle f_{a}(x)}
	\; + \;
	\underbrace{\sum_{e = 1}^{+\infty} b_{e} \phi_{E+e}(x)}_{\IDefinedAs \displaystyle f_{b}(x)} .
	\label{equ:estimand_model}
\end{equation}
The expansion coefficients have been thus divided into two sets: a finite one composed by the $E$ random variables $a_e$, and an infinite one given by the remaining variables $b_e$. The elements in these two sets are all mutually independent, and satisfy
\begin{subequations}{\label{equ:coefficients_are_gaussianly_distributed}}
	\noeqref{equ:coefficients_are_gaussianly_distributed,equ:coefficients_are_gaussianly_distributed:a,equ:coefficients_are_gaussianly_distributed:b}
	\begin{gather}
		a_{e} \sim \GaussianDistribution{0}{\lambda_{e}}, \; e = 1, \ldots, E 
		\label{equ:coefficients_are_gaussianly_distributed:a} \\
		b_{e} \sim \GaussianDistribution{0}{\lambda_{E+e}}, \; e = 1, 2, \ldots
		\label{equ:coefficients_are_gaussianly_distributed:b} \\
	\end{gather}
\end{subequations}
It is well known that
\begin{equation}
	\mathcal{S}
	\DefinedAs
	\SpanOf{ \phi_{1}(\cdot), \ldots, \phi_{E}(\cdot) }
	\label{equ:definition_of_mathcal_S}
\end{equation}
is that $E$-dimensional subspace that captures the biggest part of the statistical energy of $f$ as measured by $\ExpectationOf{\int f^2 d\mu}$. In other words, $f_{a}$ is the best $E$-dimensional approximation of $f$ in the mean square sense~\cite{zhu_et_al__1998__gaussian_regression_and_optimal_finite_dimensional_linear_models}.

In what follows, it is always assumed that all the kernel eigenfunctions are contained in a ball of finite radius in the space of continuous functions, i.e.,
\begin{assumption}
	There exists a $k < +\infty$ s.t.
	\begin{equation}
		\sup_{x \in \mathcal{X}}
		\left| \phi_{e}(x) \right|
		\leq
		\sqrt{k}
		<
		+\infty 
		\qquad
		e = 1, 2, \ldots .
		\label{equ:eigenfunctions_are_uniformly_bounded}
	\end{equation}
	\label{ass:eigenfunctions_are_uniformly_bounded}
\end{assumption}

\Assumption~\ref{ass:eigenfunctions_are_uniformly_bounded} is satisfied by all the finite-dimensional kernels and also by classical covariances like the spline kernels, e.g., see \cite{Bell2004} for the case of uniform $\mu$. In practice, if the \ac{KL} expansion is not available in closed form, it can be obtained numerically with arbitrary accuracy, as for example described in~\cite{de1999consistent}, also permitting to compute the constant $k$.
 
\subsection{KL expansion: measurement model}
\label{sec:finite-dimensional_approximations_of_the_bayesian_estimator}

Our next step is to search for finite-dimensional estimators of $f$ suitable for distributed implementations. Below, we introduce two different estimators, denoted by $\widehat{f}_{A}$ and $\widehat{f}_{B}$, which assume values in the finite-dimensional subspace $\mathcal{S}$ defined in~\eqref{equ:definition_of_mathcal_S}. First, it is useful to rewrite model~\eqref{equ:measurement_model} in a more compact form.

Let
\begin{equation}
	\bm{x} \DefinedAs \left[ x_{1}, \ldots, x_{M} \right]^{T}
	\label{equ:definition_of_bm_x}
\end{equation}
\begin{equation}
	\bm{y} \DefinedAs \left[ y_{1}, \ldots, y_{M} \right]^{T}
	\qquad
	\bm{\nu} \DefinedAs \left[ \nu_{1}, \ldots, \nu_{M} \right]^{T}
	\label{equ:definition_of_bm_y_and_bm_nu}
\end{equation}
\begin{equation}
	\bm{a} \DefinedAs \left[ a_{1}, \ldots, a_{E} \right]^{T}
	\qquad
	\bm{b} \DefinedAs \left[ b_{1}, b_{2}, \ldots \right]^{T}
	\label{equ:definition_of_bm_a_and_bm_b}
\end{equation}
\begin{equation}
	G
	\DefinedAs
	\begin{bmatrix}
		G_{11} & \ldots & G_{1E} \\
		\vdots & & \vdots \\
		G_{M1} & \ldots & G_{ME} \\
	\end{bmatrix}
	\qquad
	Z
	\DefinedAs
	\begin{bmatrix}
		Z_{11} & Z_{12} & \ldots \\
		\vdots & & \vdots \\
		Z_{M1} & Z_{M2} & \ldots \\
	\end{bmatrix}
	\label{equ:definition_of_G_and_Z}
\end{equation}
\begin{equation}
	G_{me}
	\DefinedAs
	\phi_{e}(x_{m}),
	\quad
	m = 1, \ldots, M, \;
	e = 1, \ldots, E, \;
	\label{equ:definition_of_G}
\end{equation}
\begin{equation}
	Z_{me}
	\DefinedAs
	\phi_{E+e}(x_{m}),
	\quad
	m = 1, \ldots, M, \;
	e = 1, 2, \ldots 
	\label{equ:definition_of_Z}
\end{equation}
Considering decomposition~\eqref{equ:estimand_model}, definitions~\eqref{equ:definition_of_bm_y_and_bm_nu}-\eqref{equ:definition_of_Z} and using classical algebraic notation to handle infinite-dimensional objects, the measurements model~\eqref{equ:measurement_model} becomes

\begin{equation}
	\bm{y} = G \bm{a} + Z \bm{b} + \bm{\nu}.
	\label{equ:measurement_model_vector_form}
\end{equation}
With this novel notation $G \bm{a}$ accounts for the contribution from $f_{a}$ while $Z \bm{b}$ accounts for the contribution from $f_{b}$.

\subsection{The $E$-dimensional estimator $\widehat{f}_{A}$}
\label{ssec:the_finite-dimensional_estimator_widehat_f_a}

Let
\begin{equation}
	\widehat{f}_{A}(x)
	\DefinedAs
	\begin{bmatrix}
		\phi_{1}(x) \; \cdots \; \phi_{E}(x)
	\end{bmatrix}
	H_{A}
	\bm{y}
	\label{equ:estimator_A}
\end{equation}
where
\begin{equation}
	H_{A}
	\DefinedAs
	\left( \frac{G^{T} G}{M} + \frac{\sigma^{2}_{\nu}}{M} \Lambda^{-1}_{E} \right)^{-1} \frac{G^{T}}{M} 
	\label{equ:definition_of_H_A}
\end{equation}
and $\Lambda_{E} \DefinedAs \DiagonalMatrixOf{\lambda_{1}, \ldots, \lambda_{E}}$. The estimator $\widehat{f}_{A}$ is suitable for distributed computations. In fact, defining
\begin{equation}
	G_{m}
	\DefinedAs
	\left[ \phi_{1}(x_{m}), \ldots, \phi_{E}(x_{m}) \right]
	\label{equ:definition_of_G_m}
\end{equation}
one has
\begin{equation}
	\frac{G^{T} G}{M}
	=
	\frac{1}{M}\sum_{m = 1}^{M} G_{m}^{T} G_{m},
	\ \
	\frac{G^{T} \bm{y}}{M}
	=
	\frac{1}{M} \sum_{m = 1}^{M} G_{m}^{T} y_{m} .
	\label{equ:quantities_in_estimator_A_are_average_consensi}
\end{equation}
Since $G_{m}^{T} G_{m} \in \Reals^{E \times E}$ and $G_{m}^{T} y_{m} \in \Reals^{E}$ are local quantities, \eqref{equ:quantities_in_estimator_A_are_average_consensi} points out that $\widehat{f}_{A}$ can be distributedly computed through the parallelization of two average consensus strategies: one on the $G_{m}^{T} G_{m}$'s and one on the $G_{m}^{T} y_{m}$'s, for a total of $E^{2} + E$ scalars. This estimator would correspond to the \ac{MVUE} estimator if the process $f$ in~\eqref{equ:estimand_model} were truncated just to $f_{a}$.

\subsection{The $E$-dimensional estimator $\widehat{f}_{B}$}
\label{ssec:the_e-dimensional_estimator_widehat_f_b}

As stated in~\eqref{equ:orthogonality_of_the_eigenfunctions}, one has
\begin{equation}
	\ExpectationOf{ \left[ \frac{G^{T} G}{M} \right]_{e, e'} } 
	=
	\int_{\mathcal{X}} \phi_{e}(x) \, \phi_{e'}(x) \, d \mu(x) 
	=
	\delta_{e, e'}.
\end{equation}
and, given the assumptions in \Section~\ref{ssec:measurements_model} and \Assumption~\ref{ass:eigenfunctions_are_uniformly_bounded}, 
the following convergence in probability holds

\begin{equation}
	\frac{G^{T} G}{M}
	=
	\frac{1}{M} \sum_{m = 1}^{M} G_{m}^{T} G_{m}
	\xrightarrow{M \rightarrow +\infty}
	\ExpectationOf{ \frac{G^{T} G}{M} }
	=
	I.
	\label{equ:expected_G_T_G_over_M_equals_I}
\end{equation}

Thus, it is tempting to use the approximation
\begin{equation}
	\frac{G^{T} G}{M} 
	\approx 
	I
	\label{equ:expected_G_T_G_over_M_equals_I_2}
\end{equation}
and use, in place of $H_{A}$ in~\eqref{equ:estimator_A}, the matrix 
\begin{equation}
	H_{B}
	\DefinedAs
	\left( I + \frac{\sigma^{2}_{\nu}}{M} \Lambda^{-1}_{E} \right)^{-1} \frac{G^{T}}{M}.
	\label{equ:definition_of_H_B}
\end{equation}

In turn, this approach approximates $\widehat{f}_{A}$ with
\begin{equation}	
	\widehat{f}_{B}(x)
	\DefinedAs
	\begin{bmatrix}
		\phi_{1}(x) \; \cdots \; \phi_{E}(x)
	\end{bmatrix}
	H_{B}
	\bm{y} .	
	\label{equ:estimator_B}
\end{equation}
The estimator $\widehat{f}_{B}$ is more advantageous than $\widehat{f}_{A}$ for distributed computations. In fact, it requires an average consensus on just the column vectors $G_{m}^{T} y_{m}$'s, for a total of $E$ scalars (differently from the $E^{2} + E$ ones required by $\widehat{f}_{A}$), and does not require any expensive matrix inversion since $I + \frac{\sigma^{2}_{\nu}}{M} \Lambda^{-1}_{E}$ is diagonal.

\section{Statistical analysis of $\widehat{f}_{A}$ and $\widehat{f}_{B}$}
\label{sec:statistical_characterizations_of_widehat_f_a_and_widehat_f_b}

Ideally one would like to compute $\ExpectationOf{\| f - \widehat{f}_{A} \|^2}$ and $\ExpectationOf{\| f - \widehat{f}_{B} \|^2}$, or at least some bounds that quantify the performance of the estimator for any specified $E$ and $M$ a priori. However, the computation of such quantities is intractable or, at least, requires an expensive Monte Carlo analysis, possibly to be repeated for many different design variables like, e.g., $M,E,\sigma^{2}_{\nu}$. To circumvent this challenge, we will exploit the assumption that the input locations are randomly drawn from a known distribution $\mu$ and the orthonormality of the eigenfunctions to find bounds on $\ExpectationOf{\| f - \widehat{f}_{A} \|^2}$ that hold with arbitrarily high probability. More specifically, the key idea is to find an event $\mathcal{E}$ that occurs with arbitrarily high probability such that informative bounds on $\ExpectationOfGiven{\| f - \widehat{f}_{A} \|^2}{\mathcal{E}}$ can be computed. This is formally described in the next sections.

\subsection{Performance indexes and lower bound}

Two important performance indexes we consider for $\widehat{f}_{A}$ and $\widehat{f}_{B}$ are the errors defined by the conditional expectations
\begin{equation}
	\begin{array}{l}
		\displaystyle
		\mathrm{Err}_{A}(\bm{x}) \DefinedAs \ExpectationOfGiven{\left\| f - \widehat{f}_{A} \right\|^{2}}{\bm{x}} \vspace{0.1cm} \\
		\displaystyle
		\mathrm{Err}_{B}(\bm{x}) \DefinedAs \ExpectationOfGiven{\left\| f - \widehat{f}_{B} \right\|^{2}}{\bm{x}}
	\end{array}
	\label{equ:definition_of_Err_A_and_Err_B}
\end{equation}
where
\begin{equation}
	\| g \|^{2} \DefinedAs \int_{\mathcal{X}} g^2(x) d \mu(x) .
	\label{equ:definition_of_norm_induced_by_mu}
\end{equation}
The variables $\mathrm{Err}_{A}(\bm{x})$ and $\mathrm{Err}_{B}(\bm{x})$ are stochastic, since they are functions of the random input locations $\bm{x}$ that in our settings are assumed random as described in~\eqref{equ:input_locations_model}. Hence, the crux of our analysis will be how to account for the randomness coming from $\bm{x}$. Note also that $\| \cdot \|$ depends on $\mu$ so that $\mathrm{Err}_{A}$ and $\mathrm{Err}_{B}$ quantify the prediction errors on future data independently drawn from the same training set distribution.

Exploiting the \ac{KL} expansion introduced in \Section~\ref{sec:reformulating_the_measurements_model_using_kl_expansions} a lower bound on the errors $\mathrm{Err}_{A}(\bm{x})$ and $\mathrm{Err}_{B}(\bm{x})$ can be also easily obtained.
More generally, the following result bounds the performance achievable by any generic 
$E$-dimensional estimator of $f$.
%
\begin{theorem}
	Let $\widehat{f}_{\star}$ be any generic estimator of $f$, function of $\bm{x}$ and $\bm{y}$ and assuming values in any generic $E$-dimensional space fixed a priori. Then 
	\begin{equation}
		\min_{\widehat{f}_{\star}} \;
		\ExpectationOfGiven{\| f - \widehat{f}_{\star} \|^{2}}{\bm{x}} 
		\geq
		\sum_{e = E + 1}^{+\infty} \lambda_{e}.
	\label{equ:Err_lower_bound}
	\end{equation}
	\label{thm:lower_bound_for_generic_estimator_of_f}
\end{theorem}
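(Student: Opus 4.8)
The plan is to exploit the orthonormality of the eigenfunctions and the fact that $\widehat{f}_{\star}$ lives in a fixed $E$-dimensional subspace, so that the component of $f$ orthogonal to that subspace is simply unestimable and contributes its full prior energy to the error. First I would fix a realization of $\bm{x}$ and let $\mathcal{V}$ denote the $E$-dimensional subspace of $\SetOfSquareIntegrableFunctionsIn{\mathcal{X}}$ (with the $\mu$-induced inner product from~\eqref{equ:definition_of_norm_induced_by_mu}) in which $\widehat{f}_{\star}$ is constrained to take values. Because this subspace is chosen a priori, it is in particular independent of $\bm{y}$. I would then orthogonally decompose $f = P_{\mathcal{V}} f + P_{\mathcal{V}^{\perp}} f$, where $P_{\mathcal{V}}$ is the $\mu$-orthogonal projection onto $\mathcal{V}$, and write
\begin{equation}
	\| f - \widehat{f}_{\star} \|^{2}
	=
	\| P_{\mathcal{V}} f - \widehat{f}_{\star} \|^{2}
	+
	\| P_{\mathcal{V}^{\perp}} f \|^{2},
\end{equation}
using that $\widehat{f}_{\star} \in \mathcal{V}$ and the Pythagorean identity. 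Taking the conditional expectation given $\bm{x}$, the first term is nonnegative and the second term does not depend on $\widehat{f}_{\star}$, so
\begin{equation}
	\min_{\widehat{f}_{\star}} \;
	\ExpectationOfGiven{\| f - \widehat{f}_{\star} \|^{2}}{\bm{x}}
	\geq
	\ExpectationOfGiven{\| P_{\mathcal{V}^{\perp}} f \|^{2}}{\bm{x}}.
\end{equation}

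The second step is to lower-bound $\ExpectationOf{\| P_{\mathcal{V}^{\perp}} f \|^{2}}$ uniformly over all $E$-dimensional subspaces $\mathcal{V}$, which is exactly the classical optimality-of-KL statement. Expanding $f$ in the orthonormal basis $\{\phi_{e}\}$ as in~\eqref{equ:estimand_model}, one has $\| f \|^{2} = \sum_{e} c_{e}^{2}$ with $c_{e}$ the (random) coefficients, $\ExpectationOf{c_{e}^{2}} = \lambda_{e}$, and $\ExpectationOf{\| P_{\mathcal{V}} f \|^{2}} = \TraceOf{P_{\mathcal{V}} C P_{\mathcal{V}}}$ where $C = \DiagonalMatrixOf{\lambda_{1}, \lambda_{2}, \ldots}$ is the covariance operator in this basis. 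Hence $\ExpectationOf{\| P_{\mathcal{V}^{\perp}} f \|^{2}} = \TraceOf{C} - \TraceOf{P_{\mathcal{V}} C}$, and maximizing $\TraceOf{P_{\mathcal{V}} C}$ over rank-$E$ orthogonal projections is a standard Ky Fan / Courant–Fischer extremal-trace problem whose optimum is $\sum_{e=1}^{E} \lambda_{e}$, attained by projecting onto $\mathcal{S}$ in~\eqref{equ:definition_of_mathcal_S}. Therefore $\ExpectationOf{\| P_{\mathcal{V}^{\perp}} f \|^{2}} \geq \TraceOf{C} - \sum_{e=1}^{E}\lambda_{e} = \sum_{e=E+1}^{+\infty}\lambda_{e}$ for every a priori $\mathcal{V}$, which combined with the previous display gives~\eqref{equ:Err_lower_bound}.

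The main obstacle, and the one point I would treat carefully rather than wave through, is the interplay between the conditioning on $\bm{x}$ and the order of expectations: the bound in the first step is stated conditionally on $\bm{x}$, yet the tail sum $\sum_{e=E+1}^{+\infty}\lambda_{e}$ does not depend on $\bm{x}$. The resolution is that $\| P_{\mathcal{V}^{\perp}} f \|^{2}$ is a functional of $f$ alone and $f$ is independent of $\bm{x}$ (the mutual-independence assumption of \Section~\ref{ssec:measurements_model}), so $\ExpectationOfGiven{\| P_{\mathcal{V}^{\perp}} f \|^{2}}{\bm{x}} = \ExpectationOf{\| P_{\mathcal{V}^{\perp}} f \|^{2}}$; one must only note that $\widehat{f}_{\star}$ being $\bm{y}$-measurable (hence correlated with $f$) does not affect the orthogonal part. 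A secondary technical point is justifying the exchange of expectation and the infinite sum $\sum_{e} c_{e}^{2}$, i.e.\ monotone convergence together with $\sum_e \lambda_e = \ExpectationOf{\|f\|^2} = \int_{\mathcal{X}} K(x,x)\,d\mu(x) < +\infty$ (finite by continuity of $K$ on the compact $\mathcal{X}$), and similarly the well-posedness of $P_{\mathcal{V}}$ as a bounded operator on $\SetOfSquareIntegrableFunctionsIn{\mathcal{X}}$.
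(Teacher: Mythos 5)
Your proposal is correct. One thing to be aware of: the paper never actually proves this theorem --- it is stated and then used, with its content only informally attributed to the classical optimality of the Karhunen--Lo\`{e}ve truncation (the remark around~\eqref{equ:definition_of_mathcal_S} and the reference to Zhu et al.), and the appendix contains proofs only of the later theorems. So there is no official argument to compare against; your write-up is essentially the standard proof that the paper implicitly relies on, and it is sound. The two ingredients are exactly right: (i) the Pythagorean split $\| f - \widehat{f}_{\star} \|^{2} = \| P_{\mathcal{V}} f - \widehat{f}_{\star} \|^{2} + \| P_{\mathcal{V}^{\perp}} f \|^{2}$, valid pointwise because $\widehat{f}_{\star} \in \mathcal{V}$ and $\mathcal{V}$ is fixed a priori (hence $P_{\mathcal{V}}$ is deterministic, not $\bm{y}$- or $\bm{x}$-dependent), which kills the dependence on the estimator; and (ii) the Ky Fan extremal-trace bound $\TraceOf{P_{\mathcal{V}} C} \leq \sum_{e=1}^{E} \lambda_{e}$ for any rank-$E$ orthogonal projection, giving $\ExpectationOf{\| P_{\mathcal{V}^{\perp}} f \|^{2}} \geq \sum_{e=E+1}^{+\infty} \lambda_{e}$ uniformly in $\mathcal{V}$. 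Your handling of the conditioning is also the right one --- since $f$ is independent of $\bm{x}$ and $P_{\mathcal{V}^{\perp}}$ is deterministic, $\ExpectationOfGiven{\| P_{\mathcal{V}^{\perp}} f \|^{2}}{\bm{x}} = \ExpectationOf{\| P_{\mathcal{V}^{\perp}} f \|^{2}}$, which is why the $\bm{x}$-free tail sum can lower-bound a conditional expectation --- and the integrability point $\sum_{e} \lambda_{e} = \int_{\mathcal{X}} K(x,x)\, d\mu(x) < +\infty$ is all that is needed to justify the trace manipulations. No gaps.
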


The following definition will be especially important for our future developments.

\begin{definition}
	\label{def:BarE}
	We say that $\overline{\mathrm{Err}}_{A} \leq q$ or $\overline{\mathrm{Err}}_{B} \leq q$ with probability $1 - \alpha$ if there exists an event $\mathcal{E}$ in the $\sigma$-algebra induced by $\bm{x}$ of probability at least $1-\alpha$ such that, respectively,
	\begin{equation}
		\ExpectationOf{ \mathrm{Err}_{A}(\bm{x}) \left| \; \mathcal{E} \right. } \leq q
		\label{equ:definition-of-overline-Err-A}
	\end{equation}
	or 
	\begin{equation}
		\ExpectationOf{ \mathrm{Err}_{B}(\bm{x}) \left| \; \mathcal{E} \right. } \leq q.
		\label{equ:definition-of-overline-Err-B}
	\end{equation}
\end{definition}

Thus, if $\alpha$ is close to zero saying that $\overline{\mathrm{Err}}_{A} \leq q$ with probability $1-\alpha$ is equivalent to saying that the average error associated to $\widehat{f}_{A}$ is smaller than $q$ with high probability. Finally, note that setting $\mathcal{E}$ to the entire sample space, the conditional expectations in the \ac{LHS} of \eqref{equ:definition-of-overline-Err-A} and \eqref{equ:definition-of-overline-Err-B} become unconditional ones, and actually correspond to the \acp{MSE} of $\widehat{f}_{A}$ and $\widehat{f}_{B}$, i.e.,
\begin{equation} \label{MSEA}
\text{MSE}_{\widehat{f}_{A}} = \int_{\mathcal{X}} \ \mathrm{Err}_{A}(x) d\mu(x), 
\end{equation}
\begin{equation} \label{MSEB}
\text{MSE}_{\widehat{f}_{B}} = \int_{\mathcal{X}} \ \mathrm{Err}_{B}(x) d\mu(x).
\end{equation}

\subsection{Non asymptotic error bounds}

\begin{figure*}
	\centering
	\pgfplotsset
{
	AxisStyle/.style =
	{
		width					= 0.99\columnwidth,
		height					= 0.5\columnwidth,
		xmin					= 1,
		xmax					= 100,
		ymin					= 0,
		ymax					= 10,
		xlabel					= {$E$},
		xlabel near ticks,
		xlabel shift			= 3,
		ylabel					= {},
		ylabel near ticks,
		scaled y ticks			= false,
		x tick label style		= {anchor = north, inner ysep = 0.1cm},
		y tick label style		= {/pgf/number format/fixed, inner xsep = 0.1cm},
	},
	LegendStyle/.style =
	{
		legend plot pos			= left,
		legend columns			= 1,
		legend style			= {draw = black, fill = white},
		legend cell align		= left,
		inner xsep				= 0.0cm,
		inner ysep				= 0.0cm,
		legend style			=
		{
			nodes				= {font = \scriptsize},
			at					= {(0,0)},
			anchor				= south west,
		}
	},
	GridStyle/.style =
	{
		xmajorgrids,
		ymajorgrids,
		xminorgrids,
		minor grid style =
		{
			thin,
			densely dotted,
			black!20
		},
		major grid style =
		{
			thin,
			densely dotted,
			black!20
		},
	},
	BndStyle/.style =
	{
		smooth,
		mark			= none,
		solid,
		draw			= black!60!white,
		line width		= 0.05cm,
	},
	TrueErrorStyle/.style =
	{
		BndStyle,
		dashed,
		draw			= black!80!white,
		line width		= 0.03cm,
	},
	LowerBndStyle/.style =
	{
		BndStyle,
		solid,
		draw			= black,
		line width		= 0.01cm,
	},
}

\begin{tikzpicture}
	\begin{axis}
	[
		AxisStyle,
		LegendStyle,
		GridStyle,
		ymax		= 1,
		title		= {$K = $ splines, $\mathrm{Bnd}_{A}$},
		xmode		= log, 
		ymode		= log, 
		name		= splinesBndA,
	]
		\addplot [BndStyle] table [x = E, y = NormalizedBound]
		{data-BoundsKSplinesA.txt};
		\addlegendentry{normalized $\mathrm{Bnd}_{A}(E)$};
		\addplot [TrueErrorStyle] table [x = E, y = TrueError]
		{data-BoundsKSplinesA.txt};
		\addlegendentry{normalized error};
		\addplot [LowerBndStyle] table [x = E, y = LowerBound]
		{data-BoundsKSplinesA.txt};
		\addlegendentry{lower bound};
	\end{axis}
	\begin{axis}
	[
		AxisStyle,
		GridStyle,
		title		= {$K = $ splines, $\mathrm{Bnd}_{B}$},
		xmode		= log, 
		ymode		= log, 
		ymax		= 1,
		name		= splinesBndB,
		at			= {($(splinesBndA.east)+(2cm,0)$)},
		anchor		= west,
		LegendStyle,
	]
		\addplot [BndStyle] table [x = E, y = NormalizedBound]
		{data-BoundsKSplinesB.txt};
		\addlegendentry{normalized $\mathrm{Bnd}_{B}(E)$};
		\addplot [TrueErrorStyle] table [x = E, y = TrueError]
		{data-BoundsKSplinesB.txt};
		\addlegendentry{normalized error};
		\addplot [LowerBndStyle] table [x = E, y = LowerBound]
		{data-BoundsKSplinesB.txt};
		\addlegendentry{lower bound};
	\end{axis}
	\begin{axis}
	[
		AxisStyle,
		GridStyle,
		title		= {$K = $ exponential decay, $\mathrm{Bnd}_{A}$},
		xmode		= log, 
		ymode		= log, 
		name		= exponentialBndA,
		at			= {($(splinesBndA.south)-(0,2cm)$)},
		anchor		= north,
		LegendStyle,
	]
		\addplot [BndStyle] table [x = E, y = NormalizedBound]
		{data-BoundsKExponentialA.txt};
		\addlegendentry{normalized $\mathrm{Bnd}_{A}(E)$};
		\addplot [TrueErrorStyle] table [x = E, y = TrueError]
		{data-BoundsKExponentialA.txt};
		\addlegendentry{normalized error};
		\addplot [LowerBndStyle] table [x = E, y = LowerBound]
		{data-BoundsKExponentialA.txt};
		\addlegendentry{lower bound};
	\end{axis}
	\begin{axis}
	[
		AxisStyle,
		GridStyle,
		title		= {$K = $ exponential decay, $\mathrm{Bnd}_{B}$},
		xmode		= log, 
		ymode		= log, 
		name		= expM100000,
		at			= {($(splinesBndB.south)-(0,2cm)$)},
		anchor		= north,
		LegendStyle,
	]
		\addplot [BndStyle] table [x = E, y = NormalizedBound]
		{data-BoundsKExponentialB.txt};
		\addlegendentry{normalized $\mathrm{Bnd}_{B}(E)$};
		\addplot [TrueErrorStyle] table [x = E, y = TrueError]
		{data-BoundsKExponentialB.txt};
		\addlegendentry{normalized error};
		\addplot [LowerBndStyle] table [x = E, y = LowerBound]
		{data-BoundsKExponentialB.txt};
		\addlegendentry{lower bound};
	\end{axis}
\end{tikzpicture}
	\caption{$\mathrm{Bnd}_{A}$ and $\mathrm{Bnd}_{B}$ (normalized by the a priori function variance) as a function of $E$, with $\alpha = 0.05,M=10000$ and for different eigenvalues decay rates.}
\label{fig:bounds}
\end{figure*}
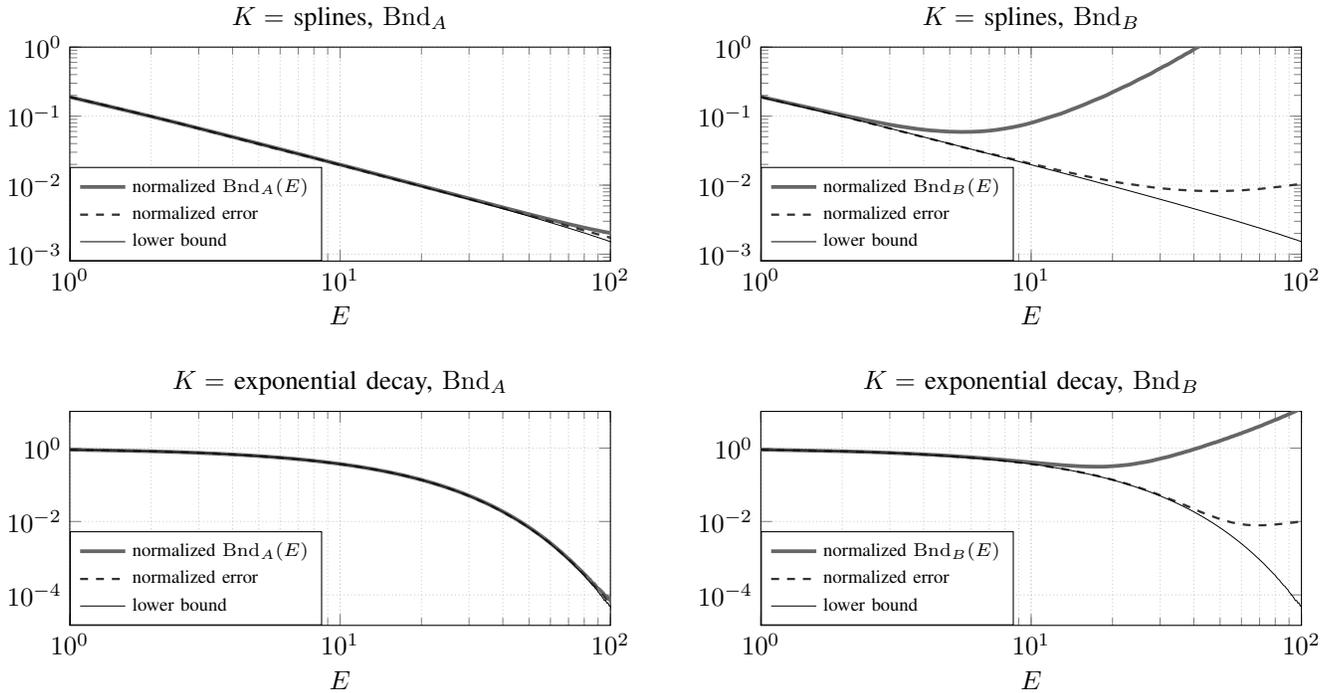

The key issue is to bound the performance indexes $\mathrm{Err}_{A}$ and $\mathrm{Err}_{B}$ for any finite number of measurements $M$ and eigenfunctions $E$. The following theorem provides the desired bounds. It depends on the input locations
distribution $\mu$, the kernel eigenvalues $\lambda_{e}$ and constant $k$ defined 
in~\eqref{equ:definition_of_eigenfunction} and \eqref{equ:eigenfunctions_are_uniformly_bounded},
 the number of eigenfunctions $E$ and measurements $M$. In addition the bound is
also function of a parameter $\varepsilon \in (0, 1]$ connected to maximal and minimal (stochastic) eigenvalue of $\frac{G^{T} G}{M}$, 
as detailed in the proof contained in the Appendix. 
\begin{theorem}
	Let the assumptions in \Section~\ref{ssec:measurements_model} and \Assumption~\ref{ass:eigenfunctions_are_uniformly_bounded} hold, $\alpha \in (0, 1)$ be a desired confidence level (e.g., $0.01$ or $0.05$), and $\varepsilon \in (0, 1]$ be given. If $E, M$ and $k$ satisfy
	\begin{equation}
		1 - \varepsilon + \varepsilon \log(\varepsilon)
		\geq
		\frac{Ek}{M} \log\left(\frac{E}{\alpha}\right)
		\label{equ:condition_on_varepsilon_for_estimator_A}
	\end{equation}
	then with probability at least $1-\alpha$ it holds that
	\begin{equation}
		\overline{\mathrm{Err}}_{A} 
		\leq
		\mathrm{Bnd}_{A}
		\label{equ:definition_of_bound_A}
	\end{equation}
	with
	\begin{equation}
		\begin{array}{l}
			\displaystyle
			\mathrm{Bnd}_{A}
			\DefinedAs
			%
			\frac{k M}{1-\alpha}
			\left(
				\sum_{e = 1}^{E}
				\frac{\lambda^2_{e}}{(\varepsilon M \lambda_e + \sigma^{2}_{\nu})^2}
			\right)
			\left(
				\sum_{e = E + 1}^{+\infty} \lambda_{e}
			\right) \\
			%
			\displaystyle \qquad
			+
			\frac{\sigma^{2}_{\nu}}{1-\alpha}
			\left(
				\sum_{e = 1}^{E}
				\frac{\lambda_e}{\varepsilon M \lambda_e + \sigma^{2}_{\nu}}
			\right)
			%
			+
			\left(
				\sum_{e = E + 1}^{+\infty} \lambda_{e}
			\right)
			.
		\end{array}
		\label{equ:value_of_bound_A}
	\end{equation}

	Under the same assumption but with $E, M$ and $k$ now satisfying
	\begin{equation}
		1 - \varepsilon + \varepsilon \log(\varepsilon)
		\geq
		\frac{Ek}{M} \log\left(\frac{2 E}{\alpha}\right),
		\label{equ:condition_on_varepsilon_for_estimator_B}
	\end{equation}
	then with probability at least $1-\alpha$ it holds that
	\begin{equation}
		\overline{\mathrm{Err}}_{B}
		\leq
		\mathrm{Bnd}_{B}
		\label{equ:definition_of_bound_B}
	\end{equation}
	with
	\begin{equation}
		\begin{array}{l}
			\displaystyle
			\mathrm{Bnd}_{B}
			\DefinedAs
			%
			\frac{k M}{1 - \alpha}
			\left(
				\sum_{e = 1}^{E} \frac{\lambda_{e}^{2}}{\left( M \lambda_{e} + \sigma^{2}_{\nu} \right)^{2}}
			\right)
			\left( \sum_{e = E + 1}^{+\infty} \lambda_{e} \right)
			\\ \qquad \displaystyle
			%
			+
			\frac{\sigma^{2}_{\nu}}{1-\alpha}
			\left(
				\sum_{e = 1}^{E}
				\frac{\lambda_e}{\varepsilon M \lambda_e + \sigma^{2}_{\nu}}
			\right)
			%
			+
			\left(
				\sum_{e = E + 1}^{+\infty} \lambda_{e}
			\right)
			\\ \qquad \displaystyle
			+
			\kappa
			\left(
				\frac{E}{M}
				\sigma_{\nu}^{2}
				+
				\sum_{e = 1}^{E} \lambda_{e}
			\right)
			%
		\end{array}
		\label{equ:value_of_bound_B}
	\end{equation}
	where
	\begin{equation}
		\kappa
		=
		\frac{1}{1-\alpha}
		\left( \varepsilon + \frac{\lambda_1^{-1} \sigma_{\nu}^2}{M} \right)^{-4}
		(1-\varepsilon)^2 (2-\varepsilon)^2
	\label{equ:definition_of_kappa}
	\end{equation}
\label{thm:bounds_on_errors_of_f_A_and_f_B}
\end{theorem}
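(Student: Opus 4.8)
The plan is to separate two independent ingredients. The first is a high-probability event, measurable with respect to $\bm{x}$, on which the empirical Gram matrix $P \DefinedAs \frac{G^{T}G}{M} = \frac{1}{M}\sum_{m=1}^{M}G_{m}^{T}G_{m}$ is spectrally close to its mean $I$. I would obtain it from the matrix Chernoff inequality: the $G_{m}^{T}G_{m}$ are independent, rank-one and positive semidefinite, with $\lambda_{\max}(G_{m}^{T}G_{m}) = \sum_{e=1}^{E}\phi_{e}^{2}(x_{m}) \leq Ek$ by \Assumption~\ref{ass:eigenfunctions_are_uniformly_bounded} and $\ExpectationOf{G_{m}^{T}G_{m}} = I$ by \eqref{equ:orthogonality_of_the_eigenfunctions}, so matrix Chernoff gives $\ProbabilityOf{\lambda_{\min}(P) \leq \varepsilon} \leq E\,(e^{\varepsilon-1}/\varepsilon^{\varepsilon})^{M/(Ek)}$, which after taking logarithms is $\leq\alpha$ precisely under condition \eqref{equ:condition_on_varepsilon_for_estimator_A}; hence $\mathcal{E}_{A} \DefinedAs \{\lambda_{\min}(P) \geq \varepsilon\}$ has $\ProbabilityOf{\mathcal{E}_{A}} \geq 1-\alpha$. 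For $\widehat{f}_{B}$ I would additionally keep the two-sided event $\mathcal{E}_{B} \DefinedAs \{\varepsilon \leq \lambda_{\min}(P),\ \lambda_{\max}(P) \leq 2-\varepsilon\}$, controlling $\lambda_{\max}(P)$ by the companion Chernoff bound and splitting the allowed failure probability over the two tails, which is what produces the factor $2$ in $\log(2E/\alpha)$ in \eqref{equ:condition_on_varepsilon_for_estimator_B}.

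The second ingredient is an error decomposition obtained by splitting the data as $\bm{y} = \bm{y}_{0} + Z\bm{b}$ with $\bm{y}_{0} \DefinedAs G\bm{a} + \bm{\nu}$. Since $\bm{b}$ is independent of $(\bm{a},\bm{y}_{0})$ and zero-mean, and $f_{b} \perp \mathcal{S}$ in $\SetOfSquareIntegrableFunctionsIn{\mu}$ while the $\phi_{e}$ are orthonormal, every estimator with coefficient vector $\widehat{\bm{a}} = H\bm{y}$ valued in $\mathcal{S}$ satisfies
\[
	\mathrm{Err}(\bm{x}) = \ExpectationOfGiven{\|\bm{a} - H\bm{y}_{0}\|^{2}}{\bm{x}} + \TraceOf{HZ\Lambda_{>E}Z^{T}H^{T}} + \sum_{e=E+1}^{+\infty}\lambda_{e},
\]
with $\Lambda_{>E} \DefinedAs \DiagonalMatrixOf{\lambda_{E+1},\lambda_{E+2},\ldots}$, the last (deterministic) term being exactly the lower bound of \Theorem~\ref{thm:lower_bound_for_generic_estimator_of_f}. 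The key observation is that $H_{A}\bm{y}_{0}$ is the posterior mean of $\bm{a}$ in the ``clean'' model $\bm{y}_{0} = G\bm{a} + \bm{\nu}$ (push-through identity $(\tfrac{G^{T}G}{M} + \Sigma)^{-1}\tfrac{G^{T}}{M} = \Lambda_{E}G^{T}(G\Lambda_{E}G^{T} + \sigma_{\nu}^{2}I)^{-1}$, $\Sigma \DefinedAs \tfrac{\sigma_{\nu}^{2}}{M}\Lambda_{E}^{-1}$), so $\ExpectationOfGiven{\|\bm{a} - H_{A}\bm{y}_{0}\|^{2}}{\bm{x}} = \TraceOf{\tfrac{\sigma_{\nu}^{2}}{M}(P+\Sigma)^{-1}}$ is just the trace of the posterior covariance, which on $\mathcal{E}_{A}$ is bounded via $(P+\Sigma)^{-1} \preceq (\varepsilon I + \Sigma)^{-1} = \DiagonalMatrixOf{\tfrac{M\lambda_{e}}{\varepsilon M\lambda_{e}+\sigma_{\nu}^{2}}}$ by $\sigma_{\nu}^{2}\sum_{e\leq E}\tfrac{\lambda_{e}}{\varepsilon M\lambda_{e}+\sigma_{\nu}^{2}}$. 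For the leakage term I would combine the same spectral bound on $(P+\Sigma)^{-1}$ with the elementary identity $\ExpectationOf{(G^{T}Z\Lambda_{>E}Z^{T}G)_{ee}} = M\sum_{j}\lambda_{E+j}\int_{\mathcal{X}}\phi_{e}^{2}\phi_{E+j}^{2}\,d\mu \leq Mk\sum_{e'>E}\lambda_{e'}$ (using $\sup|\phi_{E+j}|^{2} \leq k$ and $\int\phi_{e}^{2}d\mu = 1$), yielding a bound of the form $kM\big(\sum_{e\leq E}\tfrac{\lambda_{e}^{2}}{(\varepsilon M\lambda_{e}+\sigma_{\nu}^{2})^{2}}\big)\big(\sum_{e'>E}\lambda_{e'}\big)$ for the $\bm{x}$-expectation of $\TraceOf{H_{A}Z\Lambda_{>E}Z^{T}H_{A}^{T}}$ restricted to $\mathcal{E}_{A}$. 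Passing from expectation restricted to $\mathcal{E}_{A}$ to expectation conditioned on $\mathcal{E}_{A}$ costs at most $1/\ProbabilityOf{\mathcal{E}_{A}} \leq 1/(1-\alpha)$ (the deterministic $f_{b}$-energy term being untouched), and adding the three contributions yields \eqref{equ:value_of_bound_A}.

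The harder step is $\widehat{f}_{B}$. Its leakage term is handled exactly as above but with the \emph{deterministic} diagonal $W^{-1} = \DiagonalMatrixOf{\tfrac{M\lambda_{e}}{M\lambda_{e}+\sigma_{\nu}^{2}}}$ ($W \DefinedAs I+\Sigma$) replacing $(P+\Sigma)^{-1}$, which gives the first term of $\mathrm{Bnd}_{B}$, and the $f_{b}$-energy again gives $\sum_{e>E}\lambda_{e}$. For the clean-model error I would write $\bm{a} - H_{B}\bm{y}_{0} = (\bm{a} - H_{A}\bm{y}_{0}) + (H_{A}-H_{B})\bm{y}_{0}$; since $\bm{a} - H_{A}\bm{y}_{0}$ is the posterior estimation error in the clean model it is orthogonal (given $\bm{x}$) to any function of $\bm{y}_{0}$, so the cross term vanishes and $\ExpectationOfGiven{\|\bm{a}-H_{B}\bm{y}_{0}\|^{2}}{\bm{x}} = \TraceOf{\tfrac{\sigma_{\nu}^{2}}{M}(P+\Sigma)^{-1}} + \ExpectationOfGiven{\|(H_{A}-H_{B})\bm{y}_{0}\|^{2}}{\bm{x}}$; the first piece is bounded as before (yielding the second term of $\mathrm{Bnd}_{B}$), while for the second I would use $H_{A}-H_{B} = (P+\Sigma)^{-1}(I-P)W^{-1}\tfrac{G^{T}}{M}$ and, on $\mathcal{E}_{B}$, the operator-norm estimates $\|(P+\Sigma)^{-1}\| \leq (\varepsilon + \lambda_{1}^{-1}\sigma_{\nu}^{2}/M)^{-1}$, $\|I-P\| \leq 1-\varepsilon$, $\|W^{-1}\| \leq 1$, together with $\ExpectationOfGiven{\|\tfrac{G^{T}\bm{y}_{0}}{M}\|^{2}}{\bm{x}} = \TraceOf{P^{2}\Lambda_{E}} + \tfrac{\sigma_{\nu}^{2}}{M}\TraceOf{P} \leq (2-\varepsilon)^{2}\big(\sum_{e\leq E}\lambda_{e} + \tfrac{E}{M}\sigma_{\nu}^{2}\big)$ on $\mathcal{E}_{B}$; collecting these factors (appearing with the multiplicities that make up the stated powers) with the conditioning factor $1/(1-\alpha)$ reproduces $\kappa$ of \eqref{equ:definition_of_kappa}, and multiplying by $\tfrac{E}{M}\sigma_{\nu}^{2} + \sum_{e\leq E}\lambda_{e}$ gives the last term of $\mathrm{Bnd}_{B}$. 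I expect the main obstacles to lie entirely in this last step: keeping the cross term under control (here handled by the posterior-orthogonality trick), cleanly separating the quantities uniformly bounded on $\mathcal{E}_{B}$ from the single quantity — the $f_{b}$-leakage — that still requires a further expectation over $\bm{x}$, and matching the exact powers entering $\kappa$; by contrast, once the right event and the bound $\lambda_{\max}(G_{m}^{T}G_{m}) \leq Ek$ are identified, the matrix-concentration step is routine.
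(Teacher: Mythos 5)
Your proposal follows essentially the same route as the paper's proof: the same Tropp matrix-Chernoff events (with the union bound over the two tails producing the factor $2$ in $\log(2E/\alpha)$), the same three-term decomposition of $\mathrm{Err}$ into clean-model error, $f_b$-leakage and tail energy $\sum_{e>E}\lambda_e$, the same identification of the clean-model term for $\widehat{f}_{A}$ as the trace of the posterior covariance bounded through $\lambda_{\min}(G^TG/M)\geq\varepsilon$, the same vanishing-cross-term computation giving the $kM\sum_{e>E}\lambda_e$ factor, the same $1/(1-\alpha)$ conditioning step (the paper's Lemma~\ref{thm:conditional_expectation_on_mathcal_E}), and for $\widehat{f}_{B}$ the same splitting $\Phi_B=\Phi_A+\widetilde{\Phi}$ with $H_A-H_B=A^{-1}(I-P)B^{-1}G^T/M$ and operator-norm estimates on the two-sided event. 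The only cosmetic discrepancy is in the very last step: to reproduce the stated power $\left(\varepsilon+\lambda_1^{-1}\sigma_\nu^2/M\right)^{-4}$ in $\kappa$ you should bound $\|W^{-1}\|\leq\left(\varepsilon+\lambda_1^{-1}\sigma_\nu^2/M\right)^{-1}$ (always valid, since $W\succeq\left(1+\lambda_1^{-1}\sigma_\nu^2/M\right)I\succeq\left(\varepsilon+\lambda_1^{-1}\sigma_\nu^2/M\right)I$), as the paper does, rather than $\|W^{-1}\|\leq 1$, which yields only the power $-2$ and matches the claimed $\kappa$ only when $\varepsilon+\lambda_1^{-1}\sigma_\nu^2/M\leq 1$.
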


The obtained bounds are now tested via a numerical example.

\subsection{Numerical study}
\label{exa:increasing_E_can_lead_to_a_performance_loss}
	Consider the first-order spline kernel \cite{Wahba1990} which corresponds to the Brownian motion covariance, i.e.,
	\begin{equation}
		K(x,x') = \min(x,x') = \sum_{e=1}^\infty \lambda_{e} \phi_{e}(x) \phi_{e} (x')
	\end{equation}
	with the input locations probability measure $\mu$ in~\eqref{equ:input_locations_model} set to the uniform distribution on $[0,1]$. With these settings
	\begin{equation}
		\label{Exp1}
	 \phi_{e}(x) = \sqrt{2} \sin \left(x (e\pi - \pi/2)\right), \quad \lambda_e = \frac{1}{(e\pi - \pi/2)^2} 
	\end{equation}
	and $k=2$. To make the bounds only depend on $E$ we set $M=10000$, $1-\alpha=0.95$, the noise variance $\sigma^{2}_{\nu}=0.1^2$, and $\varepsilon \in (0, 1]$ that minimizes the bound while satisfying \eqref{equ:condition_on_varepsilon_for_estimator_A} or~\eqref{equ:condition_on_varepsilon_for_estimator_B} accordingly.

	The thick lines in the two top panels of \Figure~\ref{fig:bounds} show how $\mathrm{Bnd}_{A}$ (left) and $\mathrm{Bnd}_{B}$ (right) vary with $E$ (bounds are normalized using the prior process variance $\sum_{e=1}^\infty \lambda_{e}$). For the sake of comparison we also display the true (normalized) \acp{MSE} (dashed line) as defined in \eqref{MSEA} and \eqref{MSEB}, calculated via a Monte Carlo of 1000 runs, and its lower bound (thin line), i.e., $\sum_{e=E+1}^\infty \lambda_{e}/\sum_{e=1}^\infty \lambda_{e}$ as illustrated in \Theorem~\ref{thm:lower_bound_for_generic_estimator_of_f}.

As for $\mathrm{Bnd}_{A}$, it is interesting to notice that just 20 eigenfunctions are needed to obtain an high estimation accuracy in both the cases. In addition, the curve is very close to the true error profile (which in turn is close to the lower bound) and is monotonically decreasing. Indeed, as discussed in the proof of \Theorem~\ref{thm:f_A_and_f_B_are_consistent} contained in the next subsection, when one adopts $\widehat{f}_{A}$ one should set $E$ as large as possible (compatibly with communication capabilities) since, at the limit, convergence to the minimum variance estimator holds.

The profile of $\mathrm{Bnd}_{B}$ is instead different and exhibit a clear minimum at $E=7$. The reason is that $\widehat{f}_{B}$ relies on the asymptotic matrix approximation~\eqref{equ:expected_G_T_G_over_M_equals_I_2}. The bound $\mathrm{Bnd}_{B}$ then points out that if $E$ is too large then the quality of this approximation can worsen, hence leading to an increment of the corresponding \ac{MSE}. One can see that also the true error profile is not monotonically decreasing (indeed, we will see in the next subsection that for $M$ fixed and $E$ going to infinity $\widehat{f}_{B}$ is not guaranteed to converge to the minimum variance estimator). Note that, in this case, $\mathrm{Bnd}_{B}$ is close to truth only for low values of $E$ and that the Monte Carlo analysis suggests the best $E$ to be around 50. Overall, this indicates that the eigenfunctions number has to be seen as an important design parameter for $\widehat{f}_{B}$ to optimize the performance. This point will be the focus of \Section~\ref{sec:distributed_tuning_of_the_hyperparameters_of_widehat_f_a}.

Finally, the two bottom panels of \Figure~\ref{fig:bounds} display the same bounds except that the kernel eigenvalues now decay exponentially to zero as $\lambda_e = \exp(-0.1 e)$. Exponentially decaying eigenvalues are typical for Gaussian kernels, and therefore of practical relevance. The shapes of the curves change but the same comments hold true.

\subsection{Asymptotic behaviors of the estimators and of the bounds}
\label{ssec:asymptotic_behaviours_of_the_bounds}

Now, we start investigating the asymptotic properties of our estimators considering a situation where their dimension $E$ is fixed while the number of measurements $M$ grows to infinity. The next result then shows that $\widehat{f}_{A}$ and $\widehat{f}_{B}$ asymptotically reach the lower bound~\eqref{equ:Err_lower_bound}.
\begin{theorem}
	Given the assumptions in \Section~\ref{ssec:measurements_model} and \Assumption~\ref{ass:eigenfunctions_are_uniformly_bounded},
	\begin{equation}
		\displaystyle
		\lim_{M \rightarrow + \infty}
		\mathrm{Err}_{A}
		=
		\sum_{e = E + 1}^{+\infty} \lambda_{e}
		\qquad
		\text{in probability}
	\label{equ:f_A_is_asymptotically_efficient}
	\end{equation}
	\begin{equation}
		\displaystyle
		\lim_{M \rightarrow + \infty}
		\mathrm{Err}_{B}
		=
		\sum_{e = E + 1}^{+\infty} \lambda_{e}
		\qquad
		\text{in probability.}
	\label{equ:f_B_is_asymptotically_efficient}
	\end{equation}
	\label{thm:f_A_and_f_B_are_asymptotically_efficient}
\end{theorem}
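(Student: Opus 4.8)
The plan is to strip the error down to the one piece that can vanish, namely the estimation error inside the fixed subspace $\mathcal{S}$. Decompose $f = f_a + f_b$ as in~\eqref{equ:estimand_model} and write $\widehat{\bm{a}} \DefinedAs H_A \bm{y}$, so that $\widehat{f}_A = \sum_{e=1}^{E}(\widehat{\bm{a}})_e\,\phi_e$. Since $\phi_1,\ldots,\phi_E$ are $\mu$-orthonormal and $f_b$ lies in the $\|\cdot\|$-orthogonal complement of $\mathcal{S}$, Pythagoras together with~\eqref{equ:orthogonality_of_the_eigenfunctions} gives $\|f - \widehat{f}_A\|^2 = \|\bm{a} - \widehat{\bm{a}}\|^2 + \sum_{e\geq 1} b_e^2$, where $\|\bm{a} - \widehat{\bm{a}}\|$ is the Euclidean norm on $\Reals^E$. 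Taking $\ExpectationOfGiven{\cdot}{\bm{x}}$ and using that $\bm{b}$ is independent of $\bm{x}$ with $\ExpectationOf{b_e^2} = \lambda_{E+e}$, one gets
\[
	\mathrm{Err}_{A}(\bm{x}) = \ExpectationOfGiven{\|\bm{a} - \widehat{\bm{a}}\|^2}{\bm{x}} + \sum_{e = E+1}^{+\infty}\lambda_e ,
\]
and, identically, $\mathrm{Err}_{B}(\bm{x}) = \ExpectationOfGiven{\|\bm{a} - H_B\bm{y}\|^2}{\bm{x}} + \sum_{e = E+1}^{+\infty}\lambda_e$. Thus~\eqref{equ:f_A_is_asymptotically_efficient}--\eqref{equ:f_B_is_asymptotically_efficient} reduce to showing that the leading term tends to $0$ in probability as $M \to +\infty$.

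For $\widehat{f}_A$, abbreviate $P \DefinedAs \frac{G^{T}G}{M} + \frac{\sigma^{2}_{\nu}}{M}\Lambda_E^{-1}$, so that $H_A = P^{-1}\frac{G^T}{M}$. Substituting $\bm{y} = G\bm{a} + Z\bm{b} + \bm{\nu}$ from~\eqref{equ:measurement_model_vector_form} and using $P^{-1}\frac{G^TG}{M} - I = -\frac{\sigma^2_\nu}{M}P^{-1}\Lambda_E^{-1}$,
\[
	\widehat{\bm{a}} - \bm{a}
	=
	-\frac{\sigma^2_\nu}{M}\,P^{-1}\Lambda_E^{-1}\,\bm{a}
	\;+\; P^{-1}\frac{G^TZ}{M}\,\bm{b}
	\;+\; P^{-1}\frac{G^T}{M}\,\bm{\nu} .
\]
Given $\bm{x}$, the three terms are linear images of the mutually independent, zero-mean vectors $\bm{a},\bm{b},\bm{\nu}$, so the cross terms disappear and $\ExpectationOfGiven{\|\widehat{\bm{a}} - \bm{a}\|^2}{\bm{x}}$ equals the sum of three traces: $\frac{\sigma^4_\nu}{M^2}\TraceOf{P^{-1}\Lambda_E^{-1}P^{-1}}$, then $\TraceOf{P^{-1}\frac{G^TZ}{M}\Lambda_{>E}\frac{Z^TG}{M}P^{-1}}$ with $\Lambda_{>E} \DefinedAs \DiagonalMatrixOf{\lambda_{E+1},\lambda_{E+2},\ldots}$, and finally $\frac{\sigma^2_\nu}{M}\TraceOf{P^{-1}\frac{G^TG}{M}P^{-1}}$. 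By~\eqref{equ:expected_G_T_G_over_M_equals_I}, $\frac{G^TG}{M}\to I$ in probability, hence $P\to I$ and $P^{-1}\to I$ in probability; on the event $\{\|P^{-1}\|\le 2\}$, whose complement has probability tending to $0$, the first trace is $O(M^{-2})$ and the third is $O(M^{-1})$, where I also use that $\TraceOf{\Lambda_E^{-1}} = \sum_{e=1}^{E}\lambda_e^{-1}$ and $\TraceOf{\frac{G^TG}{M}}\le Ek$ (\Assumption~\ref{ass:eigenfunctions_are_uniformly_bounded}) are $O(1)$ because $E$ is fixed. So both terms go to $0$ in probability.

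The main obstacle is the middle trace (the cross-correlation between the first $E$ eigenfunctions and the tail), an infinite sum. I would dominate it by $\|P^{-1}\|^2\,\TraceOf{\frac{G^TZ}{M}\Lambda_{>E}\frac{Z^TG}{M}}$ and use
\[
	\TraceOf{\frac{G^TZ}{M}\Lambda_{>E}\frac{Z^TG}{M}}
	=
	\sum_{e=1}^{E}\sum_{j=1}^{+\infty}\lambda_{E+j}
	\left( \frac{1}{M}\sum_{m=1}^{M}\phi_e(x_m)\,\phi_{E+j}(x_m) \right)^{2} .
\]
For $e\le E<E+j$ the inner empirical average has zero mean by~\eqref{equ:orthogonality_of_the_eigenfunctions} and, by \Assumption~\ref{ass:eigenfunctions_are_uniformly_bounded} --- this is precisely where the uniform bound $k$ is indispensable, as it controls all tail indices $j$ simultaneously --- variance at most $k/M$. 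Hence the full double sum has expectation at most $\frac{Ek}{M}\sum_{e=E+1}^{+\infty}\lambda_e \to 0$ and, being nonnegative, converges to $0$ in probability by Markov's inequality; multiplying by $\|P^{-1}\|^2$, which is bounded in probability, the middle trace vanishes as well. Together with the previous step this proves~\eqref{equ:f_A_is_asymptotically_efficient}.

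For $\widehat{f}_B$ the same computation applies with $P$ replaced by the deterministic diagonal $Q \DefinedAs I + \frac{\sigma^2_\nu}{M}\Lambda_E^{-1}\succeq I$, so $\|Q^{-1}\|\le 1$. The only structural change is the first term: now $Q^{-1}\frac{G^TG}{M} - I = Q^{-1}\bigl(\frac{G^TG}{M} - I - \frac{\sigma^2_\nu}{M}\Lambda_E^{-1}\bigr)\to 0$ in probability by~\eqref{equ:expected_G_T_G_over_M_equals_I}, whence $\ExpectationOfGiven{\|(Q^{-1}\frac{G^TG}{M}-I)\bm{a}\|^2}{\bm{x}}\le \|Q^{-1}\frac{G^TG}{M}-I\|^2\sum_{e=1}^{E}\lambda_e\to 0$ in probability --- note that this term carries no explicit $1/M$ factor and relies purely on the law of large numbers, which is why consistency of $\widehat{f}_B$ when $E$ is also allowed to grow is a more delicate matter. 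The cross-correlation term is treated verbatim, and the noise term is $\le \frac{\sigma^2_\nu}{M}\TraceOf{\frac{G^TG}{M}}\le \frac{\sigma^2_\nu Ek}{M}\to 0$ by \Assumption~\ref{ass:eigenfunctions_are_uniformly_bounded}; this yields~\eqref{equ:f_B_is_asymptotically_efficient}. As an alternative one could simply let $M\to+\infty$ in \Theorem~\ref{thm:bounds_on_errors_of_f_A_and_f_B} along any schedule $\varepsilon=\varepsilon(M)\nearrow 1$ keeping~\eqref{equ:condition_on_varepsilon_for_estimator_A}--\eqref{equ:condition_on_varepsilon_for_estimator_B} in force: then $\kappa\to 0$ and every summand of $\mathrm{Bnd}_{A}$ and $\mathrm{Bnd}_{B}$ other than $\sum_{e>E}\lambda_e$ vanishes, so sandwiching $\mathrm{Err}_{A}$ and $\mathrm{Err}_{B}$ between this limit and the lower bound of \Theorem~\ref{thm:lower_bound_for_generic_estimator_of_f}, via Markov's inequality and the arbitrariness of $\alpha$, again gives convergence in probability.
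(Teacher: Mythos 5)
Your proposal is correct, and it proves the theorem by a genuinely different route than the paper. You carry out a direct bias--variance computation: after the orthogonality reduction $\mathrm{Err}_{A}(\bm{x})=\ExpectationOfGiven{\|\bm{a}-\widehat{\bm{a}}\|_2^{2}}{\bm{x}}+\sum_{e>E}\lambda_{e}$ you split $\widehat{\bm{a}}-\bm{a}$ into the regularization, tail-leakage and noise components, express each conditional expectation as a trace, and kill each trace using the law of large numbers \eqref{equ:expected_G_T_G_over_M_equals_I} (for $P^{-1}$, resp.\ $Q^{-1}\frac{G^{T}G}{M}-I$), \Assumption~\ref{ass:eigenfunctions_are_uniformly_bounded}, and a Markov-inequality bound $\frac{Ek}{M}\sum_{e>E}\lambda_{e}$ on the expected cross-correlation term; all steps check out, including the variance bound $k/M$ that is uniform in the tail index $j$ and the $O_P(1)\cdot o_P(1)$ argument for the middle trace. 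The paper instead does not redo any of this algebra: it simply passes to the limit in the already-proved non-asymptotic bounds \eqref{equ:value_of_bound_A} and \eqref{equ:value_of_bound_B} of \Theorem~\ref{thm:bounds_on_errors_of_f_A_and_f_B} with a suitable choice of $\alpha$ and $\varepsilon$, sandwiches $\mathrm{Err}_{A}$, $\mathrm{Err}_{B}$ against the almost-sure lower bound of \Theorem~\ref{thm:lower_bound_for_generic_estimator_of_f}, and converts the ``conditional bound with probability $1-\delta$'' statement into convergence in probability via the auxiliary \Lemma~\ref{Lemmaq} --- which is precisely the ``alternative'' you sketch in your last sentences. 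What each approach buys: the paper's argument is a few lines long once \Theorem~\ref{thm:bounds_on_errors_of_f_A_and_f_B} (and hence the Chernoff machinery) is in place, whereas yours is self-contained, avoids the matrix-concentration lemma entirely for this asymptotic statement, exhibits the explicit rates $O(M^{-2})$, $O(M^{-1})$ and $O(M^{-1})$ of the vanishing terms, and makes transparent why the $(Q^{-1}\frac{G^{T}G}{M}-I)\bm{a}$ term is the one that obstructs consistency of $\widehat{f}_{B}$ when $E$ grows with $M$.
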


We now discuss the statistical consistency of our estimators.
In this case, the conditions under which $\widehat{f}_{A}$ and $\widehat{f}_{B}$ converge to $f$ as both $E$ and $M$ grow to infinity
are different, as illustrated in the following two results. 
\begin{theorem}
	Given the assumptions in \Section~\ref{ssec:measurements_model} and \Assumption~\ref{ass:eigenfunctions_are_uniformly_bounded},
	\begin{equation}
		\lim_{M \rightarrow +\infty} \ \ \lim_{E \rightarrow +\infty}
		\mathrm{Err}_{A} 
		=
		0
		\qquad
		\text{in probability}
	\label{equ:f_A_is_consistent}
	\end{equation} 
	\label{thm:f_A_consistent}
\end{theorem}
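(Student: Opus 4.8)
The plan is to exploit the fact that, for fixed $E$, Theorem~\ref{thm:f_A_and_f_B_are_asymptotically_efficient} already tells us that $\mathrm{Err}_{A} \to \sum_{e=E+1}^{\infty}\lambda_{e}$ in probability as $M\to\infty$. Taking the inner limit $E\to\infty$ first, as required by the iterated limit in~\eqref{equ:f_A_is_consistent}, I would therefore aim to show that $\lim_{E\to\infty}\lim_{M\to\infty}\mathrm{Err}_{A}$ equals $\lim_{E\to\infty}\sum_{e=E+1}^{\infty}\lambda_{e}$, and then observe that this tail sum vanishes. The latter is immediate: since $K$ is a continuous covariance on the compact $\mathcal{X}$ and $\mu$ is a finite measure, Mercer's theorem gives $\sum_{e=1}^{\infty}\lambda_{e} = \int_{\mathcal{X}} K(x,x)\,d\mu(x) < +\infty$, so the series converges and its tail $\sum_{e=E+1}^{\infty}\lambda_{e}\to 0$ as $E\to\infty$.

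The one subtlety is the interchange of the ``$\lim$'' (over $M$, in probability) with the ``$\lim$'' (over $E$, a deterministic limit of the right-hand side). Here the inner limit over $M$ is performed for each fixed $E$, producing a \emph{deterministic} constant $c_E \DefinedAs \sum_{e=E+1}^{\infty}\lambda_{e}$; the outer operation is then simply $\lim_{E\to\infty} c_E$, which is an ordinary limit of real numbers and requires no probabilistic care. So the cleanest route is: (i) cite Theorem~\ref{thm:f_A_and_f_B_are_asymptotically_efficient} to get $\lim_{M\to\infty}\mathrm{Err}_{A} = c_E$ in probability for every fixed $E$; (ii) note $c_E = \sum_{e=E+1}^{\infty}\lambda_{e}$ is the tail of a convergent series by Mercer's theorem under the standing continuity assumption on $K$; (iii) conclude $\lim_{E\to\infty} c_E = 0$, hence $\lim_{M\to\infty}\lim_{E\to\infty}\mathrm{Err}_{A}=0$ in probability, interpreting the statement exactly as the iterated limit written in~\eqref{equ:f_A_is_consistent}.

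I expect no real obstacle here; the work has essentially all been done in Theorem~\ref{thm:f_A_and_f_B_are_asymptotically_efficient}. If one instead wanted a statement about a \emph{joint} or \emph{diagonal} limit (letting $E=E(M)\to\infty$ together with $M$), that would be genuinely harder and would require control of the rate in Theorem~\ref{thm:bounds_on_errors_of_f_A_and_f_B} — in particular ensuring condition~\eqref{equ:condition_on_varepsilon_for_estimator_A}, i.e. $\frac{Ek}{M}\log(E/\alpha)\to 0$, remains satisfiable, and then checking each of the three terms of $\mathrm{Bnd}_{A}$ in~\eqref{equ:value_of_bound_A} vanishes. But since the theorem as stated is the iterated limit $\lim_{M}\lim_{E}$, that complication does not arise, and the proof reduces to the two observations above.
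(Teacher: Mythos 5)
There is a genuine gap: your argument proves the iterated limit in the \emph{opposite order} from the one the theorem asserts. The statement \eqref{equ:f_A_is_consistent} is $\lim_{M\to\infty}\bigl(\lim_{E\to\infty}\mathrm{Err}_{A}\bigr)=0$, so the inner limit is over $E$ at \emph{fixed, finite} $M$, and the outer limit is over $M$. Your steps (i)--(iii) instead fix $E$, send $M\to\infty$ via Theorem~\ref{thm:f_A_and_f_B_are_asymptotically_efficient} to get $c_E=\sum_{e>E}\lambda_e$, and then send $E\to\infty$; this establishes $\lim_{E\to\infty}\lim_{M\to\infty}\mathrm{Err}_{A}=0$, which is a different statement (your own text is internally inconsistent on this point: you say you take the inner limit $E\to\infty$ first, but the displayed quantity you actually bound is $\lim_{E\to\infty}\lim_{M\to\infty}\mathrm{Err}_A$). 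Iterated limits in the two orders need not coincide without a uniformity argument, and Theorem~\ref{thm:f_A_and_f_B_are_asymptotically_efficient} gives you no information at all about the inner limit the theorem requires, namely the behaviour of $\mathrm{Err}_{A}$ as $E\to\infty$ with $M$ held fixed.

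The paper's proof handles the stated order with a different ingredient: for finite $M$, as $E\to\infty$ the estimator $\widehat{f}_{A}$ coincides with the minimum-variance estimator $\widehat{f}_{\MAP}$ (all eigenfunctions are used, so $\lim_{E\to\infty}\mathrm{Err}_{A}$ is the error of $\widehat{f}_{\MAP}$ with $M$ data points), and then the consistency of $\widehat{f}_{\MAP}$ as $M\to\infty$ is invoked, citing the appendix of \cite{pillonetto_bell__2007__bayes_and_empirical_bayes_semi-blind_deconvolution_using_eigenfunctions_of_a_prior_covariance}. That consistency of the full Gaussian-process estimator under randomly drawn inputs is precisely the nontrivial fact your proposal does not supply; the Mercer/trace-class observation $\sum_e\lambda_e<\infty$ and Theorem~\ref{thm:f_A_and_f_B_are_asymptotically_efficient} are both correct, but they only yield the reversed-order limit. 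To repair the proof as written you would either need to prove (or cite) the consistency of $\widehat{f}_{\MAP}$, or argue explicitly why the two iterated limits agree here, neither of which is done.
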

\begin{theorem}
	Let $E=E(M)$ such that
	\begin{equation}
		E(M) \log E^\delta(M)
		\leq M^\delta, \ \ \ \lim_{M\to +\infty} E(M)=+\infty
		\label{equ:implicit_E_for_consistency_first_version}
	\end{equation}
	for some $\delta\in(0,1)$. Given the assumptions in \Section~\ref{ssec:measurements_model} and \Assumption~\ref{ass:eigenfunctions_are_uniformly_bounded}, then	
	\begin{equation}
		\lim_{\scriptscriptstyle
				\begin{array}{c}
					M \rightarrow +\infty \\
					E = E(M)
				\end{array}}
		\mathrm{Err}_{B} 
		=
		0
		\qquad
		\text{in probability}
		\label{equ:f_B_is_consistent}
	\end{equation} 
	\label{thm:f_A_and_f_B_are_consistent}
\end{theorem}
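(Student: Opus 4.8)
The plan is to instantiate the non-asymptotic bound of \Theorem~\ref{thm:bounds_on_errors_of_f_A_and_f_B} with an $M$-dependent choice of the free parameter $\varepsilon$, show that the resulting $\mathrm{Bnd}_{B}$ vanishes along $E=E(M)$, and then upgrade the high-probability statement on $\overline{\mathrm{Err}}_{B}$ to convergence in probability of $\mathrm{Err}_{B}$. Two preliminary remarks will be used repeatedly. First, since $K$ is continuous on the compact $\mathcal{X}$, Mercer's theorem gives $\sum_{e=1}^{+\infty}\lambda_{e}=\int_{\mathcal{X}}K(x,x)\,d\mu(x)<+\infty$, so the tails $\sum_{e>E}\lambda_{e}$ tend to $0$ as $E\to+\infty$. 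Second, the hypothesis \eqref{equ:implicit_E_for_consistency_first_version}, which reads $\delta\,E(M)\log E(M)\leq M^{\delta}$ with $\delta\in(0,1)$ and $E(M)\to+\infty$, implies both $E(M)/M\to0$ and $E(M)\log E(M)/M\to0$ as $M\to+\infty$.

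Fix an arbitrary confidence level $\alpha\in(0,1)$ and an arbitrary $t>0$, and set $\rho(M)\DefinedAs\frac{E(M)\,k}{M}\log\!\left(\frac{2E(M)}{\alpha}\right)$; the two remarks give $\rho(M)\to0$. I would then choose $\varepsilon=\varepsilon(M)\DefinedAs 1-\sqrt{2\rho(M)}$, which lies in $(0,1]$ for all large $M$ and satisfies $\varepsilon(M)\to1$. Invoking the elementary inequality $1-\varepsilon+\varepsilon\log\varepsilon\geq\tfrac{1}{2}(1-\varepsilon)^{2}$ on $(0,1]$ (both sides vanish at $\varepsilon=1$, and $\log\varepsilon\leq\varepsilon-1$ controls the derivatives), this choice makes \eqref{equ:condition_on_varepsilon_for_estimator_B} hold for all large $M$. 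Hence \Theorem~\ref{thm:bounds_on_errors_of_f_A_and_f_B} applies and, by \Definition~\ref{def:BarE}, produces an event $\mathcal{E}=\mathcal{E}(M)$ with $\ProbabilityOf{\mathcal{E}}\geq1-\alpha$ and $\ExpectationOf{\mathrm{Err}_{B}(\bm{x})\,|\,\mathcal{E}}\leq\mathrm{Bnd}_{B}$, where $\mathrm{Bnd}_{B}$ is evaluated at $\varepsilon=\varepsilon(M)$.

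The core step is then to prove $\mathrm{Bnd}_{B}\to0$ as $M\to+\infty$, treating the four summands of \eqref{equ:value_of_bound_B} separately. Using $\lambda_{e}^{2}/(M\lambda_{e}+\sigma_{\nu}^{2})^{2}\leq M^{-2}$, the first summand is at most $\frac{k\,E(M)}{(1-\alpha)M}\sum_{e>E(M)}\lambda_{e}$; using $\lambda_{e}/(\varepsilon M\lambda_{e}+\sigma_{\nu}^{2})\leq(\varepsilon M)^{-1}$, the second is at most $\frac{\sigma_{\nu}^{2}E(M)}{(1-\alpha)\varepsilon(M)M}$; and the third is $\sum_{e>E(M)}\lambda_{e}$. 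All three vanish because $E(M)/M\to0$, $\varepsilon(M)\to1$, $\alpha$ is fixed and the tail tends to $0$. The delicate term is the fourth, $\kappa\big(\frac{E(M)}{M}\sigma_{\nu}^{2}+\sum_{e=1}^{E(M)}\lambda_{e}\big)$: its bracket is bounded by $\sigma_{\nu}^{2}+\sum_{e}\lambda_{e}<+\infty$ but does \emph{not} vanish, so one must exploit $\kappa\to0$. This holds because $\lambda_{1}^{-1}\sigma_{\nu}^{2}/M\to0$ and $\varepsilon(M)\to1$ force $(\varepsilon(M)+\lambda_{1}^{-1}\sigma_{\nu}^{2}/M)^{-4}\to1$ and $(2-\varepsilon(M))^{2}\to1$, while $(1-\varepsilon(M))^{2}\to0$; hence $\kappa\to0$ and the fourth term vanishes, giving $\mathrm{Bnd}_{B}\to0$.

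Finally, to pass to convergence in probability: since $\mathrm{Err}_{B}\geq0$, conditional Markov's inequality yields $\ProbabilityOf{\mathrm{Err}_{B}(\bm{x})>t\,|\,\mathcal{E}}\leq\mathrm{Bnd}_{B}/t$, and splitting over $\mathcal{E}$ and $\mathcal{E}^{c}$ gives $\ProbabilityOf{\mathrm{Err}_{B}(\bm{x})>t}\leq\mathrm{Bnd}_{B}/t+\alpha$. Letting $M\to+\infty$ along $E=E(M)$ gives $\limsup_{M}\ProbabilityOf{\mathrm{Err}_{B}(\bm{x})>t}\leq\alpha$; since $\alpha\in(0,1)$ was arbitrary the $\limsup$ is $0$, and since $t>0$ was arbitrary this is exactly \eqref{equ:f_B_is_consistent}. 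I expect the main obstacle to be precisely the tension revealed by the fourth summand: the error from the approximation $G^{T}G/M\approx I$ has a floor of size $\sum_{e=1}^{E}\lambda_{e}$ that can be killed only by driving $\varepsilon\to1$, and this is compatible with \eqref{equ:condition_on_varepsilon_for_estimator_B} only when $E(M)\,k\log(E(M)/\alpha)/M\to0$ — which is exactly what the growth restriction \eqref{equ:implicit_E_for_consistency_first_version} secures; the remaining estimates are routine.
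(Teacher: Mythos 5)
Your proposal is correct and takes essentially the same route as the paper's proof: both exploit the non-asymptotic bound $\mathrm{Bnd}_{B}$ of \Theorem~\ref{thm:bounds_on_errors_of_f_A_and_f_B}, drive $\varepsilon(M)\to 1$ at a rate made admissible for \eqref{equ:condition_on_varepsilon_for_estimator_B} through the growth hypothesis \eqref{equ:implicit_E_for_consistency_first_version} together with the same key inequality $1-\varepsilon+\varepsilon\log\varepsilon\geq\tfrac{1}{2}(1-\varepsilon)^{2}$, and observe that the critical fourth term vanishes because $\kappa\to 0$. The only (cosmetic) difference is bookkeeping of the confidence level: you keep $\alpha$ fixed and finish with a conditional Markov bound and the arbitrariness of $\alpha$ (in the spirit of \Lemma~\ref{Lemmaq}), whereas the paper lets $\alpha\to 0$ jointly with $\varepsilon\to 1$; the substance of the argument is the same.
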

\begin{remark}
The sufficient condition required in the theorem in terms of the growth rate of $E(M)$ as a function of $M$ is tight according to the Chernoff's bound. In fact, our requirement is that $M$ grows up a bit more slowly w.r.t.\ the relationship $E \log E=M$. Now, assume instead that $E \log E=M$, i.e., $\delta=1$ and fix any rule such that $\varepsilon \rightarrow 1$ and $\alpha \rightarrow 0$. Recall that 
$$
1 - \varepsilon + \varepsilon \log(\varepsilon)
		\geq
		\frac{Ek}{M} \log\left(\frac{2 E}{\alpha}\right)
$$
	must be satisfied. Asymptotically, the lhs tends to $0^+$ while the second term becomes $k-\frac{Ek}{M} \log(\alpha/2)$ and is larger than $k$ when $\alpha$ is sufficiently close to zero. One would thus need $0 \geq k$ but this is not possible. Also note that the previous theorem implies that any sublinear power growth of $E(M)=M^a$, for any $a\in(0,1)$, satisfies the consistency condition, which can be readily verified by choosing $\delta=\frac{1+a}{2}$. 
\end{remark}

The consistency properties of $\widehat{f}_{A}$ and $\widehat{f}_{B}$ are thus remarkably different. For what regards $\widehat{f}_{A}$, as $M$ goes to infinity its consistency is guaranteed without any control on the growth rate of the dimension $E$. Indeed, as $E$ increases such estimator can approximate arbitrarily well the optimal $\widehat{f}_{\MAP}$. This agrees with what already discussed in the previous subsection: when using $\widehat{f}_{A}$ it is convenient for the network to use a dimension $E$ as large as possible, just compatible with its communication constraints. Differently, the estimator $\widehat{f}_{B}$ is instead consistent only if $M$ augments sufficiently faster than $E$. 

\section{Distributed tuning of the regularization parameter}
\label{sec:distributed_tuning_of_the_hyperparameters_of_widehat_f_a} 

The statistical bounds obtained in the previous section quantify the performance of $\widehat{f}_{A}$ and $\widehat{f}_{B}$ assuming that the prior function model is correct. Beyond their theoretical interest, in real applications these bounds can give useful guidelines to select the amount of information that agents need to exchange. However, the covariance $K$ is often defined only except for a scalar factor $\gamma$. In addition, the prior is never perfect and the tuning of $\gamma$ could also hinder possible undermodeling. So, in place of~\eqref{equ:distribution_of_the_estimand}, in practical applications it is beneficial to consider
\begin{equation}
	f \sim \GaussianDistribution{0}{\gamma^{-1} K}
	\label{equ:distribution_of_the_estimand2}
\end{equation}
with $\gamma$ to be estimated from the observed noisy outputs and related input locations. Furthermore, when $\widehat{f}_{B}$ is considered, it has been shown that also the parameter $E$ plays an important role since, for a fixed number of samples $M$, its performance degrades if $E$ is too small or too large. Hence, it could be desirable to adjust also the number of eigenfunctions forming the estimate after seeing the data.

In the following we will follow the \ac{SURE} approach for tuning the free parameters. Although alternative approaches are possible, such as cross validation and marginal likelihood optimization, we will see that \ac{SURE} has the advantage to require less communication and computation processing, and also to be suitable for distributed implementations. We start by reporting a result obtained through a simple generalization of the arguments in \cite{Hastie09}[Section 7.4].
 
\begin{theorem}
Let $\bm{\eta}$ be a deterministic unknown parameter vector. Assume that the measurements model is
$$
\bm{z} = \bm{\eta}+ \bm{e}
$$
and consider also future measurements 
$$
\bm{z}^{\ast} = \bm{\eta} + \bm{e}^{\ast}
$$ 
where the noises $\bm{e}$ and $\bm{e}^{\ast}$ are uncorrelated, zero mean with covariance $\Sigma$. Then, given the linear estimator $\widehat{\bm{z}}=S \bm{z}$, an unbiased estimator of the risk $\ExpectationOf{\| \bm{z}^{\ast} -\widehat{\bm{z}}\|^2}$ is given by:
\begin{equation}
	\left\|
		\bm{z} - \widehat{\bm{z}}
	\right\|^{2}
	+
	2 \TraceOf{S \Sigma} .
	\label{equ:sure_general}
\end{equation}
\end{theorem}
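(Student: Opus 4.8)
The statement is a classical Stein/SURE-type identity, and the proof is a direct computation. The plan is to expand the risk $\ExpectationOf{\|\bm{z}^{\ast}-\widehat{\bm{z}}\|^{2}}$ by adding and subtracting $\bm{\eta}$, expand the in-sample residual $\ExpectationOf{\|\bm{z}-\widehat{\bm{z}}\|^{2}}$ in the same way, and compare the two expressions term by term. The only quantities that differ are cross terms involving the in-sample noise $\bm{e}$, which are handled using $\widehat{\bm{z}}=S\bm{z}=S(\bm{\eta}+\bm{e})$ and the fact that $\bm{e}^{\ast}$ is uncorrelated with $\bm{e}$ and zero mean.

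\textbf{Key steps.} First I would write $\bm{z}^{\ast}-\widehat{\bm{z}} = (\bm{\eta}-\widehat{\bm{z}}) + \bm{e}^{\ast}$ and, using $\ExpectationOf{\bm{e}^{\ast}}=0$ together with the independence (uncorrelatedness) of $\bm{e}^{\ast}$ from everything in $\widehat{\bm{z}}$, obtain
\begin{equation}
\ExpectationOf{\|\bm{z}^{\ast}-\widehat{\bm{z}}\|^{2}}
=
\ExpectationOf{\|\bm{\eta}-\widehat{\bm{z}}\|^{2}}
+
\TraceOf{\Sigma}.
\end{equation}
Second, I would write $\bm{z}-\widehat{\bm{z}} = (\bm{\eta}-\widehat{\bm{z}}) + \bm{e}$ and expand
\begin{equation}
\ExpectationOf{\|\bm{z}-\widehat{\bm{z}}\|^{2}}
=
\ExpectationOf{\|\bm{\eta}-\widehat{\bm{z}}\|^{2}}
+
2\,\ExpectationOf{\bm{e}^{T}(\bm{\eta}-\widehat{\bm{z}})}
+
\TraceOf{\Sigma}.
\end{equation}
Third, I would compute the cross term: since $\bm{\eta}$ is deterministic and $\ExpectationOf{\bm{e}}=0$, only $\widehat{\bm{z}}=S\bm{\eta}+S\bm{e}$ contributes, giving $\ExpectationOf{\bm{e}^{T}(\bm{\eta}-\widehat{\bm{z}})} = -\ExpectationOf{\bm{e}^{T}S\bm{e}} = -\TraceOf{S\Sigma}$. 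Combining the three displays eliminates the common term $\ExpectationOf{\|\bm{\eta}-\widehat{\bm{z}}\|^{2}}$ and the two $\TraceOf{\Sigma}$ terms cancel, yielding
\begin{equation}
\ExpectationOf{\|\bm{z}^{\ast}-\widehat{\bm{z}}\|^{2}}
=
\ExpectationOf{\|\bm{z}-\widehat{\bm{z}}\|^{2}}
+
2\,\TraceOf{S\Sigma},
\end{equation}
which says exactly that \eqref{equ:sure_general} is an unbiased estimator of the out-of-sample risk.

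\textbf{Main obstacle.} There is essentially no deep obstacle; the computation is elementary once the noise assumptions are used correctly. The one point requiring a little care is the cross term $\ExpectationOf{\bm{e}^{T}(\bm{\eta}-\widehat{\bm{z}})}$: one must use linearity of the estimator ($\widehat{\bm{z}}=S\bm{z}$) so that the dependence on $\bm{e}$ is explicit, and then invoke $\ExpectationOf{\bm{e}\bm{e}^{T}}=\Sigma$ to get $\ExpectationOf{\bm{e}^{T}S\bm{e}}=\TraceOf{S\Sigma}$ via the trace cyclic property. Linearity of $S$ is what makes this identity exact (as opposed to the general Stein formula with a divergence term), so it is worth flagging that this is where the hypothesis $\widehat{\bm{z}}=S\bm{z}$ is essential.
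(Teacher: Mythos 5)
Your computation is correct, and it is essentially the argument the paper relies on: the paper does not spell out a proof but refers to the optimism-of-the-training-error derivation in Hastie et al.\ (Section 7.4), which is exactly your decomposition around $\bm{\eta}$, with the in-sample cross term $\ExpectationOf{\bm{e}^{T} S \bm{e}} = \TraceOf{S\Sigma}$ accounting for the gap between $\ExpectationOf{\| \bm{z} - \widehat{\bm{z}} \|^{2}}$ and $\ExpectationOf{\| \bm{z}^{\ast} - \widehat{\bm{z}} \|^{2}}$. Your remarks on where linearity of $S$ and the uncorrelatedness of $\bm{e}$ and $\bm{e}^{\ast}$ enter are exactly the right points of care, so nothing is missing.
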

\noindent The quantity $\TraceOf{S \Sigma}$ entering the second part of the objective \eqref{equ:sure_general} is connected to the concept of \emph{equivalent degrees of freedom} \cite{Mackay:92,DeNicolao1}.

In what follows, we assume that $\gamma$ is unknown but belongs to the finite set $\Gamma$ which is known in advance to the network. In addition, let us assume that the estimation step has been performed adopting a certain value $E$. Hence, if $\widehat{f}_{A}$ has been used, each agent has stored $\frac{G^{T} G}{M}$ and $\frac{G^{T}}{M}y$ so that, letting
$$
H_{A}(\gamma)
\DefinedAs
\left( \frac{G^{T} G}{M} + \frac{\gamma \sigma^{2}_{\nu}}{ M} \Lambda^{-1}_{E} \right)^{-1} \frac{G^{T}}{M},
$$
it can compute $H_Ay$ for any $\gamma \in \Gamma$.

 If $\widehat{f}_{B}$ has been adopted, then also the optimal number of eigenbases $E'$ has to be found within the set $E' \in \Omega$. In this case, each agent knows only $\frac{G^{T}}{M}y$ and, letting 
\begin{equation}
	H_{B}(\gamma,E')
	\DefinedAs
	\mathcal{I}_{E'} \left( I + \frac{\gamma \sigma^{2}_{\nu}}{M} \Lambda^{-1}_{E} \right)^{-1} \frac{G^{T}}{M},
	\label{equ:definition_of_H_B_gamma_E_prime}
\end{equation}
where
\begin{equation}
	\mathcal{I}_{E'}
	\DefinedAs
	\begin{bmatrix}
		I_{E'} & \\
		& \bm{0}_{E - E'}\\
	\end{bmatrix},
	%
	%
	%
	\label{equ:definition_of_mathcal_I_of_E}
\end{equation}
it can compute $H_By$ for any $\gamma \in \Gamma$ and integer $E' \in \Omega$.

\subsection{Distributed \ac{SURE} for $\widehat{f}_{A}$: tuning of $\gamma$}
\label{ssec:distributed_sure_for_widehat_f_a}

The first strategy is suited for $\widehat{f}_{A}$. Surprisingly, we will see that the tuning of $\gamma$ can be performed by the network using only local operations, without the need of performing any additional consensus operation. Now, let us reconsider our measurements model 
$$
\bm{y} = G \bm{a} + Z \bm{b} + \bm{\nu}
$$
where $\bm{a}$ is $E$-dimensional. Hereby, we break away from the assumptions on prior correctness by thinking of $G \bm{a} + Z \bm{b}$ as a deterministic vector. It thus corresponds to the deterministic function $f$ sampled on the realizations of the input locations.


We then create a (projected) measurement model via pre-multiplication by $G^{T}/M$, i.e.,
\begin{equation}
	\underbrace{\frac{G^{T} \bm{y}}{M}}_{\bm{z}} 
	=
	\underbrace{\frac{G^{T} G}{M} \bm{a}
	+
	\frac{G^{T} Z}{M} \bm{b}}_{\bm{\eta}}
	+
	\underbrace{\frac{G^{T} \bm{\nu}}{M}}_{\bm{e}}
	\label{equ:measurement_model_alternative_compact_SURE} 
\end{equation}
where the correspondences with the key quantities defining the risk estimator \eqref{equ:sure_general} have been pointed out. From such definitions, we also obtain $\widehat{\bm{z}} = \frac{G^{T} G}{M} H_{A} \bm{y} = S \bm{z}$ where 
\begin{equation}
	S
	\DefinedAs
	\frac{G^{T} G}{M} \left( \frac{G^{T} G}{M} + \frac{\gamma \sigma^{2}_{\nu}}{ M} \Lambda^{-1}_{E} \right)^{-1} 
	\label{equ:S_for_distributed_SURE}
\end{equation}
and
\begin{equation}
	\Sigma = \sigma^{2}_{\nu} \frac{G^{T} G}{M^{2}}. 
	\label{equ:Sigma_for_distributed_SURE}
\end{equation}
Recall that the matrix $V=\frac{G^{T} G}{M}=\frac{1}{M}\sum_{m=1}^M G_m^T G_m$ and the vector $\bm{z} = \frac{1}{M}\sum_{m=1}^M G_m^T y_m$ have been already computed by each agent via a distributed consensus algorithm~\cite{garin2010survey} to implement $\widehat{f}_{A}$. 
Then, since the network cardinality $M$ is known, each agent can tune $\gamma$ by optimizing the \ac{SURE} score \eqref{equ:sure_general} connected with the prediction risk on the future data $\bm{z}^{\ast} =\frac{G^{T} \bm{y}^{\ast}}{M}$, i.e.,
\begin{equation}
	\widehat \gamma_A
	=
	\arg \min_{\gamma\in \Gamma} J_{A} \left( \gamma \right)
	\label{equ:SURE_A}
\end{equation}
with
\begin{equation}
	\begin{array}{rl}
	J_{A} \left( \gamma \right)
	\DefinedAs &
	 \left\| (I-S) \bm{z} \right\|^2 + 2 \TraceOf{S \Sigma}\\
	= &
	\displaystyle
	\left\|
		\frac{\gamma \sigma^{2}_{\nu}}{M}
		\left( V\Lambda_{E} + \frac{\gamma \sigma^{2}_{\nu}}{ M} I \right)^{-1} \bm{z}
	\right\|^2 + \\
	&
	\displaystyle
	+ \frac{2 \sigma^{2}_{\nu}}{M} \TraceOf{V^2 \left( V + \frac{\gamma \sigma^{2}_{\nu}}{ M} \Lambda^{-1}_{E} \right)^{-1}}. \\
	\label{equ:definition-of-J-A}
	\end{array}
\end{equation}
%

To understand the rationale underlying this strategy we have just to consider that the novel process~\eqref{equ:measurement_model_alternative_compact_SURE} is formed by $E$ measurements, each corresponding to the projection of the original ones on the space of the sampled eigenfunctions $\left[ \phi_{e}(x_{1}) \; \cdots \; \phi_{e}(x_{M}) \right]$. For large $M$, the quantity $G^{T} Z {\bm{b}}$ vanishes so that $\bm{\eta} \approx \bm{a}$. This means that the \ac{SURE} score becomes an unbiased estimator of those signal components which are expected to capture the most part of the energy. 

\begin{remark}
	Based on the previous analysis, it is straightforward to observe that the \ac{SURE} strategy described above is not suited for $\widehat{f}_{B}$. In fact, it requires each agent to know $\frac{G^{T} G}{M}$. But if this quantity were known, each agent could implement $\widehat{f}_{A}$, an estimator that has more favorable features than $\widehat{f}_{B}$. 
\end{remark}

\subsection{Distributed \ac{SURE} for $\widehat{f}_{B}$: tuning of $E$ and $\gamma$}
\label{ssec:distributed_sure_for_widehat_f_b}

The second strategy is designed for $\widehat{f}_{B}$. It tunes $\gamma \in \Gamma$ and $E' \in \Omega$ just using an additional average consensus on a vector of size $E \dim(\Omega)\dim(\Gamma)$. Our starting point is still \eqref{equ:measurement_model_alternative_compact_SURE}, i.e., the $E$-dimensional projected measurement space, where $\bm{z} = \frac{G^{T} \bm{y}}{M} = \frac{1}{M}\sum_{m=1}^M G_m^Ty_i \in \Reals^E$ has been computed to implement $\widehat{f}_{B}$ via a standard distributed consensus algorithm and is therefore known to each agent. Let us define $\widehat{a}(\gamma,E')=H_{B}(\gamma,E') \bm{y} \in \Reals^E$. Clearly $\widehat{a}(\gamma,E')$ for $E'<E$ is simply the truncated version of $\widehat{a}(\gamma,E)$ where the last $E-E'$ components are set to zero. Moreover, the vectors $\widehat{a}(\gamma,E')$ can be independently computed by each agent for each value of $\gamma\in \Gamma,E'\in \Omega$ once $\bm{z}$ is available. The output prediction can be written as 
$$
	\widehat{\bm{z}}(\gamma,E')
	=
	\frac{G^{T} G}{M} \widehat{a}(\gamma,E')= \frac{1}{M}\sum_{m=1}^MG_m^TG_m \widehat{a}(\gamma,E')\in\Reals^E.
$$
Hence, each agent can compute the vectors $\widehat{\bm{z}}(\gamma,E')$ for each $\gamma \in \Gamma$ and $E' \in \Omega$ by running an additional consensus of size $O(\dim(\Omega) \dim(\Gamma) E)$. As so, the first part of the \ac{SURE} score $\left\| \bm{z} - \widehat{\bm{z}}(\gamma,E') \right\|^{2}$ can be readily computed by each agent.

As for the second part of \ac{SURE} related to the equivalent degrees of freedom, we need to compute $\TraceOf{S(\gamma,E') \Sigma}$ where
\begin{equation}
	S(\gamma,E')
	= 
	\frac{G^{T} G}{M} \mathcal{I}_{E'} \left( I + \frac{\gamma \sigma^{2}_{\nu}}{M} \Lambda^{-1}_{E} \right)^{-1}, \ \ \ \Sigma = \frac{\sigma_{\nu}^2}{M}\frac{G^TG}{M}.
	\label{equ:S_for_distributed_SURE2}
\end{equation}
Obviously, this would not make too much sense in the context of $\widehat{f}_B$ since the computation of $V=\frac{G^TG}{M}$ would allow us to compute $\widehat{f}_A$ which has better performance anyways. Therefore we will approximate such matrix $V$ (similarly to what we did to obtain $\widehat{f}_B$) by replacing it with an identity matrix. This corresponds to use a sort of \emph{expected equivalent degrees of freedom}:
$$
\begin{array}{lll}
	\TraceOf{ \rule{0cm}{0.4cm} S(\gamma,E') \Sigma}
	&\approx&
	\displaystyle
	\frac{\sigma^{2}_{\nu}}{M}\TraceOf{\mathcal{I}_{E'} \left( I + \frac{\gamma \sigma^{2}_{\nu}}{M} \Lambda^{-1}_{E} \right)^{-1}} \\
	&=&
	\displaystyle
	\frac{\sigma^{2}_{\nu}}{M} 
	\sum_{e=1}^{E'} \frac{\lambda_e}{\lambda_e + \gamma \sigma^{2}_{\nu}/M} . \\
\end{array}
$$
The optimal tuning of the parameter is then obtained as
\begin{equation}
	\left( \widehat \gamma_B, \widehat{E}_B \right)
	=
	\operatorname*{\mathrm{argmin}}_{\gamma \in \Gamma,E'\in \Omega}
	J_{B} \left( \gamma, E' \right)
	\label{equ:SURE_B}
\end{equation}
with
\begin{equation}
	J_{B} \left( \gamma, E' \right)
	\DefinedAs
	\left\| \bm{z} - \widehat{\bm{z}}(\gamma,E') \right\|^{2}
	+
	2
	\frac{\sigma^{2}_{\nu}}{M}
	\sum_{e=1}^{E'} \frac{\lambda_e}{\lambda_e+ \gamma \sigma^{2}_{\nu}/M} .
	\label{equ:definition-of-J-B}
\end{equation}
Note that this strategy for tuning $\widehat{f}_B$ is more efficient from a communication and computational point of view than $\widehat{f}_A$ only if $\dim(\Omega) \dim(\Gamma)<E$.

\subsection{Practical implementation issues}
\label{InPractice}

We now illustrate how to implement the proposed distributed estimators, also in connection with the properties of the \ac{SURE} tuning strategies described above. We discuss first the use and the derivation of the \ac{KL} expansion and then how $f$ can be estimated in a distributed way also adopting basis functions different from the kernel eigenfunctions. All the code developed for implementing the algorithms below is publicly available in the repository \url{github.com/damianovar/Gaussian-regression-via-finite-dimensional-approximations}.

\subsubsection{Computing the \ac{KL} expansions}
\label{ssec:explicit_kl_expansions}

Assume that the prior on $f$ is correct and that the input locations distribution $\mu$ is known. Then, according to \Theorem~\ref{thm:f_A_and_f_B_are_asymptotically_efficient}, at least for large data set size $M$, the use of the eigenfunctions in \eqref{equ:eigenexpansion_of_K} is statistically optimal. Obtaining the kernel expansion in closed form is in general difficult but important exceptions are the popular spline and Gaussian kernel. In particular, for uniform $\mu$ the expansion of the linear and cubic smoothing spline kernel is reported in~\cite{Bell2004}. For Gaussian $\mu$ on the real line, the Gaussian kernel expansion is given via Hermite polynomials, as reported in~\cite{zhu_et_al__1998__gaussian_regression_and_optimal_finite_dimensional_linear_models}[Section 4]. Such result then immediately generalizes to multi-dimensional domains: if $\mu(\cdot)$ and $K(\cdot, \cdot)$ are tensor products of one-dimensional distributions and kernels, respectively, the expansion involves tensor products of the one-dimensional eigenfunctions.

Assume then that the kernel expansion is not available in closed form. It is worth pointing out that in many relevant distributed problems the dimension of the function domain $\mathcal{X}$ is limited to 2 or 3, and this makes the numerical determination of the eigenfunctions and eigenvalues viable. More specifically, let $\{\widetilde{x}_{e}\}_{e=1}^q$ be independent samples from $\mu$, and let $\bm{K}$ be the $q \times q$ kernel matrix whose $(i,j)$-entry is \vspace{0.1cm} 
\begin{equation}
	[\bm{K}]_{ij}
	=
	K \left( \widetilde{x}_i, \widetilde{x}_j \right),
	\quad i=1,\ldots,q, \quad j=1,\ldots,q
	\vspace{0.1cm}
	\label{equ:KernelMatrix}
\end{equation}
Then, according to~\cite{de1999consistent}[Lemma 9 and Corollary 10], the eigenvalues and (normalized) eigenvectors from the \ac{SVD} of $\bm{K}$ converge to the eigenvalues and eigenfunctions of $K(\cdot, \cdot)$ as $q \rightarrow +\infty$. Hence, the agents can be equipped with arbitrarily accurate approximations of the \ac{KL} expansion.

\subsubsection{Generic basis functions: Kernel sections}
\label{ssec:using_kernel_sections}

As discussed above, in some circumstances the kernel eigenfunctions could be not available in closed form, or have a complex functional form that makes storing them in the agents' memory unpractical. In such cases, one would rather use basis functions which admit simple closed-form expressions, possibly also non orthonormal. Even if the bounds developed in \Section~\ref{sec:statistical_characterizations_of_widehat_f_a_and_widehat_f_b} cannot be used anymore, we will see that the SURE strategies for hyperparameters tuning generalize well also to this situation.

We limit our discussion to the use of the kernel sections as basis (an important case also in view of their connections with the representer theorem~\cite{Kimeldorf70,Scholkopf01}). This basis is associated to a set $\{\widetilde{x}_{e}\}_{e=1}^E$ of input locations\footnote{As in \eqref{equ:KernelMatrix}, the set $\{\widetilde{x}_{e}\}_{e=1}^q$ is available a priori and has not to be confused with the input locations $\{x_{m}\}_{m=1}^M$ then visited by the agents.} which could be drawn from $\mu$ or selected in a deterministic way to cover sufficiently well $\mathcal{X}$. We then define our $E$ basis functions as
\begin{equation}
	\phi_{1}(\cdot)
	=
	K \left( \widetilde{x}_1, \cdot \right)
	\qquad \ldots \qquad
	\phi_{E}(\cdot)
	=
	K \left( \widetilde{x}_E, \cdot \right).
	\label{equ:KernelSections}
\end{equation}
Using the kernel sections in the decomposition~\eqref{equ:estimand_model}, we can think of $f_{a}$ as
\begin{equation}
	f_{a}(x) = \sum_{e = 1}^{E} a_{e} K \left( \widetilde{x}_e, x \right)
\end{equation}
where the vector $\bm{a} \DefinedAs \left[ a_{1}, \ldots, a_{E} \right]^{T}$ is now zero-mean Gaussian with covariance proportional to the inverse of the kernel matrix 
\begin{equation}
	[\bm{K}]_{ij}
	=
	K \left( \widetilde{x}_i, \widetilde{x}_j \right),
	\quad i=1,\ldots,E, \quad j=1,\ldots,E
	\vspace{0.1cm}
\end{equation}
i.e.,
$$
\bm{a} \sim \GaussianDistribution{0}{\gamma^{-1} \bm{K}^{-1}}.
$$
In fact, if the prior were correct, this would indeed correspond to see $f_{a}$ sampled on $\{\widetilde{x}_{e}\}_{e=1}^E$ as zero-mean Gaussian with covariance $\gamma^{-1} \bm{K}^{-1}$.

Since the kernel sections are generally not orthonormal w.r.t.\ $\mu$, even if $M \rightarrow \infty$ the projected measurements $\frac{G^{T} \bm{y}}{M}$ do not converge to the expansion coefficients $a_{e}$. However, these can be still used to tune the regularization parameters. In particular, for what concerns $\widehat{f}_{A}$, the distributed \ac{SURE} estimator introduced in \Section~\ref{ssec:distributed_sure_for_widehat_f_a} can estimate $f$ and $\gamma$ with a single consensus just replacing $\Lambda^{-1}_{E}$ with $\bm{K}$. Thus, this estimator does not even need the knowledge of $\mu$ and the agents can implement it once they know the function $K\left( \cdot, \cdot \right)$ and the expansion grid $\{\widetilde{x}_{e}\}_{e=1}^E$. The estimator $\widehat{f}_{A}$ is thus given by:
$$ \widehat{\bm{a}}(\gamma)
	\DefinedAs
	\left( \frac{G^{T} G}{M} + \frac{\sigma^{2}_{\nu}}{M} \bm{K} \right)^{-1} \frac{G^{T} \bm{y}}{M} 
 $$
$$S(\gamma)
	\DefinedAs
	\frac{G^{T} G}{M} \left( \frac{G^{T} G}{M} + \frac{\gamma \sigma^{2}_{\nu}}{ M} \bm{K} \right)^{-1} $$

Consider now the implementation of $\widehat{f}_{B}$ through the kernel sections with the estimator defined by the set of potential $E' \in \Omega$. In particular, for the sake of simplicity, assume that each $E'$ is associated to the kernel sections induced by the first $E'$ input locations in the (ordered) set $\{\widetilde{x}_{e}\}_{e=1}^E$. Given a generic matrix $A$, the submatrix obtained by retaining its first $E'$ rows and columns is denoted by $[A]_{E'}$. Assume moreover that the same notation applies to vectors to retain only their first $E'$ elements. Then, the same \ac{SURE} strategy developed in \Section~\ref{ssec:distributed_sure_for_widehat_f_b} can be adopted by setting
\begin{equation}
	\widehat{\bm{a}} \left( \gamma, E' \right)
	=
	\begin{bmatrix}
		I_{E'} \\
		\bm{0} \\
	\end{bmatrix}
	\left(
		\left[\mathbb{E} \frac{G^{T} G}{M} \right]_{E'}
		+
		\frac{\gamma \sigma^{2}_{\nu}}{M} [\bm{K}]_{E'}
	\right)^{-1} \!\!
	\left[ \frac{G^{T} \bm{y}}{M} \right]_{E'}
	\label{equ:widehat_a_gamma_E_prime_with_kernel_sections}
\end{equation}
and
\begin{equation}
	S(\gamma,E')
	=
	\left[\mathbb{E} \frac{G^{T} G}{M} \right]_{E'}
	\begin{bmatrix}
		I_{E'} \\
		\bm{0} \\
	\end{bmatrix}
	\left(
		\left[\mathbb{E} \frac{G^{T} G}{M} \right]_{E'}
		+
		\frac{\gamma \sigma^{2}_{\nu}}{M} [\bm{K}]_{E'}
	\right)^{-1}
	\vspace{0.2cm} 
	\label{equ:S_gamma_E_prime_with_kernel_sections}
\end{equation}
where, instead of using $\mathcal{I}_{E'}$ defined in~\eqref{equ:definition_of_mathcal_I_of_E}, we use $I_{E'}$ and $\bm{0} \in \Reals^{(E - E') \times E'}$ to account for the non-diagonal nature of the matrices now at stake, with the trace of $S$ given by the sum of the its $(i,i)$ entries with $i=1,\ldots,E'$, and where $\ExpectationOf{\frac{G^{T} G}{M}}$ substitutes $I$ in~\eqref{equ:definition_of_H_B_gamma_E_prime}, since the kernel sections are generally not orthonormal. The exact expectation of $\ExpectationOf{\frac{G^{T} G}{M}}$ can be explicitly computed in some special cases as in the example below, or can be approximated via its sampled version, i.e., $\ExpectationOf{\frac{G^{T} G}{M}}=\ExpectationOf{\sum_{m=1}^M\frac{G_m^{T} G_m}{M}} \approx \frac{1}{E}\sum_{e=1}^E\frac{\widetilde{G}_e^{T} \widetilde{G}_e}{E}$ where $\widetilde{G}_e$ are computed on the input locations $\{\widetilde{x}_{e}\}_{e=1}^E$ that shall be used for computing this empirical expectation, not to be confused with the set of a-posteriori input locations $\{{x}_{m}\}_{m=1}^M$ used in the actual coefficients estimation step.

\begin{example}
	An interesting case, relevant for many applications, arises when one wants to use the Gaussian kernel and its kernel sections as basis with $\mu$ a mixture of Gaussians. In this case $\ExpectationOf{\frac{G^{T} G}{M}}$ can be obtained in closed form. In fact, consider first a scalar scenario, i.e., $x\in\mathbb{R}$, with a mixture of Gaussians made of a single component:\vspace{0.2cm}
$$
	K(x,x') = \exp \left( -\frac{(x-x')^2}{\eta} \right),
	\;
	\mu \sim \mathcal{N}(\mu_0,a^2).
	\vspace{0.2cm}
$$
After simple computations, one finds that the $(e,e')$-entry of $\ExpectationOf{\frac{G^{T} G}{M}}$ is
\begin{equation}
	\begin{array}{l}
		\displaystyle
		\int_{-\infty}^{+\infty}
		\frac
		{\exp \left(
			-\frac{(x-x_e)^2}{\eta}
			-\frac{(x-x_{e'})^2}{\eta}
			-\frac{(x-\mu_0)^2}{2a^2}
		\right)}
		{\sqrt{2 \pi}a} dx =
		\vspace{0.2cm} \\
		\displaystyle
		\qquad\qquad =
		\frac{\sqrt{\eta}}{\sqrt{\eta+4a^2}}
		\exp \left( -\frac{\star}{\eta^2+4\eta a^2} \right)
	\end{array}
	\vspace{0.2cm}
\end{equation}
with $\star = \eta \left( x_e^2-2\mu_0 x_e+x_{e'}^2-2\mu_0 x_{e'}+2\mu^2 \right) + 2a^2 \left( x_e-x_{e'} \right)^2$. In the multivariate case, assume that $K(x,x') = e^{-\frac{\|x-x'\|^2}{\eta}}$ while $\mu$ is given by tensor products of one-dimensional Gaussian densities. Then, the result is still available in closed form: $\ExpectationOf{\frac{G^{T} G}{M}}$ corresponds to convex combinations of Hadamard products of the matrices obtained in the scalar case.
\end{example}

\subsubsection{Generic basis functions: Nystr{\"o}m method}
\label{sssec:application_of_the_distributed_tuning_algorithms_to_the_nystrom_method}

The analysis provided in the previous section can also be applied to the popular Nystr{\"o}m method \cite{Drineas:2005,Yang:12}. The idea is to find a basis for $f$ of dimension $E$ which has almost the same performance of the basis composed of $q \gg E$ kernel sections with $q$ an arbitrary number as in \Section~\ref{ssec:explicit_kl_expansions}. More specifically, let $\{\widetilde{x}_{n}\}_{n=1}^q$ be $q$ input location defined a-priori\footnote The Nystr{\"o}m method closely resembles the eigenfunctions/eigenvalues numerical computation method presented in Section~\ref{ssec:explicit_kl_expansions}, the difference being that in Nystr{\"o}m the parameter $q$ is in general not extremely large and the input locations $\{\widetilde{x}_{n}\}_{n=1}^q$ are not generated by $\mu$ but are randomly extracted from the training set. from $\mu$, and consider both the corresponding kernel matrix $\bm{K} \in \Reals^{q \times q}$ defined in~\eqref{equ:KernelMatrix} and its \ac{SVD} decomposition $\bm{K} = V D_q V^T$ with $V \DefinedAs [v_1, \ldots, v_q]$ the orthonormal eigenvectors of $\bm{K}$ and $D_q$ the diagonal matrix formed by the corresponding eigenvalues of $\bm{K}$ sorted in non-increasing order. If $V_E \DefinedAs [v_1, \ldots, v_E]$ and $D_E \in \Reals^{E\times E}$ is the diagonal matrix with the first $E$ sorted eigenvalues of $\bm{K}$, then $\bm{K}_E = V_E D_E V_E^T$ is the best rank-$E$ approximation of $\bm{K}$. The a priori basis 
$$ 
\phi_e(x) \DefinedAs \sum_{n=1}^q v_e(n) K(\widetilde{x}_{n},x), \quad e = 1,\ldots,E
\vspace{0.1cm} 
$$
with $v_e(n)$ the $n$-th element of the vector $v_{e}$, can then be used to define 
$$f_{a}(x) = \sum_{e = 1}^{E} a_{e} \phi_e(x)$$
where $\bm{a} \DefinedAs \left[ a_{1}, \ldots, a_{E} \right]^{T}$ is a zero-mean Gaussian vector with 
$$
	\bm{a}
	\sim
	\GaussianDistribution{0}{\gamma^{-1} \left( V_E^T \bm{K} V_E \right)^{-1}}
	=
	\GaussianDistribution{0}{\gamma^{-1} D_E^{-1}}.
$$
We can then use once again~\eqref{equ:definition_of_G_and_Z} to build $G$ using the $\phi_e$'s above, and exploit the same strategies considered in \Section~\ref{ssec:using_kernel_sections} just replacing $\bm{K}$ with $D_E$.

\subsection{Numerical study on synthetic data}
\label{SURE:test}

Let us consider the same data generators based on the spline and the exponentially decaying kernels described in \Section~\ref{exa:increasing_E_can_lead_to_a_performance_loss}. The unknown function has to be reconstructed from $M=10000$ measurements by $\widehat{f}_{A}$ and $\widehat{f}_{B}$. The errors are still the \acp{MSE} defined in \eqref{MSEA} and \eqref{MSEB} normalized by the prior variance (the same definition was used to build \Figure~\ref{fig:bounds}). The difference however is that our estimators now depend on unknown hyperparameters that need to be inferred from data. More specifically, when $\widehat{f}_{A}$ is adopted we fix $E=400$ and the regularization parameter is searched over a grid $\Gamma$ containing 50 logarithmically spaced values between $10^{-3}$ and $10^3$. When using $\widehat{f}_{B}$ the grid $\Gamma$ contains only the three values $\{10^{-3},0,10^3\}$ while $E$ is estimated from data over $\Omega=\{1,5,10,20,50,100,200,300,400\}$. We still consider a Monte Carlo study of 1000 runs where at any run independent realizations of $f$, of the $M$ input locations and of the measurement noises are generated. Hyperparameters tuning is then performed by:
\begin{itemize}
	\item ``$\widehat{f}_{A}+\,$oracle'' and ``$\widehat{f}_{B}+\,$oracle'', where ``oracle'' indicates that these approaches know at any run the realization of $f$ (which is the object to estimate) and select exactly those hyperparameters that minimize the \ac{MSE} achievable by those two estimators. For instance, assume that $f=\sum_{e=1}^{\infty} \ a_e \phi_{e}$ is the realization of the function at a certain run. Let also $\widehat{a}(\gamma,E')$ denote the vector with the estimates of the first $E'$ coefficients $a_e$ returned by $\widehat{f}_{B}$. Then $\widehat{f}_{B}+\,$oracle determines the hyperparameters as
	$$
		\left( \widehat{\gamma}, \widehat{E} \right)
		\DefinedAs
		\arg\min_{\gamma \in \Gamma,E' \in \Omega} \
		\sum_{e=1}^{\infty} \
		\big( a_e - \widehat{a}_e(\gamma,E') \big)^2
	$$
	where $\widehat{a}_e(\gamma,E') \DefinedAs 0$ for $e>E'$. Thus, this estimator is not implementable in practice and provides the lower bound on the \acp{MSE} \eqref{MSEA} and \eqref{MSEB} achievable by the two estimators; 
\item ``$\widehat{f}_{A}+\,$SURE'' and ``$\widehat{f}_{B}+\,$SURE'', where the hyperparameters tuning step is performed following the \ac{SURE} approaches described in the previous subsection. Recall that ``$\widehat{f}_{A}+\,$SURE'' requires only a single consensus on a vector of size $O(E^2)$ to obtain simultaneously both the hyperparameters and function estimates, while ``$\widehat{f}_{B}+\,$SURE'' requires two consensus operations of size $O(E)$.
\end{itemize}

\Figure~\ref{fig:SURE} compares with a scatter-plot the various (normalized) \acp{MSE} obtained by the oracle- and \ac{SURE}-based approaches as a function of the Monte Carlo run. Remarkably, \ac{SURE}'s performance (dashed lines) is very close to that of the oracle (solid lines). When using ``$\widehat{f}_{B}+\,$SURE'' (right panels) the curves are in practice indistinguishable.

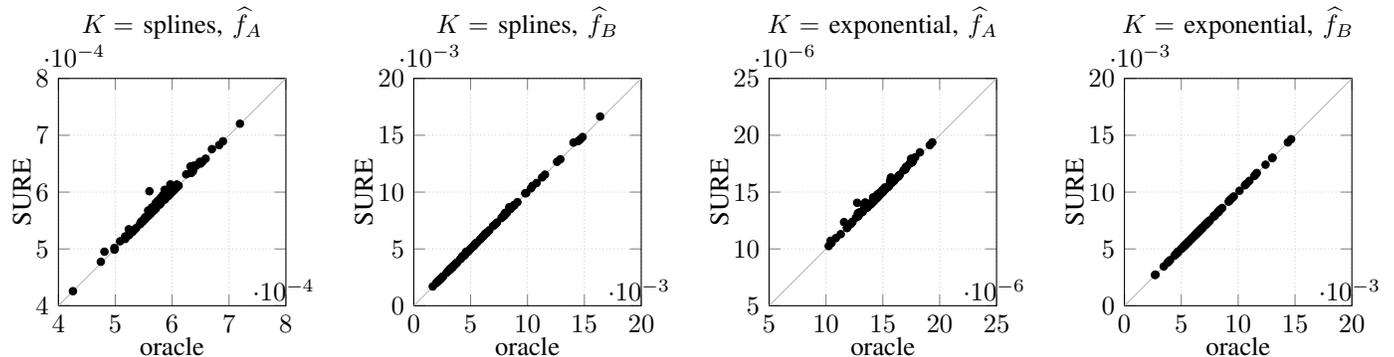
\begin{figure*}
	\centering
	\pgfplotsset
{
	AxisStyle/.style =
	{
		width					= 0.52\columnwidth,
		height					= 0.52\columnwidth,
		xlabel near ticks,
		ylabel near ticks,
		xlabel shift			= -5,
		ylabel shift			= -5,
		scaled ticks			= base 10:+4,
		every x tick scale label/.style = {at={(0.99,0.07)}},
		every y tick scale label/.style = {at={(0.07,1.1)}},
		title style		=
		{
			align		= center,
			yshift		= 0.2cm
		},
	},
	GridStyle/.style =
	{
		xmajorgrids,
		ymajorgrids,
		xminorgrids,
		minor grid style =
		{
			thin,
			densely dotted,
			black!20
		},
		major grid style =
		{
			thin,
			densely dotted,
			black!20
		},
	},
	BisectorStyle/.style =
	{
		solid,
		draw			= black!30!white,
		line width		= 0.01cm,
	},
	MarkStyle/.style =
	{
		scatter,
		scatter/use mapped color = {black},
		only marks,
		mark options	=
		{
			scale 		= 0.7,
			draw		= black,
			fill		= black,
			opacity		= 0.99
		},
		mark			= *,
	},
}
\def\myseparation{1.7cm}

\begin{tikzpicture}
	\begin{axis}
	[
		AxisStyle,
		GridStyle,
		title		= {$K = $ splines, $\widehat{f}_{A}$},
		name		= splinesBndA,
		xlabel		= {oracle},
		ylabel		= {SURE},
		xmin		= 0.0004, xmax		= 0.0008,
		ymin		= 0.0004, ymax		= 0.0008,
		scaled ticks= base 10:+4,
	]
		\addplot [MarkStyle] table [x = ErrAmc1, y = ErrAmc2]
		{data-ScatterPlotsSURE.txt};
		\addplot [BisectorStyle, domain = 0.0004:0.0008] {x};
	\end{axis}
	\begin{axis}
	[
		AxisStyle,
		GridStyle,
		title		= {$K = $ splines, $\widehat{f}_{B}$},
		xlabel		= {oracle},
		ylabel		= {SURE},
		name		= splinesBndB,
		at			= {($(splinesBndA.east)+(\myseparation,0)$)},
		anchor		= west,
		xmin		= 0.0000, xmax		= 0.0200,
		ymin		= 0.0000, ymax		= 0.0200,
		scaled ticks= base 10:+3,
	]
		\addplot [MarkStyle] table [x = ErrBmc1, y = ErrBmc2]
		{data-ScatterPlotsSURE.txt};
		\addplot [BisectorStyle, domain = 0.0000:0.0200] {x};
	\end{axis}
	\begin{axis}
	[
		AxisStyle,
		GridStyle,
		title		= {$K = $ exponential, $\widehat{f}_{A}$},
		xlabel		= {oracle},
		ylabel		= {SURE},
		name		= exponentialBndA,
		at			= {($(splinesBndB.east)+(\myseparation,0)$)},
		anchor		= west,
		xmin		= 0.000005, xmax		= 0.000025,
		ymin		= 0.000005, ymax		= 0.000025,
		scaled ticks= base 10:+6,
	]
		\addplot [MarkStyle] table [x = ErrAmcEx1, y = ErrAmcEx2]
		{data-ScatterPlotsSURE.txt};
		\addplot [BisectorStyle, domain = 0.000005:0.000025] {x};
	\end{axis}
	\begin{axis}
	[
		AxisStyle,
		GridStyle,
		title		= {$K = $ exponential, $\widehat{f}_{B}$},
		xlabel		= {oracle},
		ylabel		= {SURE},
		name		= exponentialBndB,
		at			= {($(exponentialBndA.east)+(\myseparation,0)$)},
		anchor		= west,
		xmin		= 0.00000, xmax		= 0.020,
		ymin		= 0.00000, ymax		= 0.020,
		scaled ticks= base 10:+3,
	]
		\addplot [MarkStyle] table [x = ErrBmcEx1, y = ErrBmcEx2]
		{data-ScatterPlotsSURE.txt};
		\addplot [BisectorStyle, domain = 0.0:0.02] {x};
	\end{axis}
\end{tikzpicture}
	\caption{Comparison of the MSE indexes obtained by the \ac{SURE}- and oracle-based strategies. Each circle corresponds to the result of a certain Monte Carlo run (the $x$-axis being associated to oracle-based estimators, and the $y$-axis to SURE-based ones). The fact that the circles groups are close to the bisector of the first quadrant indicates that the performance of \ac{SURE} is almost equivalent to that of the oracle.}
\label{fig:SURE}
\end{figure*}

The set of four Monte Carlo experiments have been also repeated adopting different data set sizes $M$. To synthesize \ac{SURE}'s performance with an index function only of $M$, let $\mathcal{S}_p \in [0,1]$ denote the ratio between the mean of the 400 errors obtained by the oracle and the \ac{SURE} strategies respectively for a certain value of $M$. Note that a value of $\mathcal{S}_p=1$ indicates that \ac{SURE} is performing as well as the oracle and that, for $M=10000$, $\mathcal{S}_p$ becomes the distillate of \Figure~\ref{fig:SURE}. \Table~\ref{Table1} reports $\mathcal{S}_p$ for $M=100,1000,10000$: one can see that the proposed hyperparameter estimation procedure behaves very nicely.

\begin{table}
\begin{center}
\begin{tabular}{cccc}\hline
Data set size & M = 100 & M = 1000 & M = 10000 \\
$\mathcal{S}_p$ & 0.93 & 0.987 & 0.99 \\
\hline \phantom{|}
\end{tabular} 
\end{center}
\caption{\ac{SURE}'s performance index $\mathcal{S}_p$ summarizing four Monte Carlo studies as a function of the number $M$ of available measurements. A value of $\mathcal{S}_p$ close to $1$ indicates that \ac{SURE}'s performance is close to that of the oracle. For $M=10000$, $\mathcal{S}_p$ represents a resume of the entire
\Figure~\ref{fig:SURE}.}
	\label{Table1}
\end{table}

\subsection{Numerical study on field data -- Colorado rain}
\label{sec:numerical_assessments_meteorology}

Let us now consider the reconstruction of monthly precipitations using data collected in Colorado in the years 1995-1997~\cite{coloradorain}. Many alternative solutions are available in the context of weather forecasts, but they are limited to centralized solutions, such as \cite{Gelfand:05,Datta:16}, for example, thus not suitable in our distributed framework. Measurements $y_{m}$ are a series of monthly average precipitations measured at 367 stations within the rectangular longitude/latitude region $[-109.5,-101] \times [36.5,41.5]$ remapped for convenience into the unitary square so that $x_{m} \in [0, 1]^{2}$ for every $m$.

We test the \ac{SURE} strategies~\eqref{equ:SURE_A} and~\eqref{equ:SURE_B}. When using $\widehat{f}_{A}$ we set $E = 20$ and $\Gamma$ to the grid containing 50 values logarithmically spaced between $10^{-5}$ and $10^{5}$. When adopting $\widehat{f}_{B}$ we use $\Gamma = 0$ and $\Omega = \left\{ 2, 4, \ldots, 20 \right\}$, i.e., consider only $E'$ as a regularization parameter. In both cases, we consider the Gaussian kernel
\begin{equation}
	K(x,x') = \exp \left( - 10 \| x - x' \|^{2} \right).
\end{equation}
The eigenfunctions are computed by assuming that the input locations are not know a-priori and extracted uniformly from the monitored region, i.e., $\mu(x)$ is a uniform distribution. We design a Monte Carlo study of 1000 runs where, at any run, we select randomly two months within the 1995-1997 dataset obtaining three different sets. The first one is a training set $\mathcal{D}_{\textrm{train}}$ of average precipitations obtained by selecting randomly and uniformly 2/3 of the measurements from the first selected month. The second is a test set $\mathcal{D}_{\textrm{test}}$ corresponding to the remaining 1/3 measurements from the first selected month. The last one is $\mathcal{D}_{\sigma_{\nu}^{2}}$ and contains measurements in the second selected month which are used to estimate the noise variance via least squares based on $E$ eigenfunctions. This corresponds to using $\widehat{f}_{A}$ with $\gamma=0$ obtaining as estimate of the noise variance 
\begin{equation}
	\widehat{\sigma}_{\nu}^{2}
	=
	\frac{1}{\dim(\mathcal{D}_{\sigma_{\nu}^{2}})-E}
	\sum_{m = 1}^{\dim(\mathcal{D}_{\sigma_{\nu}^{2}})} \left( \widehat{f}_{A} \left( x_{m} ; 0 \right) - y_{m} \right)^{2}
	\label{equ:estimate-variance-of-noise}
\end{equation}
where $\dim(\mathcal{D}_{\sigma_{\nu}^{2}})$ is the cardinality of $\mathcal{D}_{\sigma_{\nu}^{2}}$. Overall, this represents a situation where noise levels are determined by a centralized approach before running the estimators $\widehat{f}_{A}$ and $\widehat{f}_{B}$.

The following tuning strategies are used:
\begin{enumerate}
	\item ``$\widehat{f}_{A}+\,$oracle'' and ``$\widehat{f}_{B}+\,$oracle'', where ``oracle'' now indicates that these approaches can select those hyperparameters minimizing the following prediction errors on the test set
	\begin{equation}
		\textrm{RSS}_{A} \left( \gamma \right)
		\DefinedAs
		\frac{1}{\dim(\mathcal{D}_{\textrm{test}})}
		\sum_{m = 1}^{\dim(\mathcal{D}_{\textrm{test}})} \left( \widehat{f}_{A} \left( x_{m} ; \gamma \right) - y_{m} \right)^{2}
		\label{equ:RSS-FA-in-test-set}
	\end{equation}
	\begin{equation}
		\small
		\textrm{RSS}_{B} \left( \gamma, E' \right)
		\DefinedAs
		\frac{1}{\dim(\mathcal{D}_{\textrm{test}})}
		\sum_{m = 1}^{\dim(\mathcal{D}_{\textrm{test}})} \left( \widehat{f}_{B} \left( x_{m} ; \gamma, E' \right) - y_{m} \right)^{2}
		\label{equ:RSS-FB-in-test-set}
	\end{equation}
	where $\left( x_{m}, y_{m} \right)$ are all elements of $\mathcal{D}_{\textrm{test}}$. 
	Note that $\textrm{RSS}_{A}$ and $\textrm{RSS}_{B}$ can be seen as approximations of the \acp{MSE} \eqref{MSEA} and \eqref{MSEB} and that the oracle provides a lower bound on their values;
	\item ``$\widehat{f}_{A}+\,$SURE'' and ``$\widehat{f}_{B}+\,$SURE'', where the hyperparameters tuning step is performed minimizing the estimated risks $J_{A} \left( \gamma \right)$ and $J_{B} \left( \gamma, E' \right)$ defined in~\eqref{equ:SURE_A} and~\eqref{equ:SURE_B}.
\end{enumerate}

\Figure~\ref{fig:distributed_SURE_vs_oracle_Colorado_rain} compares with a scatter-plot the prediction errors \eqref{equ:RSS-FA-in-test-set} and~\eqref{equ:RSS-FB-in-test-set} achieved by the estimators after the 1000 Monte Carlo runs. The situation is not dissimilar from the case of synthetic data: as for the estimators ``$\widehat{f}_{A}+\,$SURE'' and ``$\widehat{f}_{B}+\,$SURE'', the performance of the \ac{SURE} strategies is close to that of the oracles.

\begin{figure}[!htbp]
	\centering
	\input{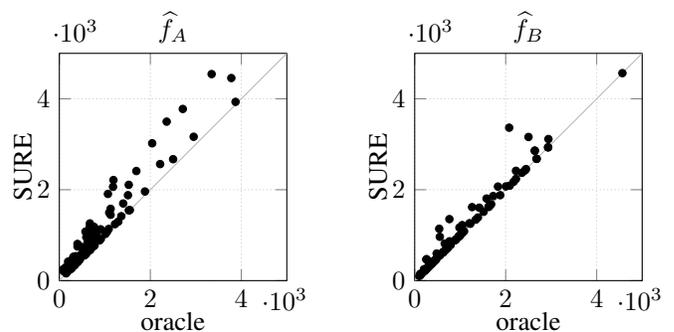}
	\caption{Comparison of the \ac{RSS} prediction error indexes obtained by the oracle- and \ac{SURE}-based strategies. Each opaque circle corresponds to the result of one of the 1000 Monte Carlo runs. The closer the circles are to the bisector of the first quadrant performance means that the closer the performance of that \ac{SURE}-based or Nystr{\"o}m-\ac{SURE} estimator is to the ones of the oracle-based estimator.}
	\label{fig:distributed_SURE_vs_oracle_Colorado_rain}
\end{figure}

Specifically considering the \ac{SURE}-based strategies, \Figure~\ref{fig:1D_Test_RSS_vs_SURE_risk_FA} also compares the estimated risks $J_{A} \left( \gamma \right)$ and $J_{B} \left( \gamma, E' \right)$ against the indexes $\textrm{RSS}_{A} \left( \gamma \right)$ and $\textrm{RSS}_{B} \left( \gamma, E' \right)$ in~\eqref{equ:RSS-FA-in-test-set} and~\eqref{equ:RSS-FB-in-test-set} in the first Monte Carlo run. The curves show that hyperparameters values have a major effect on the estimation performance and that our \ac{SURE} approach leads to a good regularization tuning. The related function estimates are visible in \Figure~\ref{fig:colorado-rain-regression}.

\begin{figure}[!htbp]
	\centering
	\pgfplotsset
{
	PlotStyle/.style =
	{
		width					= 0.55\columnwidth,
		height					= 0.55\columnwidth,
		xlabel near ticks,
		xlabel shift			= 1,
		ylabel					= {},
		ylabel near ticks,
		ymajorgrids,
		major grid style =
		{
			thin,
			densely dotted,
			black!20
		},
		x tick label style		= {anchor = north, inner ysep = 0.1cm},
		y tick label style		= {inner xsep = 0.1cm},
		legend plot pos			= left,
		legend columns			= 1,
		legend style			= {draw = none, fill = white},
		legend cell align		= left,
		inner xsep				= 0.0cm,
		inner ysep				= 0.0cm,
		legend style			=
		{
			nodes				= {font = \scriptsize},
			at					= {(0.5,1.00)},
			anchor				= south,
		}
	},
	RSSStyle/.style =
	{
		smooth,
		mark			= none,
		solid,
		draw			= black!30!white,
		line width		= 0.05cm,
	},
	FitStyle/.style =
	{
		smooth,
		mark			= none,
		solid,
		draw			= blue!70!white,
		line width		= 0.03cm,
	},
	RiskStyle/.style =
	{
		RSSStyle,
		dashed,
		draw			= black!80!white,
		line width		= 0.02cm,
	},
}

\begin{tikzpicture}
	\begin{axis}
	[
		PlotStyle,
		name		= fA,
		ymode		= log,			
		xmin		= -5,
		xmax		= 5,
		xlabel		= {$\log_{10} \left( \gamma \right)$},
	]
		\addplot [RSSStyle] table [x = log10PotentialGammas, y = afTestRSSsFA]
		{data-1D_Test_RSSs_vs_SURE_Risks_FA.txt};
		\addlegendentry{$\textrm{RSS}_{A} \left( \gamma \right)$};
		\addplot [RiskStyle] table [x = log10PotentialGammas, y = afSURERisksFAtimes]
		{data-1D_Test_RSSs_vs_SURE_Risks_FA.txt};
		\addlegendentry{$J_{A} \left( \gamma \right)$};
	\end{axis}
	\begin{axis}
	[
		PlotStyle,
		name		= fB,
		at			= {($(fA.east)+(1.2cm,0)$)},
		anchor		= west,
		ymode		= log,			
		xmin		= 0,
		xmax		= 20,
		xlabel		= {$E'$},
	]
		\addplot [RSSStyle] table [x = E, y = afTestRSSsFB]
		{data-1D_Test_RSSs_vs_SURE_Risks_FB.txt};
		\addlegendentry{$\textrm{RSS}_{B} \left( 0, E' \right)$};
		\addplot [RiskStyle] table [x = E, y = afSURERisksFBtimes]
		{data-1D_Test_RSSs_vs_SURE_Risks_FB.txt};
		\addlegendentry{$J_{B} \left( 0, E' \right)$};
	\end{axis}
\end{tikzpicture}
	\caption{Comparison of the predictive performance of the estimators $\widehat{f}_{A}$ (left panel) and $\widehat{f}_{B}$ (right panel) over the test set in \Figure~\ref{fig:colorado-rain-regression} for $\gamma \in \Gamma$ and $E' \in \Omega$ against the \ac{SURE} scores $J_{A} \left( \gamma \right)$ and $J_{B} \left( \gamma \right)$ in the first Monte Carlo run.}
	\label{fig:1D_Test_RSS_vs_SURE_risk_FA}
\end{figure}
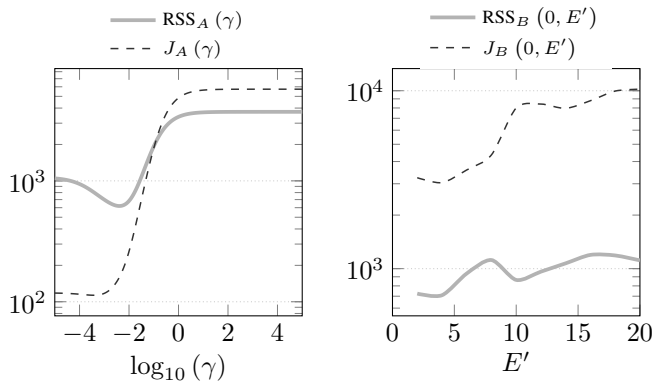

\begin{figure}[!htbp]
	\centering
	\includegraphics[width = 0.8\columnwidth]{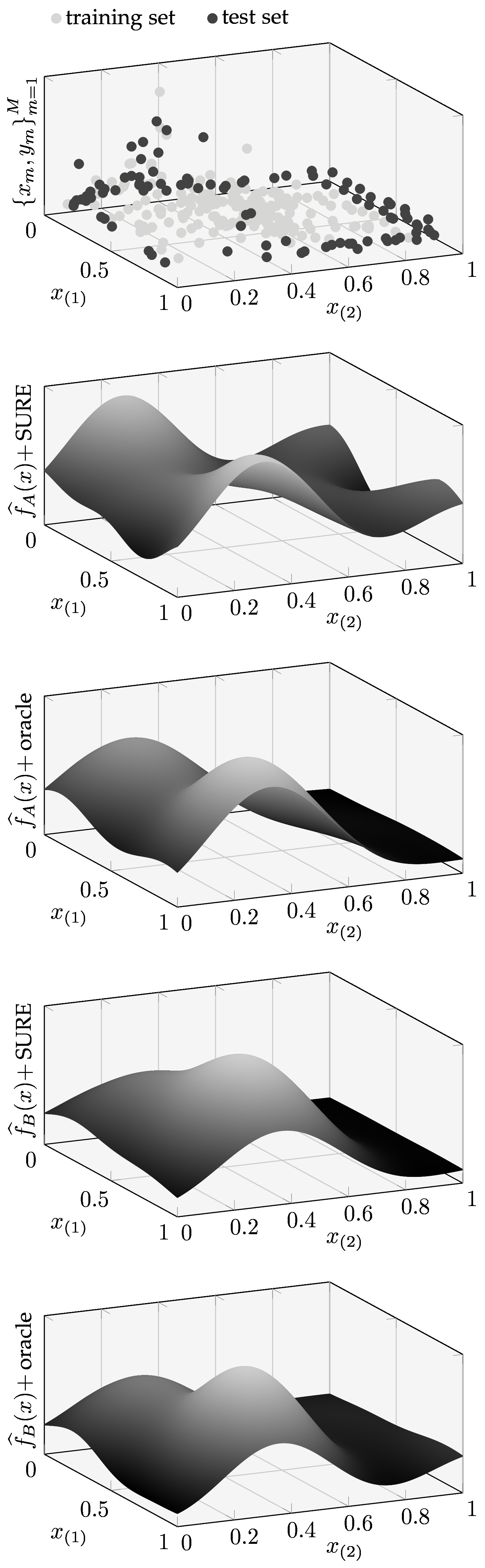}
	\caption{Visualization of the training and test sets (top panel, respectively 173 and 87 samples), and of the estimates returned by ``$\widehat{f}_{A}+\,$SURE'', ``$\widehat{f}_{A}+\,$oracle'', ``$\widehat{f}_{B}+\,$SURE'', and ``$\widehat{f}_{B}+\,$oracle'' in the first Monte Carlo run.}
	\label{fig:colorado-rain-regression}
\end{figure}

\subsection{Numerical study on field data -- UCI datasets}
\label{ssec:numerical_study_on_field_data__cpu_performance}

The second study is performed on two datasets from the public UCI repository, and is executed using the Nystr{\"o}m-based strategy described in \Section~\ref{sssec:application_of_the_distributed_tuning_algorithms_to_the_nystrom_method} to compute the basis functions for the estimators using all the input locations that define the training set. Our purpose is here to compare the proposed \ac{SURE}-based strategy for tuning the regularization parameters against an oracle that selects as the best regularization parameters that ones that give the best fit performance in the test set. As for the kernel, we consider a Gaussian kernel with an unitary variance (not accurately tuned, since the purpose of this section is more checking that the proposed \ac{SURE} strategy chooses the regularization parameters accurately rather than actually maximizing the generalization capabilities of the regression algorithms). As for the grid for tuning $\gamma$, we then consider the set $\Gamma = \{ 0, 10^{-5}, 10^{-4}, \ldots, 10^{2} \}$; as for $E'$, we consider $\Omega = \{ 1, 2, \ldots, 30 \}$. \Table~\ref{tab:numerical-study-on-UCI-datasets} summarizes then the obtained results, and numerically confirms the efficacy of the proposed parameters tuning strategies.
\def\FA{$\widehat{f}_{A}$}
\def\FB{$\widehat{f}_{B}$}
\begin{table}
\rowcolors
{2}					
{black!10!white}	
{}					
	\centering
	\begin{tabular}{l|cc|cc|}
				& \multicolumn{2}{c}{CCPP}	& \multicolumn{2}{c}{CPU}	\\
				& \FA		& \FB			& \FA		& \FB			\\
fit oracle		& 99.3		& 73.3			& 75.7		& 76.1			\\
fit SURE		& 99.3		& 73.3			& 71.2		& 66.5			\\
$\gamma$ oracle	& $10^{-3}$	& -				& $10^{-4}$	& -				\\
$\gamma$ SURE	& $10^{-3}$	& -				& $10^{-5}$	& -				\\
$E'$ oracle		& -			& 2.00			& -			& 10.00			\\
$E'$ SURE		& -			& 2.00			& -			& 27.00			\\
	\end{tabular} 
	\caption{Summary of the performance of the proposed parameters calibration strategies against oracles for different publicly available datasets. "CCPP" indicates the UCI \texttt{Combined Cycle Power Plant} regression dataset, with 9568 instances and a 4-dimensional input domain $\mathcal{X}$. "CPU" indicates the UCI \texttt{Computer hardware} regression dataset, with 209 instances and a 6-dimensional input domain $\mathcal{X}$ (corresponding to the quantitative features available in the dataset). For each dataset 1/3 of the data has been used for test purposes.}
\label{tab:numerical-study-on-UCI-datasets}
\end{table}

\section{Conclusions}
\label{Conclusions}

Distributed function estimation is an important problem where agents with limited computational, data storage and communication capabilities collect noisy measurements and have to reconstruct an unknown map in a collaborative way. In this context, we have studied Gaussian regression providing rigorous statistical bounds on the performance of two distributed estimators, also characterizing their asymptotic behavior. On the practical side, our study indicates how the dimension $E$ of the adopted estimator has to depend on the number of measurements $M$ collected by the agents to guarantee the desired statistical performance. The analysis clarifies merits and limitations of the two approaches also in function of the different amount of information exchange required to the network (linear or quadratic in $E$). We have also introduced novel distributed strategies which learn from data possibly unknown hyperparameters entering the estimators, and that do not necessarily require solving potentially numerically intensive eigenfunctions-eigenvalues decompositions of kernel functions. For the first time, to our knowledge, this paper has shown how it is possible to estimate the regularization parameter and the unknown function via a single average consensus operation.

Overall, the theoretical achievements and the numerical strategies here described provide sound tools to reconstruct static functions in distributed networks. An important future direction is to extend all the analysis to an even more challenging situation where the unknown map may change in time and has to be tracked in an on-line manner.

\appendix

\subsection{Preliminary results}
\label{sec:ancillary_results_on_the_computation_of_conditional_expectations}

The following result will be especially useful in what follows.  In fact, it will be often used to obtain bounds on intricate conditional expectations just calculating unconditional means.

\begin{lemma}
	Let $\Omega$ denote a sample space, $\omega \in \Omega$ its generic element. Let $\mathcal{E}$ be an event such that 
	\begin{equation}
		\ProbabilityOf{ \omega \in \mathcal{E} } \geq 1 - \alpha.
		\label{equ:probability_of_omega_in_mathcal_E}
	\end{equation}

	If $g(\omega)$ is positive and~\eqref{equ:probability_of_omega_in_mathcal_E} holds then
	\begin{equation}
		\ExpectationOf{ g(\omega) \; \left| \; \omega \in \mathcal{E} \right. }
		\leq 
		\frac{1}{1 - \alpha} \ExpectationOf{ g(\omega) } .
		\label{equ:conditional_expectation_on_mathcal_E}
	\end{equation}
	\label{thm:conditional_expectation_on_mathcal_E}
\end{lemma}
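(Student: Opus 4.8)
The plan is elementary: I would express the conditional expectation as a ratio of an unconditional expectation and a probability, and then bound each of the two factors using only the positivity of $g$ and the hypothesis on $\ProbabilityOf{\omega \in \mathcal{E}}$. Concretely, the starting point is the definition of the expectation conditioned on an event of positive probability,
\begin{equation}
\ExpectationOfGiven{g(\omega)}{\omega \in \mathcal{E}}
=
\frac{\ExpectationOf{g(\omega)\, \IndicatorFunctionOf{\omega \in \mathcal{E}}}}{\ProbabilityOf{\omega \in \mathcal{E}}},
\end{equation}
which is well defined because \eqref{equ:probability_of_omega_in_mathcal_E} together with $\alpha \in (0,1)$ guarantees $\ProbabilityOf{\omega \in \mathcal{E}} \geq 1-\alpha > 0$.

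Next I would bound the numerator and the denominator separately. For the numerator, since $g(\omega) \geq 0$ and $\IndicatorFunctionOf{\omega \in \mathcal{E}} \leq 1$ pointwise, we have $g(\omega)\,\IndicatorFunctionOf{\omega \in \mathcal{E}} \leq g(\omega)$, so monotonicity of the expectation yields $\ExpectationOf{g(\omega)\,\IndicatorFunctionOf{\omega \in \mathcal{E}}} \leq \ExpectationOf{g(\omega)}$. For the denominator, the hypothesis $\ProbabilityOf{\omega \in \mathcal{E}} \geq 1-\alpha$ and the fact that $\ExpectationOf{g(\omega)} \geq 0$ (again by positivity of $g$) give $1/\ProbabilityOf{\omega \in \mathcal{E}} \leq 1/(1-\alpha)$. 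Multiplying the two estimates produces exactly \eqref{equ:conditional_expectation_on_mathcal_E}.

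There is essentially no obstacle in this argument; the only subtleties worth a remark are that one must use $\alpha < 1$ to ensure the denominator is strictly positive, and that $\ExpectationOf{g(\omega)} \geq 0$ is needed so that replacing $\ProbabilityOf{\omega \in \mathcal{E}}$ by the smaller quantity $1-\alpha$ indeed \emph{enlarges} the right-hand side — both of which follow immediately from the standing hypotheses. If one wishes to avoid any integrability concern, it suffices to observe that the claimed inequality is trivial when $\ExpectationOf{g(\omega)} = +\infty$, so there is no loss in assuming $g$ is integrable.
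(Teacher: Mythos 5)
Your proof is correct and takes essentially the same route as the paper: you write the conditional expectation as the expectation restricted to $\mathcal{E}$ divided by $\ProbabilityOf{\omega \in \mathcal{E}}$, bound the numerator by $\ExpectationOf{g(\omega)}$ using $g \geq 0$, and bound the denominator from below by $1-\alpha$. The paper phrases the identical estimate through the conditional measure $\eta_{\mathcal{E}}$ being dominated by $\frac{1}{1-\alpha}\,\eta$, so the two arguments coincide in substance.
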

\begin{proof}{Lemma~\ref{thm:conditional_expectation_on_mathcal_E}}
	Let $\eta$ be the probability measure on the $\sigma$-algebra $\Omega$ is equipped with. In general, for every $\mathcal{E}'$,
	\begin{equation}
		\begin{array}{ll}
		\ProbabilityOf{ \omega \in \mathcal{E}' \left| \omega \in \mathcal{E} \right. }
		& = \displaystyle
		\frac{\ProbabilityOf{ \omega \in \mathcal{E}' \cap \mathcal{E} }}{ \ProbabilityOf{ \omega \in \mathcal{E} }} \vspace{0.1cm} \\
		& \leq \displaystyle
		\frac{\ProbabilityOf{ \omega \in \mathcal{E}' }}{ \ProbabilityOf{ \omega \in \mathcal{E} }}
		\leq
		\frac{1}{1 - \alpha} \ProbabilityOf{ \omega \in \mathcal{E}' } .
		\end{array}
		\label{equ:minoration_of_mathcal_e_prime}
	\end{equation}
	If $\eta_{\mathcal{E}}$ denotes the probability measure $\eta$ conditional on $\mathcal{E}$, 
	one then has
	\begin{eqnarray}
		\int_{\mathcal{E}} g(\omega) d\eta_{\mathcal{E}}(\omega)
		& \leq &
		\frac{1}{1-\alpha} \int_{\mathcal{E}} g(\omega) d\eta(\omega) \\
		& \leq &
		\frac{1}{1-\alpha} \int_{\Omega} g(\omega) d\eta(\omega) . \\
	\end{eqnarray}
	\begin{flushright}
$\blacksquare$
\end{flushright}
\end{proof}

The following result exploits the Chernoff bound and will be important to obtain $\mathrm{Bnd}_{A}$ and $\mathrm{Bnd}_{B}$.
It will also clarify the role played by the $\varepsilon$ entering the bounds. 
\begin{lemma}
	Let $\alpha \in (0, 1)$ be a desired confidence level (e.g., $0.01$ or $0.05$), and $\varepsilon \in (0, 1]$ represent a given distance index for $\lambda_{\textrm{min}}$ and $\lambda_{\textrm{max}}$ as specified in~\eqref{equ:chernoff_bound_on_minimal_eigenvalue_of_G_T_G_second_version} and~\eqref{equ:chernoff_bound_on_maximal_eigenvalue_of_G_T_G_second_version}. If $E, M$ and $k$ in~\eqref{equ:eigenfunctions_are_uniformly_bounded} satisfy~\eqref{equ:condition_on_varepsilon_for_estimator_A} then
	\begin{equation}
		\ProbabilityOf
		{
			\lambda_{\textrm{min}} \left( \frac{G^{T} G}{M} \right)
			\geq
			\varepsilon 
		}
		\geq
		1 - \alpha .
		\label{equ:chernoff_bound_on_minimal_eigenvalue_of_G_T_G_second_version}
	\end{equation}
	If instead $E, M$ and $k$ satisfy~\eqref{equ:condition_on_varepsilon_for_estimator_B} then
	\begin{equation}
		\ProbabilityOf
		{
			\lambda_{\textrm{min}} \left( \frac{G^{T} G}{M} \right)
			\geq
			\varepsilon 
			\; \cap \;
			\lambda_{\textrm{max}} \left( \frac{G^{T} G}{M} \right)
			\leq
			2 - \varepsilon 
		}
		\geq
		1 - \alpha .
		\label{equ:chernoff_bound_on_maximal_eigenvalue_of_G_T_G_second_version}
	\end{equation}
	\label{thm:chernoff_bound_on_eigenvalues_of_G_T_G}
\end{lemma}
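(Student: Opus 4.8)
The plan is to recognize $\frac{G^{T}G}{M}=\frac1M\sum_{m=1}^{M}G_{m}^{T}G_{m}$ as a normalized sum of $M$ independent, identically distributed, rank-one positive semidefinite $E\times E$ matrices $X_{m}\DefinedAs\frac1M G_{m}^{T}G_{m}$, and to apply the matrix Chernoff (Ahlswede--Winter / Tropp) inequalities to it. Two ingredients feed those inequalities. First, by the orthonormality~\eqref{equ:orthogonality_of_the_eigenfunctions} of the eigenfunctions under $\mu$ one has $\ExpectationOf{X_{m}}=\frac1M I_{E}$, hence $\ExpectationOf{\frac{G^{T}G}{M}}=I_{E}$, so that both $\lambda_{\textrm{min}}$ and $\lambda_{\textrm{max}}$ of the mean equal $1$. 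Second, by \Assumption~\ref{ass:eigenfunctions_are_uniformly_bounded} and~\eqref{equ:definition_of_G_m}, $\lambda_{\textrm{max}}(X_{m})=\frac1M\|G_{m}\|^{2}=\frac1M\sum_{e=1}^{E}\phi_{e}^{2}(x_{m})\leq\frac{Ek}{M}$, which is the per-summand operator-norm bound required by the inequalities; the ambient dimension factor is $d=E$.

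\textbf{Lower tail / first claim.} For $\delta\in(0,1)$ the matrix Chernoff lower-tail bound gives $\ProbabilityOf{\lambda_{\textrm{min}}\!\left(\frac{G^{T}G}{M}\right)\leq 1-\delta}\leq E\left(\frac{e^{-\delta}}{(1-\delta)^{1-\delta}}\right)^{M/(Ek)}$. I would set $\varepsilon\DefinedAs 1-\delta$ and use the identity $\frac{e^{-\delta}}{(1-\delta)^{1-\delta}}=\ExponentialOf{-(1-\varepsilon+\varepsilon\log(\varepsilon))}$ (where $1-\varepsilon+\varepsilon\log(\varepsilon)\geq 0$ on $(0,1]$), so that the right-hand side equals $E\,\ExponentialOf{-\tfrac{M}{Ek}(1-\varepsilon+\varepsilon\log(\varepsilon))}$. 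Requiring this to be $\leq\alpha$ and taking logarithms yields exactly $1-\varepsilon+\varepsilon\log(\varepsilon)\geq\frac{Ek}{M}\log(E/\alpha)$, i.e.\ condition~\eqref{equ:condition_on_varepsilon_for_estimator_A}; hence under~\eqref{equ:condition_on_varepsilon_for_estimator_A} the complementary event $\{\lambda_{\textrm{min}}\geq\varepsilon\}$ has probability at least $1-\alpha$, which is~\eqref{equ:chernoff_bound_on_minimal_eigenvalue_of_G_T_G_second_version}.

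\textbf{Two-sided / second claim.} To bound $\lambda_{\textrm{max}}$ as well, I would add the matrix Chernoff upper-tail bound $\ProbabilityOf{\lambda_{\textrm{max}}\!\left(\frac{G^{T}G}{M}\right)\geq 1+\delta}\leq E\left(\frac{e^{\delta}}{(1+\delta)^{1+\delta}}\right)^{M/(Ek)}$ evaluated at the same $\delta=1-\varepsilon$ (so $1+\delta=2-\varepsilon$), and union-bound the two failure events. The combined failure probability is then at most $2E$ times the governing deviation term, so imposing that it be $\leq\alpha$ replaces $\log(E/\alpha)$ by $\log(2E/\alpha)$ and reproduces condition~\eqref{equ:condition_on_varepsilon_for_estimator_B}, giving~\eqref{equ:chernoff_bound_on_maximal_eigenvalue_of_G_T_G_second_version}. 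The step I expect to be the main obstacle is precisely this upper-tail accounting: the Chernoff rate for the upper deviation, $(2-\varepsilon)\log(2-\varepsilon)-(1-\varepsilon)$, is \emph{smaller} than the lower-deviation rate $1-\varepsilon+\varepsilon\log(\varepsilon)$, so one must argue that under~\eqref{equ:condition_on_varepsilon_for_estimator_B} the upper-tail term is still $\leq\alpha/(2E)$ (equivalently, that the symmetric interval $[\varepsilon,2-\varepsilon]$ is not too tight relative to the two rate functions). Everything else --- plugging $d=E$, unit mean eigenvalues, and $L=Ek/M$ into the Chernoff statements, and simplifying the exponents --- is routine.
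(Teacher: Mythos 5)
Your treatment of the first claim is complete and is essentially the paper's own argument: you apply the matrix Chernoff (Tropp) lower-tail bound to the i.i.d.\ rank-one summands $G_m^TG_m/M$, using $\ExpectationOf{G_m^TG_m}=I_E$ (orthonormality), the per-summand bound $\lambda_{\max}(G_m^TG_m)\leq Ek$ and ambient dimension $E$, and the substitution $\varepsilon=1-\delta$ turns the requirement that $E\left(e^{-(1-\varepsilon)}/\varepsilon^{\varepsilon}\right)^{M/(Ek)}\leq\alpha$ into exactly~\eqref{equ:condition_on_varepsilon_for_estimator_A}, which gives~\eqref{equ:chernoff_bound_on_minimal_eigenvalue_of_G_T_G_second_version}.

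For the two-sided claim, however, you stop at precisely the step that needs an argument: you union-bound the two failure events and then only say that one ``must argue'' that the upper-tail term $E\left(e^{1-\varepsilon}/(2-\varepsilon)^{2-\varepsilon}\right)^{M/(Ek)}$ is still at most $\alpha/2$ under~\eqref{equ:condition_on_varepsilon_for_estimator_B}; that is left unproved, and it is the crux of~\eqref{equ:chernoff_bound_on_maximal_eigenvalue_of_G_T_G_second_version}. The paper closes this step by asserting that the lower-tail bound dominates the upper-tail one, i.e.\ $e^{-(1-\varepsilon)}/\varepsilon^{\varepsilon}\geq e^{1-\varepsilon}/(2-\varepsilon)^{2-\varepsilon}$, so that the union bound is at most twice the lower-tail bound and~\eqref{equ:condition_on_varepsilon_for_estimator_B} (i.e.\ \eqref{equ:condition_on_varepsilon_for_estimator_A} with $\log(E/\alpha)$ replaced by $\log(2E/\alpha)$) suffices. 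Your own remark goes in the opposite direction, and your direction is in fact the correct one: with $L(\varepsilon)=1-\varepsilon+\varepsilon\log\varepsilon$ and $U(\varepsilon)=(2-\varepsilon)\log(2-\varepsilon)-(1-\varepsilon)$ one has $\frac{d}{d\varepsilon}\left(L(\varepsilon)-U(\varepsilon)\right)=\log\left(\varepsilon(2-\varepsilon)\right)<0$ on $(0,1)$ with $L(1)=U(1)=0$, hence $L(\varepsilon)>U(\varepsilon)$ there (e.g.\ $L(1/2)\approx 0.153$ versus $U(1/2)\approx 0.108$), so the upper-tail bound is the \emph{larger} of the two. Consequently the obstacle you flagged is real: condition~\eqref{equ:condition_on_varepsilon_for_estimator_B} only drives the lower-tail term below $\alpha/2$, and the symmetric accounting you (and the paper) set up does not by itself control the upper-tail term. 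To complete the proof one must either additionally require $U(\varepsilon)\geq\frac{Ek}{M}\log\left(\frac{2E}{\alpha}\right)$, or split the budget asymmetrically / enlarge the upper threshold beyond $2-\varepsilon$. As submitted, your proof of the second claim is incomplete --- and the step you left open is exactly the point where the paper's own argument invokes an inequality that goes the wrong way.
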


\begin{proof}{Lemma~\ref{thm:chernoff_bound_on_eigenvalues_of_G_T_G}} 
Since the assumptions in~\cite[Thm.~1.1]{tropp__2012__user_friendly_tail_bounds_for_sums_of_random_matrices} are satisfied, for any $\varepsilon \in (0, 1]$ one has
	\begin{equation}
		\ProbabilityOf
		{
			\lambda_{\min} \left( \frac{G^{T} G}{M} \right)
			\leq
			\varepsilon
		}
		\leq
		E
		\left(
			\frac{e^{-(1-\varepsilon)}}{\varepsilon^{\varepsilon}}
		\right)^{\displaystyle \frac{M}{E k}} .
	\label{equ:auxiliary_chernoff_bound_for_lambda_min}
	\end{equation}
	Condition \eqref{equ:condition_on_varepsilon_for_estimator_A} is obtained by picking $\alpha$ larger than the \ac{RHS} of~\eqref{equ:auxiliary_chernoff_bound_for_lambda_min} and manipulating this inequality. Then, \eqref{equ:chernoff_bound_on_minimal_eigenvalue_of_G_T_G_second_version} follows from \eqref{equ:auxiliary_chernoff_bound_for_lambda_min} just considering that if $\overline{\star}$ is the complementary of $\star$ then $\ProbabilityOf{\star} \leq \alpha \Leftrightarrow \ProbabilityOf{\overline{\star}} \geq 1 - \alpha$. Now, we can use again~\cite[Thm.~1.1]{tropp__2012__user_friendly_tail_bounds_for_sums_of_random_matrices} to claim that, for every $\varepsilon \in [0, 1]$, 
	\begin{equation}
		\ProbabilityOf
		{
			\lambda_{\max} \left( \frac{G^{T} G}{M} \right)
			\geq
			2 - \varepsilon
		}
		\leq
		E
		\left(
			\frac{e^{(1-\varepsilon)}}{(2-\varepsilon)^{(2-\varepsilon)}}
		\right)^{\displaystyle \frac{M}{E k}} .
	\label{equ:auxiliary_chernoff_bound_for_lambda_max}
	\end{equation}
	Let the arguments in the $\ProbabilityOf{\cdot}$ in the \ac{LHS} of~\eqref{equ:auxiliary_chernoff_bound_for_lambda_min} and~\eqref{equ:auxiliary_chernoff_bound_for_lambda_max} be respectively $\star_{\lambda_{\min}}$ and $\star_{\lambda_{\max}}$. Let also the \acp{RHS} of~\eqref{equ:auxiliary_chernoff_bound_for_lambda_min} and~\eqref{equ:auxiliary_chernoff_bound_for_lambda_max} be upper bounded respectively by $\alpha_{\lambda_{\min}}$ and $\alpha_{\lambda_{\max}}$. Then, it follows that
	\begin{equation}
		\begin{array}{ll}
		\ProbabilityOf{\star_{\lambda_{\min}} \cup \star_{\lambda_{\max}}}
		& \leq
		\ProbabilityOf{\star_{\lambda_{\min}}} + \ProbabilityOf{\star_{\lambda_{\max}}}
		\\
		& \leq
		\alpha_{\lambda_{\min}} + \alpha_{\lambda_{\max}}
		\leq
		2 \alpha_{\lambda_{\min}}
		\end{array}
	\end{equation}
	with the last inequality following from the fact that $\varepsilon \in \left[ 0, 1 \right] \implies \alpha_{\lambda_{\min}} \geq \alpha_{\lambda_{\max}}$ since
	\begin{equation}
		\frac{e^{-(1-\varepsilon)}}{\varepsilon^{\varepsilon}}
		\geq
		\frac{e^{(1-\varepsilon)}}{(2-\varepsilon)^{(2-\varepsilon)}} .
	\end{equation}
	Thus, letting the novel $\alpha$ be $2 \alpha_{\lambda_{\min}}$ (i.e., assuming~\eqref{equ:condition_on_varepsilon_for_estimator_B} to be satisfied), we obtain
	\begin{equation}
		\ProbabilityOf{\overline{\star}_{\min} \cap \overline{\star}_{\max}}
		\geq
		1 - \alpha ,
	\end{equation}
	and this proves \eqref{equ:chernoff_bound_on_maximal_eigenvalue_of_G_T_G_second_version}. \qed
	\begin{flushright}
$\blacksquare$
\end{flushright}
\end{proof}

The following lemma is just a generalization of the fact that convergence in mean ($L_1$-norm) of random variables implies 
convergence in probability. The proof is simple and therefore omitted.

\begin{lemma}
Let $g(\bm{x})$ denote a stochastic variable whose randomness derives from the input locations $\bm{x} \DefinedAs \left[ x_{1}, \ldots, x_{M} \right]^{T}$. Assume that $g(\bm{x}) \geq q$ almost surely with $q$ independent of $M$. In addition, assume also that for any $1-\alpha$ and $\varepsilon>0$ there exists $M_0$ such that $\forall M \geq M_0$ one has 
$$
\overline{g(\bm{x})} \leq q + \varepsilon \quad \text{with probability} \quad 1 - \alpha,
$$ 
in accordance with \Definition~\ref{def:BarE}. Then
	\begin{equation}
		\displaystyle
		\lim_{M \rightarrow + \infty}
		g(\bm{x})
		=
		q
		\qquad
		\text{in probability.}
	\end{equation}
	\label{Lemmaq}
		\begin{flushright}
$\blacksquare$
\end{flushright}
\end{lemma}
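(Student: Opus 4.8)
The plan is to unwind \Definition~\ref{def:BarE} and then apply a conditional Markov inequality, choosing the confidence level $\alpha$ and the slack $\varepsilon$ as functions of the convergence targets. Recall that $g(\bm{x}) \to q$ in probability means: for every $\eta > 0$ and every $\delta > 0$ there is an $M_0$ with $\ProbabilityOf{|g(\bm{x}) - q| \geq \eta} \leq \delta$ for all $M \geq M_0$. Since by hypothesis $g(\bm{x}) \geq q$ almost surely, we have $|g(\bm{x}) - q| = g(\bm{x}) - q \geq 0$, so it suffices to control $\ProbabilityOf{g(\bm{x}) - q \geq \eta}$.

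So first I would fix $\eta, \delta > 0$, set $\alpha \DefinedAs \delta/2$ and $\varepsilon \DefinedAs \eta\delta/2$, and invoke the hypothesis with this $\alpha$ and $\varepsilon$ to obtain an $M_0$ such that for every $M \geq M_0$ the statement ``$\overline{g(\bm{x})} \leq q + \varepsilon$ with probability $1-\alpha$'' holds. By \Definition~\ref{def:BarE}, this produces, for each such $M$, an event $\mathcal{E}$ in the $\sigma$-algebra induced by $\bm{x}$ with $\ProbabilityOf{\mathcal{E}} \geq 1-\alpha$ and $\ExpectationOf{g(\bm{x}) \mid \mathcal{E}} \leq q + \varepsilon$, hence $\ExpectationOf{g(\bm{x}) - q \mid \mathcal{E}} \leq \varepsilon$.

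Next, using that $g(\bm{x}) - q \geq 0$ almost surely (and therefore also almost surely under the law conditioned on $\mathcal{E}$), Markov's inequality applied to the conditional measure gives $\ProbabilityOf{g(\bm{x}) - q \geq \eta \mid \mathcal{E}} \leq \varepsilon/\eta = \delta/2$. Splitting the event over $\mathcal{E}$ and its complement then yields
\begin{equation}
	\ProbabilityOf{g(\bm{x}) - q \geq \eta}
	\leq
	\ProbabilityOf{g(\bm{x}) - q \geq \eta \mid \mathcal{E}}
	+
	\ProbabilityOf{\mathcal{E}^{c}}
	\leq
	\frac{\delta}{2} + \alpha
	=
	\delta ,
\end{equation}
valid for all $M \geq M_0$, which is exactly the convergence-in-probability statement.

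The argument is entirely elementary — it is essentially ``convergence of conditional means with high probability implies convergence in probability'' — so I do not expect a genuine obstacle. The only point requiring care is the bookkeeping of quantifiers: the event $\mathcal{E}$ supplied by \Definition~\ref{def:BarE} depends on $M$, on $\alpha$, and on $\varepsilon$, so one must first fix the targets $\eta$ and $\delta$, then commit to $\alpha$ and $\varepsilon$ as functions of them, and only afterwards extract $M_0$ and the family of events $\mathcal{E} = \mathcal{E}(M)$. This is presumably why the paper states that the proof is simple and omits it.
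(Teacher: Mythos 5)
Your proof is correct, and it follows exactly the route the paper has in mind: the paper omits the proof, describing the lemma as a generalization of ``convergence in $L_1$ implies convergence in probability,'' which is precisely your conditional Markov inequality on the event $\mathcal{E}$ combined with the bound $\ProbabilityOf{\mathcal{E}^{c}} \leq \alpha$. Your handling of the quantifiers (fixing $\eta,\delta$ first, then choosing $\alpha = \delta/2$ and $\varepsilon = \eta\delta/2$, and only then extracting $M_0$ and the $M$-dependent events $\mathcal{E}(M)$) is exactly the bookkeeping needed, so nothing is missing.
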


\subsection{Proof of Theorem \ref{thm:bounds_on_errors_of_f_A_and_f_B}} 

We start by computing the general expression for $\mathrm{Err}_{A}(\bm{x})$ in~\eqref{equ:definition_of_Err_A_and_Err_B}, then evaluate its expectation like in~\eqref{equ:definition-of-overline-Err-A}, and finally transport the results to the case of $\mathrm{Err}_{B}(\bm{x})$.

As for finding the general expression for $\mathrm{Err}_{A}(\bm{x})$, we recall the decomposition of the estimand as $f = f_{a} + f_{b}$ in~\eqref{equ:estimand_model}, the definition of $\mathcal{S}$ in~\eqref{equ:definition_of_mathcal_S} and the design requirement $\widehat{f}_{A}, \widehat{f}_{B} \in \mathcal{S}$, that imply $f_{a}, \widehat{f}_{A} \in \mathcal{S}$ and $f_{b} \in \mathcal{S}^{\perp}$. By construction, then, $\left\| f \right\|^{2} = \left\| f_{a} \right\|^{2} + \left\| f_{b} \right\|^{2}$ and
\begin{equation}
	\ExpectationOfGiven{ \left\| f - \widehat{f}_{A} \right\|^{2} }{\bm{x}}
	=
	\ExpectationOfGiven{ \left\| f_{a} - \widehat{f}_{A} \right\|^{2} }{\bm{x}}
	+
	\left\| f_{b} \right\|^{2}
	\label{equ:decomposition_of_MSE_for_estimator_A}
\end{equation}
where the expectations are w.r.t.\ the noises $\bm{\nu}$, so that $\ExpectationOfGiven{\left\| f_{b} \right\|^{2}}{\bm{x}} = \left\| f_{b} \right\|^{2}$ since 
$\bm{\nu},f_{b}$ and $\bm{x}$ are all mutually independent. Notice that a similar decomposition holds also for $\widehat{f}_{B}$.

As for $\ExpectationOfGiven{ \left\| f_{a} - \widehat{f}_{A} \right\|^{2} }{\bm{x}}$ in~\eqref{equ:decomposition_of_MSE_for_estimator_A}, we notice that $\left\| \widehat{f}_{A} \right\|^{2} = \left\| \widehat{\bm{a}} \right\|^{2}_{2} = \left\| H_{A} \bm{y} \right\|^{2}_{2}$. Since~\eqref{equ:measurement_model_vector_form} implies
\begin{equation}
	\widehat{\bm{a}}
	=
	H_{A}
	\left(
		G \bm{a} 
		+
		Z \bm{b}
		+
		\bm{\nu}
	\right) ,
\end{equation}
and since both $\bm{a} \perp \bm{b}$ and $\bm{\nu} \perp \bm{b}$, it eventually follows that
\begin{equation}
	\begin{array}{ll}
		\mathrm{Err}_{A}(\bm{x})
		= & \phantom{+}
		 \ExpectationOfGiven{ \left\| \bm{a} - H_{A} ( G \bm{a} + \bm{\nu} ) \right\|^{2}_{2} }{\bm{x}} \\ & \vspace{0.1cm}
		+ \ExpectationOfGiven{ \left\| H_{A} Z \bm{b} \right\|^{2}_{2} }{\bm{x}} \\ &
		+  \left\| f_{b} \right\|^{2}.
	\end{array}
	\label{equ:summary_of_MSE_of_estimator_A}
\end{equation}

\begin{proof}{equations~\eqref{equ:definition_of_bound_A} and~\eqref{equ:value_of_bound_A}}

Let $\overline{\mathcal{E}}$ be the event
\begin{equation}
	\overline{\mathcal{E}}
	\DefinedAs
	\left\{
		\lambda_{\textrm{min}} \left( \frac{G^{T} G}{M} \right)
		\geq
		\varepsilon
	\right\} ,
	\label{equ:definition_of_the_event_mathcal_E}
\end{equation}
and assume $\varepsilon$, $\alpha$, $M$ and $E$ satisfy~\eqref{equ:condition_on_varepsilon_for_estimator_A}. Since in this case we can apply \Theorem~\ref{thm:chernoff_bound_on_eigenvalues_of_G_T_G}, it holds that $\ProbabilityOf{\, \overline{\mathcal{E}} \,} \geq 1 - \alpha$.

We can now write the \ac{LHS} of~\eqref{equ:definition-of-overline-Err-A} as
\begin{equation}
	\ExpectationOfGiven
	{\mathrm{Err}_{A}(\bm{x})}
	{\overline{\mathcal{E}}}
	=
	\ExpectationOfGiven
	{
		\ExpectationOfGiven{ \left\| f - \widehat{f}_{A} \right\|^{2} }{\bm{x}}
	}
	{\overline{\mathcal{E}}} .
	\label{equ:double-expectation}
\end{equation}

Since $f_{b} \perp \overline{\mathcal{E}}$,~\eqref{equ:summary_of_MSE_of_estimator_A} implies
\begin{equation}
	\begin{array}{l}
		\ExpectationOfGiven{\mathrm{Err}_{A}(\bm{x})}{\overline{\mathcal{E}}} \vspace{0.1cm} 
		= \\ \quad \phantom{+} \vspace{0.1cm} 
		\ExpectationOfGiven
		{
			\ExpectationOfGiven{\left\| \bm{a} - H_{A} ( G \bm{a} + \bm{\nu} ) \right\|^{2}}{\bm{x}}
		}
		{\overline{\mathcal{E}}} \\ \vspace{0.1cm}
		\quad +
		\ExpectationOfGiven
		{
			\ExpectationOfGiven{\left\| H_{A} Z \bm{b} \right\|^{2}}{\bm{x}}
		}
		{\overline{\mathcal{E}}}
		\\
		\quad + \ExpectationOf{ \left\| f_{b} \right\|^{2} }.
	\end{array}
	\label{equ:summary_of_MSE_of_estimator_A_conditioned_on_mathcal_E}
\end{equation}

As for $\ExpectationOf{ \left\| f_{b} \right\|^{2} }$, we know from~\eqref{equ:estimand_model}, \eqref{equ:coefficients_are_gaussianly_distributed:b} and the mutual independence of the $b_{e}$'s, that
\begin{equation}
	\ExpectationOf{ \left\| f_{b} \right\|^{2} } = \sum_{e = E + 1}^{+\infty} \lambda_{e} .
	\label{equ:third_MSE_term_for_estimator_A}
\end{equation}
This term is thus an approximation error influenced only by the dimension $E$ of our search space $\mathcal{S}$.

Given~\eqref{equ:third_MSE_term_for_estimator_A}, what we actually need to bound is the first two terms in the \ac{RHS} of~\eqref{equ:summary_of_MSE_of_estimator_A_conditioned_on_mathcal_E}. We perform this task in the next two subsections.

\end{proof}

\subsection*{Bounding $\ExpectationOfGiven{\ExpectationOfGiven{\left\| H_{A} Z \bm{b} \right\|^{2}}{\bm{x}}}{\overline{\mathcal{E}}}$ in~\eqref{equ:summary_of_MSE_of_estimator_A_conditioned_on_mathcal_E}}
\label{ssec:characterization_of_first_term_of_f_A_conditioned_on_mathcal_E}

Exploiting the nature of the event $\overline{\mathcal{E}}$ to bound $H_{A}$ in~\eqref{equ:definition_of_H_A}, it is not difficult to prove that
\begin{equation} \footnotesize
	\ExpectationOfGiven{\ExpectationOfGiven{\left\| H_{A} Z \bm{b} \right\|^{2}}{\bm{x}}}{\overline{\mathcal{E}}}
	\leq
	\ExpectationOfGiven
	{
		\left\|
		\left( \varepsilon I_{E} + \frac{\sigma^{2}_{\nu}}{M} \Lambda^{-1}_{E} \right)^{-1}
			\frac{G^{T} Z}{M} \bm{b} \,
		\right\|^{2}
		\hspace{-0.2cm}
	}
	{\overline{\mathcal{E}}} .
	\label{equ:decomposition_of_second_term_of_the_error_for_estimator_A}
\end{equation}
Defining
\begin{equation}
	d_{e}
	\DefinedAs
	\frac{ \varepsilon M \lambda_{e} + \sigma^{2}_{\nu} }{ M \lambda_{e} },
	\qquad
	e = 1, \ldots, E,
	\label{equ:definition_of_d_e}
\end{equation}
it follows that 
\begin{equation}
	\left( \varepsilon I_{E} + \frac{\sigma^{2}_{\nu}}{M} \Lambda^{-1}_{E} \right)^{-1}
	=
	\DiagonalMatrixOf{ d_{1}^{-1}, \ldots, d_{E}^{-1} } .
	\label{equ:rewriting_of_varespilonIplusthings}
\end{equation}
Consider moreover that from the definition of $f_{b}$ in~\eqref{equ:estimand_model}, of $\bm{b}$ in~\eqref{equ:definition_of_bm_a_and_bm_b} and of $Z$ in~\eqref{equ:definition_of_Z} it follows that $\left[ Z \bm{b} \right]_{m} = f_{b} \left( x_{m} \right)$. Let then
\begin{equation}
	c_{e}
	\DefinedAs
	\left[ G^{T} Z \bm{b} \right]_{e}
	=
	\sum_{m = 1}^{M} \phi_{e} \left( x_{m} \right) f_{b} \left( x_{m} \right)
	\qquad
	e = 1, \ldots, E
	\label{equ:definition_of_c_e}
\end{equation}
so that
\begin{equation}
	\bm{b}^{T} Z^{T} G\left( \varepsilon I_{E} + \frac{\sigma^{2}_{\nu}}{M} \Lambda^{-1}_{E} \right)^{-2} G^{T} Z \bm{b} 
	=
	\sum_{e = 1}^{E}
	\frac{c_{e}^{2}}{d_e^2} \; .
	\label{equ:rewriting_of_bHGGHb}
\end{equation}
Combining~\eqref{equ:rewriting_of_varespilonIplusthings} and~\eqref{equ:rewriting_of_bHGGHb}, and considering that the $d_{e}$'s are deterministic, we can thus rewrite~\eqref{equ:decomposition_of_second_term_of_the_error_for_estimator_A} as
\begin{equation}
	\begin{array}{rl}
		\displaystyle
		\ExpectationOfGiven{\ExpectationOfGiven{\left\| H_{A} Z \bm{b} \right\|^{2}}{\bm{x}}}{\overline{\mathcal{E}}}
		& \leq \displaystyle
		\frac{1}{M^{2}}
		\sum_{e = 1}^{E}
		\frac{ \ExpectationOfGiven{ c_{e}^{2} }{\overline{\mathcal{E}}}}{d_{e}^{2}} \\
		& \leq \displaystyle
		\frac{1}{(1 - \alpha) M^{2}}
		\sum_{e = 1}^{E}
		\frac{ \ExpectationOf{ c_{e}^{2} }}{d_{e}^{2}} ,
	\end{array}
	\label{equ:rewriting_of_decomposition_of_second_term_of_the_error_for_estimator_A_in_expectation}
\end{equation}
where in the last inequality we applied \Lemma~\ref{thm:conditional_expectation_on_mathcal_E}. In view of the definition of the $c_{e}$'s in~\eqref{equ:definition_of_c_e} and the linearity of $\ExpectationOf{\cdot}$, one has
\begin{equation}
	\begin{array}{ll}
		\ExpectationOf{ c_{e}^{2} }
		= & \displaystyle \phantom{+}
		\sum_{m = 1}^{M}
		\ExpectationOf
		{
			\phi_{e}^{2} \left( x_{m} \right)
			f_{b}^{2} \left( x_{m} \right)
		} \\
		& \displaystyle +
		\sum_{m \neq m'}
		\ExpectationOf
		{
			\phi_{e} \left( x_{m} \right)
			\phi_{e} \left( x_{m'} \right)
			f_{b} \left( x_{m} \right) 
			f_{b} \left( x_{m'} \right) 
		} .
	\end{array}
	\label{equ:bound_on_expected_c_e_2}
\end{equation}
As for the first term in the \ac{RHS} of~\eqref{equ:bound_on_expected_c_e_2}, combining~\eqref{equ:coefficients_are_gaussianly_distributed:b} with bound~\eqref{equ:eigenfunctions_are_uniformly_bounded},
one immediately has
\begin{equation}
	\ExpectationOf
	{
		\phi_{e}^{2} \left( x_{m} \right)
		f_{b}^{2} \left( x_{m} \right) 
	}
	\leq
	k \sum_{e = E + 1}^{+ \infty} \lambda_{e} .
	\label{equ:bound_on_expectation_of_phi_e_2_f_b_2}
\end{equation}
As for the second term in the \ac{RHS} of~\eqref{equ:bound_on_expected_c_e_2}, due to the independence of the $\left\{ x_{m} \right\}_{m = 1}^{M}$ we know that
\begin{equation}
	\begin{array}{l}
		\ExpectationOfGiven{\phi_{e} \left( x_{m} \right)
			\phi_{e} \left( x_{m'} \right)
			f_{b} \left( x_{m} \right) 
			f_{b} \left( x_{m'} \right) }{f_{b}}
		= \\ \quad \displaystyle
		\ExpectationOfGiven{\phi_{e} \left( x_{m} \right)
			f_{b} \left( x_{m} \right) }{f_{b}}
		\ExpectationOfGiven{\phi_{e} \left( x_{m'} \right)
			f_{b} \left( x_{m'} \right)}{f_{b}}.
	\end{array}
\end{equation}
Moreover, since $e = 1, \ldots, E$, from the definition of $f_{b}$ in~\eqref{equ:estimand_model} it comes that 
\begin{equation}
	\begin{array}{l}
	\ExpectationOfGiven{ \phi_{e} \left( x_{m} \right) f_{b} \left( x_{m} \right) }{f_{b}} = 0.
	\end{array}
\end{equation}
Combining the two results, one has
\begin{equation}
	\ExpectationOf{ c_{e}^{2} }
	\leq
	k M \sum_{e = E + 1}^{+ \infty} \lambda_{e}
	\quad e = 1, \ldots, E .
	\label{equ:expectation-of-c-e-2-equals-k-m-blabla}
\end{equation}

Finally, combining~\eqref{equ:definition_of_d_e}, \eqref{equ:rewriting_of_decomposition_of_second_term_of_the_error_for_estimator_A_in_expectation}, \eqref{equ:expectation-of-c-e-2-equals-k-m-blabla} and \Lemma~\ref{thm:conditional_expectation_on_mathcal_E} one obtains
\vspace{0.2cm}
\begin{equation}
	\begin{array}{l}
		\displaystyle
		\ExpectationOfGiven{\ExpectationOfGiven{\left\| H_{A} Z \bm{b} \right\|^{2}}{\bm{x}}}{\overline{\mathcal{E}}} \leq \\
		\qquad \leq \displaystyle
		\frac{k M}{1-\alpha}
		\left(
			\sum_{e = 1}^{E}
			\frac{\lambda^2_{e}}{(\varepsilon M \lambda_e + \sigma^{2}_{\nu})^2}
		\right)
		\left(
			\sum_{e = E + 1}^{+\infty} \lambda_{e}
		\right) .
	\end{array}
	\label{equ:second_MSE_term_for_estimator_A}
\end{equation}

\subsection*{Bounding $\ExpectationOfGiven{ \ExpectationOfGiven{ \left\| \bm{a} - H_{A} ( G \bm{a} + \bm{\nu} ) \right\|^{2} }{\bm{x}}}{\overline{\mathcal{E}}}$ in~\eqref{equ:summary_of_MSE_of_estimator_A_conditioned_on_mathcal_E}}
\label{ssec:characterization_of_second_term_of_f_A_conditioned_on_mathcal_E}

To characterize $\left\| \bm{a} - H_{A} ( G \bm{a} + \bm{\nu} ) \right\|^{2}$, note that this term corresponds to the \ac{MSE} of a classical \ac{MAP} estimator for a standard linear and finite-dimensional Gaussian model (where the term $\bm{b}$ is not involved). More precisely, if the measurements models conditional on $\bm{x}$ were
\begin{equation}
\bm{y}= G\bm{a} +\nu, \quad \bm{a} \sim\mathcal{N}(0,\Lambda), \quad v \sim\mathcal{N}(0,\sigma_{\nu}^2)
	\label{StandardModelA}
\end{equation}
with $\bm{a}$ independent of $\nu$, the optimal estimator would indeed be
\begin{equation}
	\widehat{a}_A = H_A \bm{y} = H_A( G \bm{a} + \bm{\nu} ).
	\label{StandardEstA}
\end{equation}
Exploiting standard results on Gaussian estimation, the covariance matrix of the error is
\begin{equation}
	\VarianceOfGiven{ \bm{a} - H_{A} ( G \bm{a} + \bm{\nu} )}{\bm{x} }
	=
	\frac{\sigma^{2}_{\nu}}{M}
	\left( \frac{G^{T} G}{M} + \frac{\sigma^{2}_{\nu}}{M} \Lambda^{-1}_{E} \right)^{-1} .
	\label{StandardVar}
\end{equation}
Applying \Lemma~\ref{thm:conditional_expectation_on_mathcal_E} and using~\eqref{equ:definition_of_the_event_mathcal_E} to bound $H_{A}$ in~\eqref{equ:definition_of_H_A}, from definitions~\eqref{equ:definition_of_d_e} and~\eqref{equ:rewriting_of_varespilonIplusthings} it follows that
\begin{equation} \footnotesize
	\ExpectationOfGiven{ \ExpectationOfGiven{ \left\| \bm{a} - H_{A} ( G \bm{a} + \bm{\nu} ) \right\|^{2} }{\bm{x}}}{\overline{\mathcal{E}}}
	\leq
	\frac{\sigma^{2}_{\nu}}{1-\alpha}
	\left(
		\sum_{e = 1}^{E}
		\frac{\lambda_e}{\varepsilon M \lambda_e + \sigma^{2}_{\nu}}
	\right) .
	\label{equ:first_MSE_term_for_estimator_A}
\end{equation}
Hence, the bound on $\mathrm{Err}_{A}(\bm{x})$ is obtained by combining~\eqref{equ:third_MSE_term_for_estimator_A},~\eqref{equ:second_MSE_term_for_estimator_A} and \eqref{equ:first_MSE_term_for_estimator_A}. \hfill $\blacksquare$

\begin{proof}{equations~\eqref{equ:definition_of_bound_B} and~\eqref{equ:value_of_bound_B}}

Let $\overline{\mathcal{E}}$ be now the event
\begin{equation}
	\overline{\mathcal{E}}
	\DefinedAs
	\left\{
		\lambda_{\textrm{min}} \left( \frac{G^{T} G}{M} \right)
		\geq
		\varepsilon
		\cap
		\lambda_{\textrm{max}} \left( \frac{G^{T} G}{M} \right)
		\leq
		2 - \varepsilon
	\right\} ,
	\label{equ:definition_of_the_event_mathcal_E_for_max}
\end{equation}
and assume that $\varepsilon$, $\alpha$, $M$ and $E$ satisfy~\eqref{equ:condition_on_varepsilon_for_estimator_B}. Substituting $H_{A}$ with $H_{B}$ in the derivation of~\eqref{equ:summary_of_MSE_of_estimator_A_conditioned_on_mathcal_E} one obtains
\begin{equation}
	\begin{array}{ll}
		\ExpectationOfGiven{\mathrm{Err}_{B}(\bm{x})}{\overline{\mathcal{E}}}  \\
		& = 
		\ExpectationOfGiven{ \ExpectationOfGiven{ \left\| \bm{a} - H_{B} ( G \bm{a} + \bm{\nu} ) \right\|^{2} }{\bm{x}}}{\overline{\mathcal{E}}} \\ & \vspace{0.1cm}
		+ \ExpectationOfGiven{ \ExpectationOfGiven{\left\| H_{B} Z \bm{b} \right\|^{2}}{\bm{x}} }{\overline{\mathcal{E}}}\\ &
		+ \ExpectationOf{ \left\| f_{b} \right\|^{2} }
	\end{array}
	\label{equ:summary_of_MSE_of_estimator_B_conditioned_on_mathcal_E}
\end{equation}
We already know that $\ExpectationOf{ \left\| f_{b} \right\|^{2} } = \sum_{e = E + 1}^{+\infty} \lambda_{e} $. Hence, we have to bound the first two terms in the \ac{RHS} of~\eqref{equ:summary_of_MSE_of_estimator_B_conditioned_on_mathcal_E}.

\end{proof}

\subsection*{Bounding $\ExpectationOfGiven{ \ExpectationOfGiven{\left\| H_{B} Z \bm{b} \right\|^{2}}{\bm{x}} }{\overline{\mathcal{E}}}$ in~\eqref{equ:summary_of_MSE_of_estimator_B_conditioned_on_mathcal_E}}
\label{ssec:characterization_of_first_term_of_f_B_conditioned_on_mathcal_E}

From the definition of $H_{B}$ in~\eqref{equ:definition_of_H_B}, one has
\begin{equation}\footnotesize
	\ExpectationOfGiven{ \ExpectationOfGiven{\left\| H_{B} Z \bm{b} \right\|^{2}}{\bm{x}} }{\overline{\mathcal{E}}}
	\leq
	\ExpectationOfGiven
	{
		\left\|
		\left( I_{E} + \frac{\sigma^{2}_{\nu}}{M} \Lambda^{-1}_{E} \right)^{-1}
			\frac{G^{T} Z}{M} \bm{b} \,
		\right\|^{2}
		\hspace{-0.2cm}
	}
	{\overline{\mathcal{E}}} ,
	\label{equ:decomposition_of_second_term_of_the_error_for_estimator_B}
\end{equation}
which corresponds to \eqref{equ:decomposition_of_second_term_of_the_error_for_estimator_A} with $\varepsilon = 1$. Thus, we just need to plug $\varepsilon = 1$ in \eqref{equ:second_MSE_term_for_estimator_A} to obtain the desired result, i.e.,
\begin{equation}
	\begin{array}{l}
		\displaystyle
		\ExpectationOfGiven{ \ExpectationOfGiven{\left\| H_{B} Z \bm{b} \right\|^{2}}{\bm{x}} }{\overline{\mathcal{E}}} \leq \\
		\qquad \leq \displaystyle
		\frac{k M}{1 - \alpha}
		\left(
			\sum_{e = 1}^{E} \frac{\lambda_{e}^{2}}{\left( M \lambda_{e} + \sigma^{2}_{\nu} \right)^{2}}
		\right)
		\left( \sum_{e = E + 1}^{+\infty} \lambda_{e} \right) .
	\end{array}
	\label{equ:second_MSE_term_for_estimator_B}
\end{equation}

\subsection*{Bounding $\ExpectationOfGiven{ \ExpectationOfGiven{ \left\| \bm{a} - H_{B} ( G \bm{a} + \bm{\nu} ) \right\|^{2}}{\bm{x}} }{\overline{\mathcal{E}}}$ in~\eqref{equ:summary_of_MSE_of_estimator_B_conditioned_on_mathcal_E}}
\label{ssec:characterization_of_second_term_of_f_B_conditioned_on_mathcal_E}

It is useful again to reason as if the measurements were generated according to~\eqref{StandardModelA} so that $\bm{y}=G \bm{a} + \bm{\nu}$. Then, let 
$$
\widehat a_B = H_B \bm{y}
$$
and 
$$
\Phi_B= \VarianceOf{\bm{a}-\widehat a_B | \bm{x}}.
$$ 
Recalling \eqref{StandardEstA} and \eqref{StandardVar}, let also
$$
\widehat{a}_A = H_A \bm{y}
$$
and 
$$
\Phi_A= \VarianceOf{\bm{a}-\widehat a_A | \bm{x}}=\frac{\sigma^{2}_{\nu}}{M}
	\left( \frac{G^{T} G}{M} + \frac{\sigma^{2}_{\nu}}{M} \Lambda^{-1}_{E} \right)^{-1}.
$$ 
After some simple calculations one obtains
\begin{equation}
\Phi_B = \Phi_A 
	+  \underbrace{(H_A-H_B)(G\Lambda G^T+\sigma_{\nu}^2 I)(H_A-H_B)^T}_{\widetilde{\Phi}}.
\end{equation}
The bound for $\ExpectationOfGiven{\TraceOf{\Phi_A}}{\overline{\mathcal{E}}}$ was already obtained in \eqref{equ:first_MSE_term_for_estimator_A} so that now we can just focus on bounding $\ExpectationOfGiven{\TraceOf{\widetilde{\Phi}}}{\overline{\mathcal{E}}}$. Define
$$
A=\frac{G^TG}{M}+\frac{\sigma_{\nu}^{2}\Lambda^{-1}}{M}, \quad B=I+\frac{\sigma_{\nu}^{2}\Lambda^{-1}}{M}.
\vspace{0.2cm} 
$$
Then, it follows that
$$
\begin{array}{lll}
	\widetilde{\Phi} \!\!\! &= (A^{-1}-B^{-1}) \left( \frac{G^TG}{M}\Lambda\frac{G^TG}{M} + \frac{G^TG}{M^2}\sigma_{\nu}^2 \right) (A^{-1}-B^{-1}) \\
	&= A^{-1}(\underbrace{B-A}_{\IDefinedAs C})B^{-1} \Big(\frac{G^TG}{M}\Lambda\frac{G^TG}{M} + \frac{G^TG}{M^2}\sigma_{\nu}^2 \Big) \\
	 & \times B^{-1}(B-A)A^{-1} 
\end{array}
$$
so that
\begin{equation}
	\begin{array}{ll}
		\TraceOf{\widetilde{\Phi}} &=
		\TraceOf{B^{-1}CA^{-2}CB^{-1}\Big(\frac{G^TG}{M}\Lambda\frac{G^TG}{M} + \frac{G^TG}{M^2}\sigma_{\nu}^2 \Big)}\\
		&= \TraceOf{\Lambda^{1/2}\frac{G^TG}{M}B^{-1}CA^{-2}CB^{-1} \frac{G^TG}{M}\Lambda^{1/2} }  \\
		&\phantom{=} + \TraceOf{\sigma_{\nu}^2 A^{-1}CB^{-1} \frac{G^TG}{M^2} B^{-1}CA^{-1}}.
\end{array}
\end{equation}
Since
$$
	\varepsilon I \leq \frac{G^TG}{M}\leq (2-\varepsilon) I, \ \ \Lambda\leq\lambda_1I
$$
we obtain
$$
	\begin{array}{lll}
		A &\geq & \left(\varepsilon + \frac{\lambda_1^{-1} \sigma_{\nu}^2}{M} \right)I, \\
		B &\geq & \left(1 + \frac{\lambda_1^{-1} \sigma_{\nu}^2}{M} \right)I \geq  \left(\varepsilon + \frac{\lambda_1^{-1} \sigma_{\nu}^2}{M} \right)I,\\
		C^2 & = & \left(I-\frac{G^TG}{M} \right)^2\leq (1-\varepsilon)^2 I.
	\end{array}
$$
Exploiting such inequalities and \Lemma~\ref{thm:conditional_expectation_on_mathcal_E} we obtain    
$$
\begin{array}{l}
	\ExpectationOfGiven{\TraceOf{\widetilde{\Phi}}}{\overline{\mathcal{E}}}
	\leq \\
	\displaystyle
	\qquad 
	\left(\varepsilon + \frac{\lambda_1^{-1} \sigma_{\nu}^2}{M} \right)^{-4}
	(1-\varepsilon)^2
	\frac{(2-\varepsilon)^2}{1-\alpha}
	\left(\sigma_{\nu}^2\frac{E}{M} + \sum_{e=1}^E \lambda_e \right)
\end{array}
$$
which, combined with \eqref{equ:first_MSE_term_for_estimator_A}, provides the desired result, also leading to the overall bound for $\mathrm{Err}_{B}(\bm{x})$.

\subsection{Proof of Theorem \ref{thm:f_A_and_f_B_are_asymptotically_efficient}} {}

For both the cases the proof is obtained using Lemma \ref{Lemmaq}. In particular, for both the estimators the lower bound is 
$$
q=\sum_{e = E + 1}^{+\infty} \lambda_{e}
$$
according to Theorem \ref{thm:lower_bound_for_generic_estimator_of_f}. Then, from the expressions of the bounds \eqref{equ:value_of_bound_A} and \eqref{equ:value_of_bound_B}, it is immediate to verify that, for fixed $E$ and $\alpha$, they are monotonically decreasing in $M$. Now, for any $0<\delta<1$, let $\alpha=\delta$ and $\varepsilon$ such that $\kappa<\frac{\delta}{2}\frac{4\delta}{1-\alpha}\sum_{e=1}^\infty \lambda_e$. Then, there exists $M_0$ such that for $M>M_0$ conditions \eqref{equ:condition_on_varepsilon_for_estimator_A} and \eqref{equ:condition_on_varepsilon_for_estimator_B} are satisfied and 
$$
\overline{\mathrm{Err}}_{A} \leq q + \delta \quad \text{with probability} \quad 1 - \delta
$$ 
and
$$
\overline{\mathrm{Err}}_{B} \leq q + \delta \quad \text{with probability} \quad 1 - \delta.
$$
As anticipated, the use of Lemma \ref{Lemmaq} then concludes the proof.

\subsection{Proof of Theorem \ref{thm:f_A_consistent}} 

For what regards \eqref{equ:f_A_is_consistent}, it is sufficient to recall that, for finite $M$, as $E \rightarrow +\infty$ the estimator $\widehat{f}_{A}$ coincides with the \ac{MAP} estimator which is consistent, see \cite{pillonetto_bell__2007__bayes_and_empirical_bayes_semi-blind_deconvolution_using_eigenfunctions_of_a_prior_covariance}[Appendix] for details.

\subsection{Proof of Theorem \ref{thm:f_A_and_f_B_are_consistent}} 

The convergence \eqref{equ:f_B_is_consistent} related to $\widehat{f}_{B}$ is more delicate and we will exploit bound~\eqref{equ:definition_of_bound_B}. The rationale is to establish conditions on the convergence of $E,M$ to infinity to make both the confidence level $\alpha$ and the bound~\eqref{equ:value_of_bound_B} tend to zero. As for $\alpha$, we can make it vanish to zero by setting $\alpha = \frac{1}{E^{1-\delta}}$ with $\delta\in(0,1)$ . As for the bound~\eqref{equ:value_of_bound_B}, its first components naturally vanish with $E \rightarrow +\infty$, while the last two do not. In particular, one needs $\kappa$ in~\eqref{equ:definition_of_kappa} to go to zero and this requires $\varepsilon \rightarrow 1$. Hence, we set $\varepsilon$ such that $\frac{(1-\varepsilon)^2}{2k} = \frac{1}{M^{1-\delta}}$. However, the convergence of $\alpha$ and \eqref{equ:value_of_bound_B} is not enough, since condition~\eqref{equ:condition_on_varepsilon_for_estimator_B} must be always satisfied. This condition is indeed verified since
$$
	\frac{Ek}{M}\log\frac{E}{\alpha}
	=
	\frac{Ek}{M}\log E^\delta\leq \frac{kM^\delta}{M}
	=
	\frac{(1-\varepsilon)^2}{2}\leq 1-\varepsilon+\varepsilon\log\varepsilon.
	\qed
$$

\addcontentsline	{toc}{section}{References}

\end{document}